\documentclass[prx,aps,amsfonts,eqsecnum,superscriptaddress,nofootinbib,longbibliography,notitlepage]{revtex4-1}

\usepackage{cmap}
\usepackage{graphicx}
\usepackage{xcolor}
\usepackage{subcaption}
\usepackage{enumerate}
\usepackage{rotating}
\usepackage{amsmath, amssymb, graphics, amsthm, isomath}
\usepackage{bbm}
\usepackage{physics}
\usepackage{verbatim}
\usepackage{tikz}
\usepackage{tikz-cd}
\usepackage{circuitikz}
\usetikzlibrary{patterns}
\usetikzlibrary{through,hobby}
\usetikzlibrary{decorations.markings}
\usetikzlibrary{fadings,shadings}
\usetikzlibrary{plotmarks}
\usetikzlibrary{positioning}
\usetikzlibrary{decorations,arrows}
\usetikzlibrary{decorations.pathreplacing}
\usetikzlibrary{chains, scopes, positioning, backgrounds, shapes, fit, shadows, calc, arrows.meta, decorations.pathreplacing}
\usepackage{blkarray}
\usepackage{mathrsfs}
\usepackage{algorithm}
\usepackage{algpseudocode}

\usepackage{xcolor}
\definecolor{mycitecolor}{rgb}{0.0, 0.45, 0.85}   

\usepackage{simpler-wick}

\newcommand{\PSL}{\operatorname{PSL}}

\newcommand{\U}{\operatorname{U}}

\newcommand{\diaggroup}{G^n_{\textnormal{diag}}}
\newcommand{\diagX}{H}

\newcommand{\tinyhub}{%
\begin{tikzpicture}[baseline=-0.6ex, line cap=round, line join=round, x=1ex, y=1ex]
  \draw[thick] (-3,0)--(3,0);
  \draw[thick] (0,-3)--(0,3);
  \draw[thick] (-2,-2)--(2,2);
  \node[circle,draw,fill=black,minimum size=2.5ex,inner sep=0pt] at (-1.25,0) {};
\end{tikzpicture}%
}

\newcommand{\tinytwoleg}{%
\begin{tikzpicture}[baseline=-0.6ex, line cap=round, line join=round, x=1ex, y=1ex]
  \draw[thick] (-3,0)--(3,0);
  \draw[thick] (0,0)--(0,3);
  \node[circle,draw,fill=white,minimum size=2.5ex,inner sep=0pt] at (-1.25,0) {};
\end{tikzpicture}%
}

\newcommand{\Xrighticon}{%
\begin{tikzpicture}[baseline=-0.6ex, x=1ex, y=1ex, line cap=round, line join=round]
   \tikzstyle{sergio}=[rectangle,draw=none]
\draw[thick] (1,0) -- (1,-4);
 \begin{scope}
    \draw[thick, fill=white] (-2,-2) -- (0,2) -- (4,2) -- (2,-2) -- cycle;
  \end{scope}
  \draw[thick] (1,0) -- (1,4);
 \path (1,0) node [style=sergio]{\footnotesize$\overrightarrow{X}_g$};
\end{tikzpicture}%
}

\newcommand{\Xlefticon}{%
\begin{tikzpicture}[baseline=-0.6ex, x=1ex, y=1ex, line cap=round, line join=round]
  \tikzstyle{sergio}=[rectangle,draw=none]
\draw[thick] (1,0) -- (1,-4);
 \begin{scope}
    \draw[thick, fill=white] (-2,-2) -- (0,2) -- (4,2) -- (2,-2) -- cycle;
  \end{scope}
  \draw[thick] (1,0) -- (1,4);
 \path (1,0) node [style=sergio]{\footnotesize$\overleftarrow{X}_g$};
\end{tikzpicture}%
}

\newcommand{\Xstabilizer}{%
\begin{tikzpicture}[baseline=-0.6ex, x=1ex, y=1ex, line cap=round, line join=round]
  \tikzstyle{sergio}=[rectangle,draw=none]
\draw[thick] (1,0) -- (1,-4);
\draw[thick] (1,0) -- (1,-4);
 \begin{scope}
    \draw[thick, fill=white] (-2,-2) -- (0,2) -- (4,2) -- (2,-2) -- cycle;
  \end{scope}
  \draw[thick] (1,0) -- (1,4);
 \path (1,0) node [style=sergio]{\footnotesize$\overleftarrow{X}_g$};
 \begin{scope}[xshift=30]
 \tikzstyle{sergio}=[rectangle,draw=none]
\draw[thick] (1,0) -- (1,-4);
\draw[thick] (1,0) -- (1,-4);
 \begin{scope}
    \draw[thick, fill=white] (-2,-2) -- (0,2) -- (4,2) -- (2,-2) -- cycle;
  \end{scope}
  \draw[thick] (1,0) -- (1,4);
 \path (1,0) node [style=sergio]{\footnotesize$\overrightarrow{X}_g$};
 \end{scope}
\begin{scope}[xshift=60]
 \tikzstyle{sergio}=[rectangle,draw=none]
\draw[thick] (1,0) -- (1,-4);
\draw[thick] (1,0) -- (1,-4);
 \begin{scope}
    \draw[thick, fill=white] (-2,-2) -- (0,2) -- (4,2) -- (2,-2) -- cycle;
  \end{scope}
  \draw[thick] (1,0) -- (1,4);
 \path (1,0) node [style=sergio]{\footnotesize$\overrightarrow{X}_g$};
 \end{scope}
 \begin{scope}[xshift=45, yshift=30]
 \tikzstyle{sergio}=[rectangle,draw=none]
\draw[thick] (1,0) -- (1,-4);
\draw[thick] (1,0) -- (1,-4);
 \begin{scope}
    \draw[thick, fill=white] (-2,-2) -- (0,2) -- (4,2) -- (2,-2) -- cycle;
  \end{scope}
  \draw[thick] (1,0) -- (1,4);
 \path (1,0) node [style=sergio]{\footnotesize$\overleftarrow{X}_g$};
 \end{scope}
\begin{scope}[xshift=15, yshift=-30]
 \tikzstyle{sergio}=[rectangle,draw=none]
\draw[thick] (1,0) -- (1,-4);
\draw[thick] (1,0) -- (1,-4);
 \begin{scope}
    \draw[thick, fill=white] (-2,-2) -- (0,2) -- (4,2) -- (2,-2) -- cycle;
  \end{scope}
  \draw[thick] (1,0) -- (1,4);
 \path (1,0) node [style=sergio]{\footnotesize$\overrightarrow{X}_g$};
 \end{scope}
 \end{tikzpicture}%
}

\newcommand{\Zicon}{%
\begin{tikzpicture}[baseline=-0.6ex, x=1ex, y=1ex, line cap=round, line join=round]
   \tikzstyle{sergio}=[rectangle,draw=none]
\draw[thick] (1,0) -- (1,-4);
\draw[thick] (1,0) -- (-3,0);
\draw[thick] (1,0) -- (5,0);
 \begin{scope}
    \draw[thick, fill=white] (-2,-2) -- (0,2) -- (4,2) -- (2,-2) -- cycle;
  \end{scope}
  \draw[thick] (1,0) -- (1,4);
 \path (1,0) node [style=sergio]{\footnotesize$Z_\Gamma$};
\end{tikzpicture}%
}

\newcommand{\Zstabilizer}{%
\begin{tikzpicture}[baseline=-0.6ex, x=1ex, y=1ex, line cap=round, line join=round]
   \tikzstyle{sergio}=[rectangle,draw=none]
\draw[thick] (1,0) -- (1,-4);
\draw[thick] (1,0) -- (-3,0);
\draw[thick] (1,0) -- (5,0);
 \begin{scope}
    \draw[thick, fill=white] (-2,-2) -- (0,2) -- (4,2) -- (2,-2) -- cycle;
  \end{scope}
  \draw[thick] (1,0) -- (1,4);
 \path (1,0) node [style=sergio]{\footnotesize$Z_\Gamma^\dag$};
\begin{scope}[xshift=35]
\draw[thick] (1,0) -- (1,-4);
\draw[thick] (1,0) -- (-3,0);
\draw[thick] (1,0) -- (5,0);
 \begin{scope}
    \draw[thick, fill=white] (-2,-2) -- (0,2) -- (4,2) -- (2,-2) -- cycle;
  \end{scope}
  \draw[thick] (1,0) -- (1,4);
 \path (1,0) node [style=sergio]{\footnotesize$Z_\Gamma$};
\end{scope}
\begin{scope}[xshift=70]
\draw[thick] (1,0) -- (1,-4);
\draw[thick] (1,0) -- (-3,0);
\draw[thick] (1,0) -- (5,0);
 \begin{scope}
    \draw[thick, fill=white] (-2,-2) -- (0,2) -- (4,2) -- (2,-2) -- cycle;
  \end{scope}
  \draw[thick] (1,0) -- (1,4);
 \path (1,0) node [style=sergio]{\footnotesize$Z_\Gamma$};
\end{scope}
\draw[thick] (-3,0) -- (-4.5,-3) -- (20,-3) -- (21.5,0);
\end{tikzpicture}%
}

\usepackage[
  colorlinks=true,
  linkcolor=mycitecolor,
  citecolor=mycitecolor,
  urlcolor=mycitecolor,
  hyperindex=true,
  linktocpage=true
]{hyperref}

\usepackage[capitalise,compress]{cleveref}
\usepackage{tcolorbox}

\newtcolorbox{shadedtheorem}{
  colback=gray!15,    
  colframe=white,     
  boxrule=0pt,        
  arc=0pt,            
  left=5pt,           
  right=5pt,
  top=5pt,
  bottom=5pt
}

\newcommand\Ad{\operatorname{Ad}}
\newcommand\Hom{\operatorname{Hom}}

\newcommand\SU{\operatorname{SU}}
\newcommand\C{\mathbb C}
\newcommand\Z{\mathbb{Z}}
\newcommand\F{\mathbb{F}}
\newcommand{\e}{\mathrm{e}}
\newcommand{\ii}{\mathrm{i}}

\renewcommand\equiv{:=}
\renewcommand\epsilon{\varepsilon}

\newtheorem{thm}{Theorem}
\numberwithin{thm}{section}
\newtheorem{cor}[thm]{Corollary}
\newtheorem{lem}[thm]{Lemma}

\newtheorem{prop}[thm]{Proposition}
\newtheorem{eg}[thm]{Example}

\theoremstyle{definition}

\theoremstyle{definition}
\newtheorem{defn}[thm]{Definition}

\theoremstyle{definition}
\newtheorem{rmk}[thm]{Remark}

\renewcommand{\thesection}{\arabic{section}}
\renewcommand{\thesubsection}{\thesection.\arabic{subsection}}
\renewcommand{\thesubsubsection}{\thesubsection.\arabic{subsubsection}}

\makeatletter
\renewcommand{\p@subsection}{}
\renewcommand{\p@subsubsection}{}
\makeatother

\usepackage{mathtools}
\tikzstyle{densely dashed}= [dash pattern=on 4pt off 3pt]

\usepackage{dsfont}

\begin{document}

\title{Calderbank–Shor–Steane codes on group-valued qudits}

\author{Ben T. McDonough}
\email{ben.mcdonough@colorado.edu}
\affiliation{Department of Physics and Center for Theory of Quantum Matter, University of Colorado, Boulder, CO 80309, USA}

\author{Jian-Hao Zhang}
\email{Sergio.Zhang@colorado.edu}
\affiliation{Department of Physics and Center for Theory of Quantum Matter, University of Colorado, Boulder, CO 80309, USA}

\author{Victor V. Albert}
\affiliation{Joint Center for Quantum Information and Computer Science,
NIST/University of Maryland, College Park, MD 20742, USA}

\author{Andrew Lucas}
\email{andrew.j.lucas@colorado.edu}
\affiliation{Department of Physics and Center for Theory of Quantum Matter, University of Colorado, Boulder, CO 80309, USA}

\begin{abstract}
Calderbank–Shor–Steane (CSS) codes are a versatile quantum
error-correcting family built out of commuting $X$- and $Z$-type checks.  
We introduce CSS-like codes on $G$-valued qudits for any finite group
$G$ that reduce to qubit CSS codes for \(G = \mathbb{Z}_2\) yet generalize the Kitaev quantum double model for general groups.  
The $X$-checks of our group-CSS codes correspond to left and/or right
multiplication by group elements, while $Z$-checks project onto solutions to
group word equations.
We describe quantum-double models on oriented two-dimensional CW complexes (which need not cellulate a manifold) and prove that,
when $G$ is non-Abelian and simple, every $G$-covariant group-CSS code
with suitably upper-bounded $Z$-check weight and lower-bounded $Z$-distance
reduces to a CW quantum double.  
We describe the codespace and logical operators of CW quantum doubles via the same intuition used to obtain logical structure of surface codes.
We obtain distance bounds for codes on non-Abelian simple groups from the graph underlying the CW complex, and construct intrinsically non-Abelian code families with asymptotically optimal rate and distances.
Adding ``ghost vertices" to the CW complex generalizes quantum double models with defects and rough boundary conditions whose logical structure can be understood without reference to non-Abelian anyons or defects.  
Several non-invertible symmetry-protected topological states, both with ordinary and higher-form symmetries, are the unique codewords of simply-connected CW quantum doubles with a single ghost vertex. 
\end{abstract}

\maketitle
\def\thefootnote{**}\footnotetext{B.T.M. and J.-H.Z. contributed equally to this work.} \def\thefootnote{\arabic{footnote}}

\twocolumngrid      
\tableofcontents
\onecolumngrid      

\pagebreak

\section{Introduction}
\setcounter{footnote}{0}
The past three decades have seen the development of a systematic theory of quantum error-correcting codes \cite{albuquerque2009topological, Breuckmann_2016}.  
From its origins to the present day, this theory continues to be intertwined with the theory of topological quantum matter.

Two of the earliest quantum codes---the nine-qubit Shor code \cite{ShorCode} and the Kitaev surface/toric code \cite{Kitaev_1997}---are special cases of the quantum double model \cite{Kitaev_2003, Kitaev_2006, Cui:2019lvb}.
This model, in turn, is based on placing a topological quantum field theory (TQFT)---Dijkgraaf-Witten theory \cite{dijkgraaf1990topological} ---on a triangulation of a two-dimensional manifold (a torus for the toric code, and the projective plane for the Shor code \cite{freedman2002z2}).  
Looking back, we can see this TQFT as effectively the first quantum code.

Dijkgraaf-Witten theory is the gauge theory of flat $G$-valued connections on a manifold
given a finite group $G$.   
The Hamiltonian terms (or ``checks") of its associated code admit very elegant interpretations: ``$Z$-type checks" enforce flatness of the connection, while ``$X$-type checks" enforce gauge-invariance (each codeword is a sum over all gauge-equivalent configurations).   
The topological nature of the TQFT is equivalent to the fact that the codewords of the quantum double are classified by topology (as well as $G$).  

Our modern homological understanding of Calderbank-Shor-Steane (CSS) codes \cite{CSS1, CSS2}, a popular and useful family of stabilizer codes \cite{gottesman1997,calderbank1997quantum} where each check is either an \(X\)-type or a \(Z\)-type Pauli operator, clearly has its roots in this setting (as was already understood by Kitaev \cite{kitaev1997quantum}).  
Indeed, one interprets CSS codes as defined by a certain three-term chain complex---a natural mathematical generalization of 
the assocation of checks to 2D triangulations in quantum double codes.  
Since code parameters on manifolds are severely limited \cite{Bravyi_2009, Bravyi_2010, Delfosse_2013,baspin2022connectivity,baspin2025improved}, it has been important to look for sophisticated chain complexes that cannot be interpreted as cellulations of 2D manifolds in order to obtain more powerful quantum codes, such as asymptotically good codes \cite{panteleev2022asymptotically,leverrier2022quantum,dinur2023good}.  

Many modern codes now have superior code parameters to the older quantum double codes, but they struggle when it comes to implementing non-Clifford logical gates (necessary to perform universal quantum computation) in a fault-tolerant and geometrically local manner.
In order to circumvent stabilizer-code no-go theorems \cite{Bravyi_2013, Pastawski_2015} associated with such unitary gates, one has to either introduce measurements (see, e.g., \cite{gidney2024magic,tiedinknots,williamson2026fast}) or go beyond the stabilizer formalism.
There are currently at least two active efforts aimed at the latter direction.

The first effort substitutes Pauli operators for Clifford operators \cite{cupsgatescohomology,  Lin:2024uhb, tiedinknots, Zhu:2026vec, vedhika_twisted, ni2015non, Yoshida_2016, Webster2022xpstabiliser, Hsin_2025}.
This maintains unitarity of the generalized ``stabilizer'' group and still defines the codespace to be the joint \(+1\)-eigenvalue eigenspace.
However, it comes at the price of losing commutativity outside of the codespace and an unclear syndrome structure in the general case.
Many such constructions are non-geometrically local generalizations of \textit{twisted} quantum doubles \cite{Hu_2013, Hu_2017, CuiTwisted} restricted to Abelian \(G\).

The second effort wraps a stabilizer-like structure around ordinary (untwisted) quantum doubles for non-Abelian \(G\) \cite{albert2021spinchainsdefectsquantum, Fechisin_2025}.
The resulting \(Z\)-type checks become non-unitary matrix-product operators \cite{Fechisin_2025} (cf. \cite{Schuch_2010}), while the non-Clifford \(X\)-type checks remain unitary but no longer commute outside of the codespace.
The syndrome structure is sufficient to decode in principle, albeit syndrome extraction becomes substantially more complex and \(G\)-dependent \cite{verresen2022, bravyi2022, Tantivasadakarn_2024}.
The benefit of all this is a natural way to implement fault-tolerant non-Clifford gates \cite{Mochon_2004}, possible even for the simplest non-Abelian groups such as dihedral groups (as shown recently in Ref.~\cite{warman2026}).
One ``modern way" to understand some of these features is in terms of the quantum double models' \emph{non-invertible symmetries} \cite{schafernameki2023ictplecturesnoninvertiblegeneralized, shao2024whatsundonetasilectures, Luo_2024, thorngren2019fusioncategorysymmetryi, thorngren2021fusioncategorysymmetryii, Choi_2022, Bhardwaj_2022, zhang2023anomalies11dcategoricalsymmetries}: logical operators on the codespace that are most naturally expressed as non-unitary operations.  

One may hope to combine the best of both worlds: the non-invertible symmetries of the quantum double and related models (originally defined with geometric locality) and the good code parameters of quantum expander CSS codes (not local in any finite spatial dimension).   
However, we are not aware of any attempt to define or classify generic \(n\)-body (or \textit{block}) quantum codes with group qudits or cross-reference to existing classical literature.  
This paper is a first attempt.

\section{Summary of results}
\subsection{Group CSS codes and rigidity theorem}
We define and characterize a general class of ``group CSS'' codes that includes quantum doubles and reduces to the usual CSS codes when the group $G=\mathbb{Z}_2$.
The $Z$-type checks of our codes correspond to group word equations that should be satisfied on every codeword, allowing to make use of results in combinatorial group theory \cite{cohen1989combinatorial}.
The $X$-type checks correspond to left and/or right multiplication of qudits by specific motifs of group elements.  
Special cases are summarized as follows:

\begin{enumerate}[1.]
    \item Every group CSS code for Abelian $G$ is equivalent to a group GKP code \cite{Albert_2020}---a distinct generalization of the Abelian setting aimed at the monolithic (single-group-qudit) setting rather than the block (\(n\)-body) setting. 
    In the cyclic case ($G=\mathbb{Z}_m$), the group CSS code is equivalent to an ordinary CSS code defined by a three-term chain complex (Proposition \ref{prop:groupGKPabelian}).
    
    \item For non-cyclic Abelian $G$, we do not find a canonical ``duality" between $X$-checks and $Z$-checks of ordinary CSS codes that is required to be compatible with our construction (Remark~\ref{rmk:XZdual}).
    We can still include such codes in our formalism if we ``unblock'' the group-qudit into a tensor product and embed each factor into a larger cyclic qudit of the same size. 

    \item 
    For non-Abelian simple $G$ and a simple covariance condition, every group CSS code family with $Z$-check weight smaller than an \(\mathrm O(1)\) constant and $Z$-distance greater than the $Z$-check weight is a quantum double model on a \textit{2D CW complex}---a graph with faces glued onto a subset of its loops (Theorem \ref{thm:classify}).  This $\mathrm O(1)$ constant depends on the specific group $G$.
\end{enumerate}

The relationships between our group-CSS construction and other common types of quantum error correction codes are summarized in Fig.~\ref{fig:summary}.

\begin{figure}[t]
    \centering
    \def\svgwidth{0.8\linewidth}
    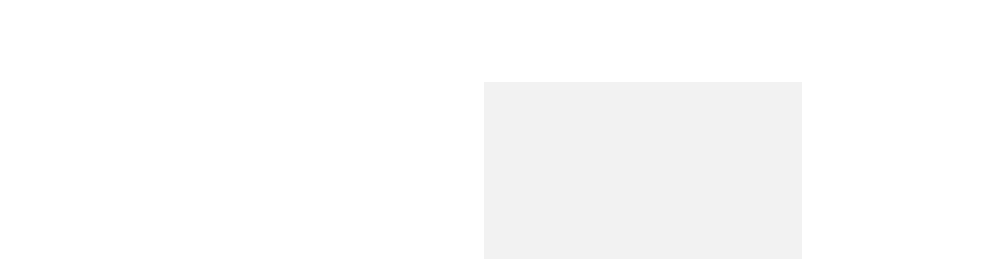
    \caption{
      We illustrate relationships between classical and quantum block codes on \(n\) subsystems defined using a group \(G\) (i.e., group-valued coordinates for the classical case, and group-valued qudits for the quantum case).
      Here, $\leq$ denotes a subgroup and $\subseteq$ denotes a subset. 
      Group CSS codes are specified by subgroups $K_i \leq G$ defining \(Z\)-type constraints along with a subgroup \(H \leq G^{n} \times G^{n}\) defining (left- and right-) \(X\)-type operators. 
      Ordianry (Abelian) CSS codes fall under the group CSS framework if we allow for local blocking of sites and expansion of local dimension. 
      We study a generalization of quantum doubles on 2D manifolds \(M\) to quantum doubles on oriented 2D CW complexes $\Sigma^2$. 
      While quantum doubles for non-Abelian \(G\) utilize only the \(\leq \)2D structures of such complexes, restricting to an Abelian group $G = A$ allows the quantum double to utilize the features of an arbitrary (not necessarily 2D) CW complex $\Sigma$. 
      We identify certain $G$ for which, under mild additional assumptions, all
      $G$-CSS codes are quantum double codes on a 2D CW complex.
      }
    \label{fig:summary}
\end{figure}

\subsection{Code bounds}
The reduction of many group CSS codes to CW quantum doubles per Theorem \ref{thm:classify} demonstrates a surprising rigidity to group CSS codes.
This result implies that there is no purely ``geometric" or ``homological" specification of
stabilizers for a non-Abelian code that can be used to produce good quantum
codes, as was true for Abelian codes. A ``geometry" which relies on a specific group structure would be needed to overcome these bounds.

Nevertheless, since CW complexes go beyond cellulations of 2D manifolds, we can construct constant-rate quantum double codes with provably better distance than any code defined on a 2D manifold, such as hyperbolic surface codes \cite{Breuckmann_2016}.  
We derive general bounds on CW quantum doubles via bounds on their underlying graphs (Theorem~\ref{thm:bounds}) and construct a family of intrinsically non-Abelian group-qudit codes with linear rate, linear \(X\)-distance, and logarithmic \(Z\)-distance (all as functions of \(n\); see Theorem \ref{thm:optimal}).

\subsection{Logical structure}
We show that the codespace and logical operators of CW quantum doubles are natural generalizations of well-known results for surface codes and quantum double models on manifolds.
Namely, we extend (to the CW quantum double setting) the intuition that the logical structure of surface codes comes from non-contractible loops and any defects between boundaries.
This allows us to characterize the logical structure using only the ``topological'' properties of the CW complex and \textit{without} relying on the structure of any non-Abelian excitations \cite{QD_boundary, Cong1, Cong2, Cong3} or defects \cite{Barkeshli_2019}.

Contributions to the logical structure due to non-contractible loops can be extracted from the fundamental group of the CW complex \(\Sigma\). We introduce \textit{ghost vertices} to unify the description of boundaries and holes and obtain topological formulas for the codespace in such models using the complex \(\Sigma^{\prime}\) (Prop.~\ref{prop:codespace_rough}), which is a quotient of the original complex by the set of ghost vertices. Our logical operator formulation places transversal logical gates \cite{Ellison_2026, huang2026hybridlatticesurgerynonclifford}, ribbon operators of quantum doubles, and non-invertible symmetries of other models (Prop.~\ref{prop:Zlogical}) on the same footing.

\subsection{Group-qudit topological orders}
The relaxation of geometric locality of our formalism enables us to incorporate $G$-symmetric ground states of geometrically local lattice models, such as symmetry-protected topological (SPT) states or spontaneous symmetry broken (SSB) states \cite{albert2021spinchainsdefectsquantum,Fechisin_2025}, into our framework.  
Namely, these states can be interpreted as quantum double codes on CW complexes  whose locality is inequivalent to the original model.  
Certain non-trivial facts become readily apparent in this perspective: for example, a global $G^r$ symmetry in any $G$-qudit code is spontaneously broken in the vacuum if $r>1$. 
We also present higher-form non-invertible SPT states, which are also readily explained within the group-CSS framework.

The rest of the paper is organized as follows.  Section \ref{sec:general} defines group CSS codes formally and makes a few simple and general observations about such codes.  Section \ref{sec:QD} analyzes quantum double models on CW complexes as a special case of group CSS codes, well-defined for arbitrary finite $G$.   Section \ref{sec:matter} demonstrates how to interpret SPT/SSB states in this formalism.  Section \ref{sec:rigid} explains the rigidity of group CSS codes for certain groups $G$ and rules out any simple families of expander group CSS codes for generic $G$.  In Section \ref{sec:outlook} we suggest some directions for future research and potential alternatives to the group CSS framework with less rigidity.

\section{Group qudit codes}\label{sec:general}
In this section, we set up a general formalism for studying group qudit codes.  First we will review some historical constructions and observe that they fall short of including the quantum double codes.  We then present ``group CSS" codes to remedy that issue and describe how such codes can be thought of as codes on a coset space.

\subsection{Classical group codes}

A (classical) binary linear code \(C\) is defined as a subset of the group \(G = \mathbb{Z}_2^n\) of \(n\)-bit strings that is closed under binary addition and multiplication \cite{macwilliams1977theory}.
This subset forms a subgroup of \(\mathbb{Z}_2^n\), where the ``group multiplication'' operation is binary addition.
Binary multiplication is unnecessary to define the code since one can only multiply by zero or one, and since the all-zeroes string is always a codeword due to the code being a group.
Binary codes protect against bit-flip errors that flip a few bits at a time.
As long as such bit flips do not bring one codeword closer (in Hamming distance) to another one, they can be corrected.
This structure extends naturally to classical block codes whose coordinates are non-Abelian group elements \cite{slepian2005permutation,gumm1985new,kschischang2002block,forney2002geometrically,loeliger1991signal,forney2002hamming}.
In that case, the message alphabet consists of strings \(\mathbf{g} \in G^n\), where \(G\) is any finite group.
A classical group code is a subgroup of \(G^n\).
We use \(\leq\) to define subgroups, which need not be normal, and \(\subseteq\) to denote subsets.

\begin{defn}[Classical group bit code]\label{def:classicalgroupcode}
    Let $G^n$ denote the space of all possible strings of length $n$, where each coordinate is an element of \(G\).
    The codespace of a classical group code is a subgroup $K \le G^n$.
\end{defn}
 \begin{rmk}\label{rmk:abeliancodelinear}
    This is a very natural definition when \(G\) is Abelian.  
    In that case, $G$ can be defined as a module over $\mathbb{Z}$, and for any $n_1,\ldots, n_p \in \mathbb{Z}$, the solutions to the equation $n_1 g_{i_1} + \cdots +n_p g_{i_p}=0$ form a subgroup of $G^n$.   
    We may therefore define low-density parity-check (LDPC) codes by taking the intersections of many subgroups, each of which comes from enforcing a group word equation as above.  
    The nice qualities of LDPC codes, especially over the fields $\mathbb{Z}_p$, can then be ported into the group code setting.
    For example, using the same ``local'' subgroup for all checks and laying out string coordinates on a regular graph yields a large class of LDPC codes called the Tanner codes \cite{tanner2003recursive}.
\end{rmk}

The group has natural actions on itself by left-multiplication, \(\mathbf{g} \to \mathbf{h}\mathbf{g}\) for any words \(\mathbf{g},\mathbf{h}\in G^n\), and right-multiplication, \(\mathbf{g} \to \mathbf{g}\mathbf{h}^{-1}\).
To define a group code, one first defines a metric on the group analogous to the Hamming weight in the case of $G = \Z_2$, with multiple metrics possible for certain groups \cite{barg1993dawn,barg2010codes,como2005ensembles}.
In the setting of this work, dominant ``bit-flip'' errors are those \(\mathbf{h}\) which have only a few non-identity coordinates, corresponding to a small Hamming weight a.k.a. \textit{support}.

\subsection{Group GKP codes}

In the quantum setting, the code must store superpositions of codewords, and the ``quantum alphabet'' (a.k.a. physical Hilbert space) is the space of complex-valued (wave-)functions on the classical alphabet.
\begin{defn}
Let $G$ be a finite group.  A group qudit code has a physical Hilbert space 
\begin{equation}
\mathbb{C}[G^{n}]:=\mathrm{span}\lbrace|\mathbf{g}\rangle=|g_{1},g_{2},\cdots g_{n}\rangle\rbrace_{\mathbf{g}\in G^{n}}\,.
\end{equation}
\end{defn}

In this setting, one has to protect quantum information not only from the ``bit-flip'' (a.k.a. \(X\)-type) errors inherited from the classical setting, but also from ``phase errors'' (a.k.a. \(Z\)-type) errors which alter the relative phase between superpositions.

Classical binary linear codes can be naturally extended to a simple and tractable class of quantum codes called the CSS codes.
In this case, one binary linear code, \(C_X\) and \(C_Z\), respectively, is picked to protect against each error type.
The two codes have to satisfy the compatibility relation \(C_X^\perp \leq C_Z \leq \mathbb{Z}_2^n\), where \(C_X^\perp\) is the code dual to \(C_X\) \cite{macwilliams1977theory}.
The CSS compatibility relation warrants a natural generalization of Definition \ref{def:classicalgroupcode} to quantum codes, which goes by the name of a group Gottesman-Kitaev-Preskill (group GKP) code \cite{Albert_2020}.

\begin{defn}[Group GKP code]\label{def:GKP}
    The sequence of subgroups $H \le K\le G^n$ defines a group GKP code.  
    The codespace
    is labeled by left cosets: 
    \begin{equation}
      \mathcal{C}=\mathrm{span}\left\lbrace \sum_{h\in H}|hk\rangle\right\rbrace _{k\in K}=\mathbb{C}[H\setminus K].
    \end{equation}
In other words, the codewords (basis states for the encoded subspace) are those cosets of \(H\) in \(G^n\) that are also cosets of \(H\) in \(K\).
\end{defn}

As Remark \ref{rmk:abeliancodelinear} suggests, thinking of classical and group GKP block codes over non-Abelian $G$ is primarily useful when the subgroup $K$ can be described in terms of local constraints, as is possible for Abelian $G$.
Unfortunately, 
the subgroups $K\le G^n$ for certain finite groups $G$ can be extraordinarily restrictive under reasonable assumptions. 
This was already noticed for classical codes \cite{forney2002hamming} and is manifest in our Lemma \ref{lem:equivalentthm}.
Another drawback is that group GKP codes do not capture many interesting group-qudit codes, as we will show precisely in Rmk.~\ref{rmk:weirddimension}.
As stated in the introduction, we would like our family of group CSS codes to include quantum-double codes, which have generalized stabilizers that are different from those of group GKP codes.
Therefore, we look for a different kind of group-based quantum code.

\begin{figure}[t]
    \centering
\begin{tikzpicture}[scale=0.8]

  \draw (-4,-3.0) rectangle (4,2.5);

  \def\r{2}
  \draw (-1.6,0) circle (\r);   
  \draw ( 1.6,0) circle (\r);   
  \draw (0,-0.8)  circle (\r);  

  \node at (-2.2,1) {$Z^{K_1} = 1$};
  \node at ( 2.2,1) {$Z^{K_2} = 1$};
  \node at (0,-2.1) {$Z^{K_3} = 1$};

  \node at (0,-0.1) {$\mathcal{C}_Z$};

  \node[anchor=west] at (-3.7,2.1) {$\mathbf{g} \in G^n$};

\end{tikzpicture}
\caption{
    Generalized \(Z\)-checks of group CSS codes define the admissible group elements \(\mathbf{g}\in G^n\) whose computational basis states \(|\mathbf{g}\rangle\) are used to construct the codewords.
    Here, the square region depicts all such labels, and the three circular regions of the Venn diagram denote the subsets satisfying the constraints \(Z^{K_i} = 1\) for each ``local'' subgroup \(K_{i} \leq G\).
    Their intersection forms the set \(\mathcal{C}_Z\) of admissible basis labels for the code.
    All regions reduce to subgroups of \(G^n\) for Abelian \(G\), but \textit{neither} the three circular regions \textit{nor} their intersection are subgroups for non-Abelian codes.
    }
    \label{fig:z-checks}
\end{figure}
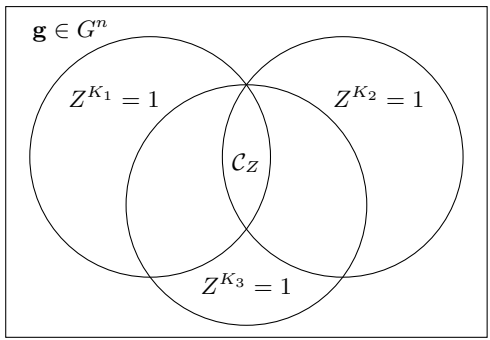

\subsection{Group CSS codes}

Our generalization of CSS codes to ``group CSS" codes is meant to encompass most existing group-valued codes, with the notable exception of twisted quantum doubles \cite{Hu_2013, CuiTwisted}, which will be the subject of a future work.
As with CSS codes, the construction comes with two types of ingredients, one for each generalized check set.
The \(Z\)-type checks are best thought of as a set of ``local'' conditions on allowed group-valued basis states \(|\mathbf{g}\rangle\).

\begin{defn}[Generalized $Z$-checks]\label{def:Zcheck}
    Let $K\le  G$ be a subgroup and $\mathbb{I}(\cdots)$ be the indicator function which is 1 if its argument is true and 0 if false.  The $Z$-checks of a group qudit CSS code are of the form  \begin{equation}
        Z^K_{q_1\cdots q_j}|\mathbf{g}\rangle := \mathbb{I}(g_{q_1}^{a_1}\cdots g_{q_j}^{a_j} \in K)|\mathbf{g}\rangle~, \label{eq:generalZcheck}
    \end{equation}
    where $q_1, \cdots, q_j\in\lbrace 1,\ldots,n\rbrace $ are qudits (possibly repeated) and $a_1,\ldots, a_j \in \lbrace \pm 1\rbrace $ are their \textit{orientations}.
We suppress the orientations in the definition on the left-hand side to avoid clutter.
The $Z$-check \textit{weight} 
is defined as the largest number qudits $j$, counting multiplicities, that show up in any $Z$-check. 
\end{defn}

There is a difference between the weight, as defined above, and the \textit{support} of a \(Z\)-check (i.e., the number of qudits the check acts on nontrivially).
For example, the check that tests whether \(g_1^3 g_2 \in K\) has a support of 2 (i.e., it is a two-qudit check) but has weight 4.
The weight relates the check to the concept of a group word in combinatorial group theory---an important feature of our formulation (see Sec.~\ref{sec:rigid}).
The support is useful for characterizing distance and is upper-bounded by the weight. 

We refer to \(K \leq G\) as a \textit{local subgroup} because it is intended to constrain an \(O(1)\)-fraction of the length-\(n\)  basis-state labels \(\mathbf{g}\), in spirit of the constraints of Abelian LDPC codes (see Remark~\ref{rmk:abeliancodelinear}).
The price we pay is that the set of \(\mathbf{g}\) satisfying a \(Z\)-check's constraint
does not generally form a subgroup of $G^{n}$. 
For instance, taking the trivial local group $K = 1$, the product of two group words $(g_1, \dots, g_p), (h_1, \dots, h_p) \in
G^n$ satisfying $g_1\dots g_p  = h_1\dots h_p = 1$ does not necessarily satisfy $(g_1h_1)\dots (g_ph_p) = 1$.
This property is recovered in the Abelian case, and the resulting subgroups of \(G^n\) can be used instead of the local subgroups \(K \leq G\) to define the \(Z\)-checks (see Appendix \ref{app:abelian}).

The set of \(Z\)-checks, each constructed using a (potentially different) local subgroup \(K\), defines the admissible set \(\mathcal{C}_Z\) of basis states \(|\mathbf g\rangle\) used to construct the code (see Fig.~\ref{fig:z-checks}).
This set is also not a subgroup of \(G^n\) in general, and it is only guaranteed to be a subgroup when \(G\) is Abelian.

Our \(X\)-checks will consist of the conventional group multiplication operators, which permute the group-valued basis states according to the group's multiplication table.
There are two types of \(X\) checks for non-Abelian groups because multiplication from the left and right do not always agree when $G$ is non-Abelian.

\begin{defn}[Generalized $X$-checks]\label{def:Xcheck}
    For a single $G$-valued qudit, define the unitary operators \begin{subequations}
        \begin{align}
            \overrightarrow{X}^h |g\rangle &:= |hg\rangle , \\
            \overleftarrow{X}^h |g\rangle &:= |gh^{-1}\rangle.
        \end{align}
         \end{subequations}
        We consider generalized $X$-type checks to be products of these operators: for $\mathbf{g},\mathbf{h}\in G^n$,  \begin{equation}
            X^{\mathbf{g},\mathbf{h}} := \prod_{j=1}^n \overrightarrow{X}^{g_j}_j\overleftarrow{X}^{h_j}_j. \label{eq:generalXcheck}
        \end{equation}
The $X$-check weight 
is identical to the check support, i.e., it is the largest number of distinct qudits 
that show up in any $X$-check.
\end{defn}

In general, left and right multipliers commute with each other, but not amongst themselves.
\begin{subequations} \label{eq:Xoperatormultiplication} \begin{align}
\overrightarrow{X}^{g}\overrightarrow{X}^{h}\left(\overrightarrow{X}^{g}\right)^{-1}\left(\overrightarrow{X}^{h}\right)^{-1}&=\overrightarrow{X}^{ghg^{-1}h^{-1}},\\\overleftarrow{X}^{g}\overleftarrow{X}^{h}\left(\overleftarrow{X}^{g}\right)^{-1}\left(\overleftarrow{X}^{h}\right)^{-1}&=\overleftarrow{X}^{ghg^{-1}h^{-1}},\\\left[\overrightarrow{X}^{g},\overleftarrow{X}^{h}\right]&=0~.
\end{align}\end{subequations}
We depict the non-commutativity of these operators in the first two lines above using group commutators because it shows how difficult it is to maintain the minimum weight of these checks for block codes made out of non-Abelian groups.
For example, two-qudit checks associated with group elements \((g,g)\) and \((g,h)\) are both weight two, but their group commutator yields a weight-one check associated with \((1,ghg^{-1}h^{-1})\).
Naively picking blocks of \(X\)-checks of some fixed minimal weight generically generates checks of lower weight.  
This observation was also made in the classical community by Forney \cite{forney2002hamming} and plays a central role in our main result, Thm.~\ref{thm:classify}.

Group CSS codes are defined using a set of \(Z\)- and \(X\)-checks, $\mathcal{S}_Z$ and $\mathcal{S}_X$, respectively.
The \(Z\)-checks form a set of constraints on admissible group words that are defined by their respective local subgroups \(K \leq G\), while the unitary \(X\)-checks generate the ``global'' subgroup,
\begin{align}\label{eq:x-check-group}
     \left\langle X^{\mathbf{g},\mathbf{h}}\in\mathcal{S}_{X}\right\rangle & \cong \diagX \leq G^n \times G^n &\text{(\ensuremath{X}-check group)}\,. 
\end{align}

\begin{defn}[Group CSS code]\label{def:CSS}
Let $\mathcal S  = \mathcal S_X \cup \mathcal S_Z$, where $\mathcal S_X$ contains only $X$-type checks (Def.~\ref{def:Xcheck}), and where $\mathcal S_Z$ contains only $Z$-type checks for one or more subgroups \(K \leq G\) (Def.~\ref{def:Zcheck}), and require that every check in $\mathcal S_X$ preserves the mutual \(+1\)-eigenvalue eigenspace of all the checks in $\mathcal S_Z$.
A \textit{group CSS code} is a joint \(+1\)-eigenvalue eigenspace of all elements of $\mathcal S$.
Its codespace is the subspace \begin{equation}
    \mathcal{C} := \lbrace |\psi\rangle \in \mathbb{C}[G^n] : S|\psi\rangle = |\psi\rangle \; \forall S\in \mathcal{S}_X\cup \mathcal{S}_Z\rbrace.
\end{equation}
The code \textit{log-dimension} $k$ (roughly, the number of logical $G$-qudits)\footnote{The log-dimension $k$ is only guaranteed to be an integer if $G=\mathbb{Z}_p^l$, with $p$ a prime number.
For \(\mathbb{Z}_q\)-qudits with composite \(q\) and for non-Abelian groups, the codespace can decompose into a tensor product of logical qudits of various dimensions; see Remark \ref{rmk:weirddimension}.
} is 
\begin{equation}\label{eq:rate_definition}
        k := \log_{|G|}\dim(\mathcal{C}).
\end{equation}
The code \(Z\)-distance and \(X\)-distance are defined analogously to qubit codes.
We say that the $X$-distance $d_X >m$ if for any $X$-type operator $\mathcal{O}_X$ of the form \eqref{eq:generalXcheck} acting on $m$ qudits, the following (Knill-Laflamme \cite{knill1997theory}) condition is obeyed: \begin{equation}\label{eq:KL}
        \langle \psi_\alpha |\mathcal{O}_X|\psi_\beta \rangle = c\cdot \delta_{\alpha\beta},
    \end{equation} 
    where \(|\psi_{\alpha,\beta}\rangle\) are logical states, and where the complex coefficient $c$ depends only on \(\mathcal O_X\).   
    The $Z$-distance $d_Z>m$ if an analogous statement holds for any $Z$-type operator $\mathcal{O}_Z$ of the form \eqref{eq:generalZcheck}.
\end{defn}

Group GKP codes and group CSS codes intersect, but neither is contained in the other.
Group GKP codes are defined using only one ``global'' group \(K\) for the \(Z\)-checks, and only one type of group multiplication operators for the \(X\)-checks.
Given a nested sequence of groups $H\le K\le G^n$, the corresponding group-GKP checks are
     \begin{equation}
\mathcal S_X = \left\lbrace \overrightarrow{X}^{\mathbf{h}}~|~\mathbf{h}\in H\right\rbrace\quad\quad\text{and}\quad\quad\mathcal{S}_{Z}=\left\lbrace \mathbb I (\hat{\vb g} \in K) | \vb g \in G^n\right\rbrace \,.
        \end{equation}
The \(Z\)-check here is of a different type than the ones we introduced in Def.~\ref{def:Zcheck} and merely tests whether a given \(\mathbf{g}\) is in the ``global'' subgroup \(K \leq G^n\), \(\mathbb{I}(\hat{\mathbf{g}}\in K)|\mathbf{g}\rangle = \mathbb{I}(\mathbf{g}\in K) |\mathbf{g}\rangle\).
Setting \(n=1\) guarantees this constraint can be expressed in terms of group CSS checks, yielding codes that are both group GKP and group CSS.

On one hand, there exist group GKP codes that are not group CSS codes while preserving the number of physical qudits because it is not always possible to express the ``global'' \(Z\)-check group as a set of local-subgroup \(Z\)-checks 
On the other hand, there exist group CSS codes that are not group GKP codes since both left- and right-multiplication operators are allowed to form the \(X\)-check group \(\diagX\).
Because of this constraint, generic $X$-logical operators of a group GKP code are permutations of cosets $H\setminus K$, and as such, they are naturally captured by unitary operators corresponding to right multiplication \cite{Albert_2020}.  
In contrast, group CSS codes such as quantum doubles can have non-invertible symmetries (Remark \ref{rmk:noninvertible}) when $G$ is non-Abelian, so they cannot be group GKP codes.  
Group CSS codes can also have different logical structure
(see Remark \ref{rmk:weirddimension}).

Specializing further, we see how qubit CSS codes fit into the picture above. 
Here, we can take the group $G=\mathbb{Z}_2=\lbrace 0,1\rbrace$.  The Pauli matrix $X$ corresponds to $\overrightarrow{X}^1 = \overleftarrow{X}^1 $, so $X$-type checks such as $X_1X_2X_3X_4$ are a simple and special case of Definition \ref{def:Xcheck}.   As for $Z$-type checks, notice that we can re-write \begin{equation}
    \frac{1+Z_1Z_2Z_3Z_4}{2} |g_1g_2g_3g_4\rangle = \mathbb{I}(g_1+g_2+g_3+g_4 = 0)|g_1g_2g_3g_4\rangle. 
\end{equation}
Here we have written + for the group operation to avoid any potential confusion,
since $\mathbb{Z}_2$ is a ring as well as a group.  In other words, the
$Z$-checks of all CSS codes can precisely be understood as fixing the group word
$|\mathbf{g}\rangle$ to lie in a specific subgroup of $\mathbb{Z}_2^n$, just as
in Definition \ref{def:Zcheck} (see also  Remark \ref{rmk:abeliancodelinear}).  
More generally, group CSS codes over cyclic $G$ are equivalent to ordinary CSS codes over the same group, and Abelian CSS codes are equivalent to group CSS codes up to a local isometry (see Appendix \ref{app:abelian}).

\subsection{The double coset formalism}
\label{sec:doublecoset}
The codespace of CSS codes can be understood in terms of equivalence classes of qubit basis states that satisfy \(Z\)-check constraints and that are related to each other by $X$-checks.
Similarly, group-GKP codewords are uniform superpositions of elements in \(K\) that are related to each other by elements of \(H\).
The group CSS construction generalizes this convenient property, allowing for a similar analysis of the codespace and resulting bounds on the weight of logical operators. 
In the group GKP case, code states are conveniently labeled by
  cosets in $H \backslash K$, or equivalently, by a subset of cosets in the right-coset space
  $H \backslash G$. 
Treating the underlying group space itself as a coset space, we can express group CSS codewords in terms of equivalence classes of \(\mathbf g\) that are related by \(X\)-checks and that satisfy the constraints of the \(Z\)-checks.
These will form a double coset space, built on top of the underlying physical coset space.
This allows us to treat left and right \(X\)-check multiplication on the same footing.

\begin{defn}[Doubled Hilbert space]\label{def:doublecoset}
  We embed $\mathbb{C}[G^n]$ into a subspace of $\mathbb{C}[G^n]\otimes \mathbb{C}[G^n]$ as  follows: 
  \begin{equation}
|\mathbf{g}\rangle\quad\rightarrow\quad|\mathbf{g}\diaggroup\rangle\equiv \sum_{\mathbf{h}\in G^{n}}|\mathbf{g}\mathbf{h},\mathbf{h}\rangle\,,
    \label{eq:doublecosetstates}
  \end{equation}
where the \textit{diagonal subgroup} is
  $    \diaggroup \equiv \langle  (\vb g, \vb g) \rangle_{\vb g
        \in G^n}$.
The physical Hilbert space is related by projection onto the coset
space 
\begin{equation}
G^n \times G^n
    \to (G^n \times G^n)/ \diaggroup \cong G^n~,
\end{equation}
(note that they are isomorphic as sets, not groups) and it is spanned by the coset states \(|\mathbf{g}\diaggroup\rangle\) for all \(\mathbf{g} \in G^n\).

\end{defn}

We can define the $X$-checks [Eq.~\eqref{eq:generalXcheck}] to act on this enlarged space solely from one side, ``turning around'' the left-pointing arrow in the definition of \(\overleftarrow{X}\):
    \begin{equation}\label{eq:doubledXidentification}
        X^{\mathbf{g},\mathbf{l}}|\mathbf{k},\mathbf{h}\rangle=|\mathbf{g}\mathbf{k},\mathbf{l}\mathbf{h}\rangle\,,
    \end{equation}
for any \(\mathbf{g},\mathbf{h},\mathbf{k},\mathbf{l}\in G^n\).
This reduces to the previously defined action of these \(X\)-checks when they act on the coset states,
\begin{equation}
X^{\mathbf{g},\mathbf{l}}|\mathbf{k}\diaggroup\rangle=\sum_{\mathbf{h}\in G^{n}}|\mathbf{g}\mathbf{k}\mathbf{h},\mathbf{l}\mathbf{h}\rangle=\sum_{\mathbf{h}\in G^{n}}|\mathbf{g}\mathbf{k}\mathbf{l}^{-1}\mathbf{h},\mathbf{h}\rangle=|\mathbf{g}\mathbf{k}\mathbf{l}^{-1}\diaggroup\rangle\,.
\end{equation}
Above, the change \(\mathbf{h} \to \mathbf{l}^{-1}\mathbf{h}\) merely re-shuffles the terms in the sum.

The group \(\diagX\) of \(X\)-checks
relates physical basis states to each other.
The set of all basis states that are related to a fixed basis state \(|\mathbf{g} \diaggroup\rangle\) is called the orbit of \(|\mathbf{g} \diaggroup\rangle\).

\begin{defn}[Orbit of $X$-checks]\label{def:orbit}
    Basis states $|\mathbf{g}\diaggroup\rangle$ and $|\mathbf{g}^\prime\diaggroup\rangle$ are in the same orbit of the $X$-checks if $\sigma|\mathbf{g}\diaggroup\rangle = |\mathbf{g}^\prime\diaggroup\rangle$ for some $\sigma\in \diagX$. 
    This implies an equivalence relation, \(\mathbf{g} \sim \mathbf{g}^\prime\), for any two elements in the same orbit.
\end{defn}

Group CSS codes can then be constructed by first defining an admissible subset of physical basis states \(\mathcal{C}_Z\) (determined by the \(Z\)-checks), partitioning this subset into orbits under the \(X\)-check group, and uniformly superposing each orbit into its corresponding codeword.
Since the physical states themselves form a coset space, the codespace is the \textit{double} coset space \(\diagX \backslash G^{2n} / \diaggroup\).

Admissible physical basis states satisfy the \(Z\)-check conditions, which can be defined on the physical coset space as follows.
Each coset state \(|\mathbf{g} \diaggroup\rangle\) is a sum of elements \(|\mathbf{k} = \mathbf{g}\mathbf{h},\mathbf{l} = \mathbf{h}\rangle\), with the physical basis label obtainable from the relation \(\mathbf{k}\mathbf{l}^{-1}=\mathbf{g}\).
Its coordinates have to satisfy the \(Z\)-check relations in order to make it to the admissible set.
A basis for the codespace can then be written as
\begin{equation}\label{eq:basis-codewords}
    |[\mathbf{g}]\rangle\equiv \sum_{\sigma\in \diagX }\sigma|\mathbf{g}\diaggroup\rangle\qquad\qquad\text{(group CSS codewords)}
\end{equation}
for reach representative $\mathbf{g}$ of orbits 
\([\mathbf{g}] \in H \backslash \mathcal{C}_Z\).
Presenting this rigorously, we have the following proposition.

\begin{prop}
  \label{prop:codespace}
    The codespace $\mathcal{C}$ of a group qudit CSS code is
    $\mathcal{C}=\mathbb{C}[\mathcal{K}]$, where $\mathcal{K} \subseteq  \diagX  \backslash G^{2n}/   G^n_{\mathrm{diag}}$ 
    is the subset of double cosets such that any representative $(\mathbf{k},\mathbf{l})$ satisfies 
    $\prod_{i=1}^{j}(\mathbf{k}\mathbf{l}^{-1})_{q_{i}}^{a_{i}}\in K$ for each $Z$-check $Z^K_{q_1,
      \dots, q_j}$.
\end{prop}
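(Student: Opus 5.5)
Since every $Z$-check is diagonal and every $X$-check is a permutation of the computational basis, we can compute the codespace in two stages: first impose the $Z$-checks, then impose invariance under the $X$-check group $\diagX$.

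\emph{Stage one: the $Z$-checks.} Each $Z$-check $Z^K_{q_1\cdots q_j}$ is a diagonal projector in the basis $\{|\mathbf{g}\rangle\}$. Hence the joint $+1$ eigenspace of $\mathcal{S}_Z$ is $\mathbb{C}[\mathcal{C}_Z]$, where $\mathcal{C}_Z\subseteq G^n$ is the set of $\mathbf{g}$ satisfying $\prod_i g_{q_i}^{a_i}\in K$ for every check. Passing to the doubled picture of Definition~\ref{def:doublecoset}, the basis state $|\mathbf{g}\rangle$ corresponds to $|\mathbf{g}\diaggroup\rangle$. Every representative $(\mathbf{k},\mathbf{l})=(\mathbf{g}\mathbf{h},\mathbf{h})$ of this coset has $\mathbf{k}\mathbf{l}^{-1}=\mathbf{g}$. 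So the $Z$-conditions, written in terms of $\mathbf{k}\mathbf{l}^{-1}$, are well defined on left cosets of $\diaggroup$.

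\emph{Stage two: the $X$-checks.} By hypothesis, every $X$-check preserves $\mathbb{C}[\mathcal{C}_Z]$. Since $X$-checks permute basis states, this means $\diagX$ acts on $\mathcal{C}_Z$ by permutations, $\mathbf{g}\mapsto \mathbf{x}\mathbf{g}\mathbf{y}^{-1}$ for $(\mathbf{x},\mathbf{y})\in\diagX$. In the doubled space this action is left multiplication on $G^{2n}/\diaggroup$, by Eq.~\eqref{eq:doubledXidentification} and the computation that follows it. A vector $\sum_{\mathbf{g}\in\mathcal{C}_Z}c_{\mathbf{g}}|\mathbf{g}\rangle$ is fixed by all generators of $\mathcal{S}_X$ exactly when it is fixed by all of $\diagX$. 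That happens precisely when $c$ is constant on $\diagX$-orbits.

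\emph{Assembling the basis.} The orbit sums $|[\mathbf{g}]\rangle$ of Eq.~\eqref{eq:basis-codewords} are therefore a basis of $\mathcal{C}$. They are nonzero and pairwise orthogonal because distinct orbits are disjoint subsets of the basis. Orbits of $\diagX$ acting on left cosets $G^{2n}/\diaggroup$ are exactly the double cosets $\diagX\backslash G^{2n}/\diaggroup$. The admissible ones form a subset $\mathcal{K}$, which gives $\mathcal{C}\cong\mathbb{C}[\mathcal{K}]$.

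\emph{Main obstacle: representative independence.} We must show that the $Z$-condition holds for \emph{any} representative of an admissible double coset, not just one. Changing the $\diaggroup$ representative leaves $\mathbf{k}\mathbf{l}^{-1}$ fixed. Changing the $\diagX$ representative keeps us inside $\mathcal{C}_Z$, because $\diagX$ preserves $\mathcal{C}_Z$. So membership in $\mathcal{K}$ is well defined, and a double coset is either entirely admissible or entirely inadmissible.
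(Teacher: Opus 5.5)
Your proof is correct and follows essentially the paper's argument: $X$-invariance forces the coefficients to be constant on $\diagX$-orbits, and the hypothesis that the $X$-checks preserve $\mathcal{C}_Z$ then shows that the orbit sums over admissible double cosets span $\mathcal{C}$. The differences are cosmetic: you impose the $Z$-checks before $X$-invariance, whereas the paper first describes the $X$-invariant space and then intersects it with the $Z$-admissible states, and you spell out why the $Z$-condition does not depend on the chosen representative, which the paper leaves implicit.
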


\begin{proof}
    This result will follow from finding a basis for
    $\mathcal C_X$ (the mutual $+1$-eigenvalue eigenspace of the operators in
    $\mathcal S_X$) which is labeled by
    elements of the double coset space $ \diagX \setminus G^{2n}/ \diaggroup$.
    Suppose that we have a generic state
    \begin{equation}
    |\psi\rangle=\sum_{\mathbf{g}\in G^{n}}\psi_{\mathbf{g}}|\mathbf{g}\diaggroup\rangle\in\mathcal{C}_{X}~,
    \end{equation}
    where $\psi_{\mathbf{g}}$ are some complex coefficients.
    Any $X$-check $\sigma \in H$ merely permutes the basis states.
    Since this state is stabilized by the \(X\)-checks, 
    the coefficients corresponding to the orbit of \(\mathbf{g}\) are equal:
    \begin{equation}
    0=\bra{\vb g\diaggroup}\sigma-1\ket{\psi}=\psi_{\sigma^{-1}(\vb g)}-\psi_{\vb g},
    \end{equation}
    i.e., $\psi_{\vb g} = \psi_{\vb h}$ if $\vb g \sim \vb h$. 
    A basis for this space is the set of indicator functions
    on double-cosets in $ \diagX  \backslash G^{2n} / \diaggroup$. Since the $X$-checks preserve the codespace of the $Z$-checks $\mathcal C_Z$ (Def.~\ref{def:CSS}), the orbit states [Eq.~\eqref{eq:basis-codewords}] span the codespace $\mathcal C$ as claimed.
\end{proof}

\section{Quantum doubles on 2D CW complexes} \label{sec:QD}
Quantum double (QD) models are largely studied on cellulations of 2D orientable surfaces in the many-body and quantum-information communities.
Even their generalizations, including 3D quantum triple models~\cite{Moradi_2015, Delcamp_2017} and higher-dimensional non-twisted Dijkgraaf-Witten theories~\cite{dijkgraaf1990topological}, utilize only the 2D ``skeletons'' of their underlying higher-dimensional manifolds.
However, it has long been known~\cite{huebschmann1999extended,oeckl2003generalized} that quantum doubles can, in principle, be constructed on a more general class of non-manifolds called 2D CW complexes.
Such complexes go beyond 2D manifolds by allowing each edge to border arbitrarily many faces, thereby encompassing many interesting non-planar structures such as presentation complexes (see Fig.~\ref{fig:cw_example}).
Despite this greater generality, a thorough characterization of quantum-double properties (ground-state degeneracy, logical operators, boundary effects, etc.) on general \textit{non-manifold} complexes appears missing.

CW complexes can be understood as graphs with 2D plaquettes continuously glued to loops in the graph. 
While Abelian quantum doubles can readily be associated to higher-degree
complexes---as in the $(2,2)$ 4d toric code \cite{4dtoriccode}---the non-Abelian nature of the gauge group
$G$ forbids a similar association between higher-dimensional cells and the
checks of a non-Abelian double. 
We will see this in two ways. First, for a generic group $G$, every qudit
can only be involved in two $X$-checks---one that acts by left multiplication
and one that acts by right multiplication---just as an edge is connected to at
most two vertices.
Next, while the fundamental
group $\pi_1(X)$ of a topological space $X$ can be non-Abelian, the higher
homotopy groups $\pi_{k}(X)$ for $k > 1$ are always Abelian, so we cannot associate a non-Abelian ``holonomy'' to
higher-dimensional cells in the same manner.\footnote{The framework of
  higher-form gauge theory provides a \emph{partial} way around this---c.f. Refs.~\cite{Bullivant_2017, Pfeiffer_2003, Baez_2010}. We will comment on this further in the outlook.} 
  We
make this correspondence precise with our main classification result in
Thm.~\ref{thm:classify}, where we show that, without leveraging the group structure of $G$, such as its normal subgroups, all group-CSS codes on $G$ (under modest assumptions) are
quantum doubles on a 2D CW complex.

In this section, we illustrate the association of a 2D CW complex to a quantum
double. We show that
code parameters can be improved over standard quantum doubles by looking for
codes on complexes which are not manifolds. 
We introduce a new
treatment of boundaries and holes through an object called a ``ghost
vertex'' (following the terminology of Ref.~\cite{CuiTwisted}), which 
provides a conceptually simple understanding of the code space and the CW
analogues of the logical ``ribbon'' operators.
In particular, the codespace of a quantum double will be shown to  depend only on the topology
of the complex, straightforwardly generalizing the case for a 2-manifold \cite{Cui:2019lvb, cuinotes}, including those with boundary \cite{QD_boundary, Cong1, Cong2, Cong3}.  
The picture of ghost vertices is, surprisingly, extremely helpful for thinking about the connections between phases of matter and quantum double codes (see Section~\ref{sec:matter}).

\subsection{Construction}
\label{sec:construction}
A 2D CW complex is not locally 2D in the same sense as a manifold; rather, it is
an arbitrary graph with 2D faces, or plaquettes, glued continuously to loops in
the graph. From a physicist's perspective, we can intuitively think that a 2D CW complex is
more general than a 2D manifold in a key way: it can have 
junctions (see the left panel of Fig.~\ref{fig:cw_example} for an example) where the complex is not locally Euclidean. This is made precise with the following definition.

\begin{defn}[CW complex \cite{hatcher}]
   Define a $k$-cell $e^k$ to be a topological space homeomorphic to the open
   $k$-disk (a.k.a \(k\)-ball). A 2-dimensional CW complex $X$ is the union of spaces $X_0 \subset X_1 \subset X_2$
, where $X_k = X_{k-1} \cup \{e^k_\alpha\} / \sim$, the attaching map $\varphi_{\alpha}^k:\partial e^k_\alpha \to X_{k-1}$ is continuous, and we define the equivalence relation $x \sim \varphi_\alpha^k(x)$ for each $x\in X_{k-1}$. 
Choose an orientation for each cell; this choice is arbitrary and will not meaningfully affect our definitions.
\end{defn}

The CW construction is a straightforward extension of the standard construction due to Ref.~\cite{Kitaev_2003}. 
We utilize the double-coset formalism from Sec.~\ref{sec:doublecoset} to define these codes.
In that formalism, we associate the \(X\)-check \(X^{\mathbf{k},\mathbf{h}}=\overrightarrow{X}^{\mathbf{k}}\overleftarrow{X}^{\mathbf{h}}\) with a \(2n\)-dimensional vector of group elements 
\(\mathbf g = (\mathbf{k},\mathbf{h}) \in G^{2n}\) such that \(g_{i} = k_i\) and \(g_{i+n} = h_i\) for \(i \in \{1,\cdots ,n\}\).
That way, each group-qudit corresponds to two coordinates of such a vector, one for each type of multiplier.

\begin{defn}[Smooth quantum double code]\label{def:generalizedQDsmooth}
Consider a group CSS code where the $X$-checks take the form
\begin{equation}
X_{B_\alpha}^g \equiv \prod_{q \in B_\alpha}X^g_q
\end{equation}
in the double coset picture (Def.~\ref{def:doublecoset}) for all $g \in G$ and for subsets $B_\alpha$ which partition $\{1, \cdots, 2n\}$. 
Further suppose the $Z$-checks take the form $Z^{\{1\}}_{q_0, \dots, q_k}$, where $q_{i}, q_{i+1 \ (\text{mod } k)}$ are both acted upon by some $X$-check $X^g_{B_{\alpha_i}}$. The orientation $a_i$ (Def.~\ref{def:Zcheck}) is $-1$ if $q_{i} \in B_{\alpha_i}$ and $+1$ if $q_{i}+n \in B_{\alpha_i}$.
\end{defn}

The motivation for the construction above is that it constitutes---in a way we will later make precise---the most general relationships such that the corresponding double-like ground space projectors
\begin{align}\label{eq:QDstabilizers}
A_{B_{\alpha}} &\equiv \frac{1}{|G|}\sum_{g \in G}X^g_{B_\alpha} & B_{q_0, \dots, q_k} &\equiv Z^{\{1\}}_{q_0, \dots, q_k}
\end{align}
commute when $G$ is non-Abelian. The first condition in Def.~\ref{def:generalizedQDsmooth} ensures that each qudit belongs to two $X$-checks; one that acts by right-translation, and one that acts by left-translation. When $G$ is non-Abelian, it follows from Eq.~\eqref{eq:Xoperatormultiplication} that this bivalence is the only way to write down a set of commuting projectors $\{A_{\alpha}\}$. 
We will see that this is ultimately the obstruction to finding good codes from such models (Cor.~\ref{cor:dZ_simple_covariant}). The second relationship ensures that the $X$-checks modify the $Z$-checks trivially: If $q_i \in B_\alpha$, then
\begin{equation}
X^h_{B_\alpha}: g_{q_0} \dots g_{q_{i-1}}g_{q_i} \dots g_{q_k} \mapsto g_{q_0} \dots (g_{q_{i-1}}h^{-1})(hg_{q_i}) \dots g_{q_k}.
\end{equation}
This is proven in the next proposition.

\begin{figure}[t]
    \centering
    \def\svgwidth{0.3\linewidth}
    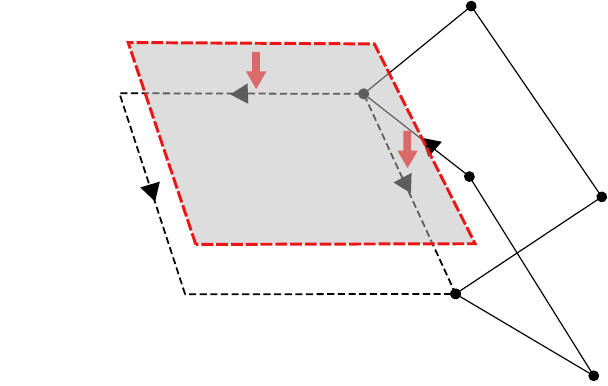
    \hspace{2cm}
    \def\svgwidth{0.3\linewidth}
    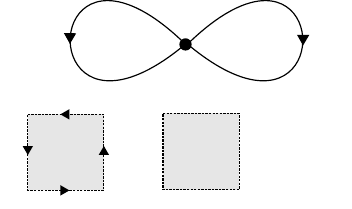
    \caption{Figure illustrating quantum double codes on 2D CW complexes.
      Zero-cells are represented by dots, 1-cells by oriented edges, and 2-cells
      are shown shaded in gray. \textbf{Left}: The gluing map $\varphi^2:\partial e^2 \to \Sigma^1$ is defined by mapping the red dotted line continuously to the black dotted line. Group-valued degrees of freedom are attached to the edges of the graph, with the blue kets denoting the group words associated to the edges in the dotted loop.
    To associate a QD code to this complex, the two-cell is associated to a $Z$-check, which enforces the constraint $ab^{-1}c^{-1}d = 1$.
    An $X$-check is associated to each vertex acting on all adjacent qudits.
    \textbf{Right:} A 2D CW complex $\Sigma$ obtained by gluing 2-cells to the wedge sum
    of two circles labeled by $r, s$ as indicated by the orientations has the
    fundamental group $\pi_1(\Sigma) = \mathrm D_8 = \langle  r,s \ | \ r^4 = 1, s^2=1, srs = r^{-1}
    \rangle$. This is called a \textit{presentation complex}, where the circles in the
    wedge sum are generators and the 2-cells are relations. This illustrates a
    general procedure for engineering a CW complex $\Sigma$ with $\pi_1(\Sigma)
    = G$ for any group $G$, which controls the codespace, as implied by
    Eq.~\eqref{eq:codespace_rough}. These complexes are often not manifolds.
    }
    \label{fig:cw_example}
\end{figure}

\begin{prop}
  \label{prop:QDonCW}
  Every smooth quantum double code defines a 2D CW complex $\Sigma = (V, E, F)$ and vice-versa.
  In other words, the vertex set is $V = \{B_\alpha\}_{\alpha}$ with \(X\)-checks
  \begin{equation}\label{eq:Xg_shorthand}
      X_{v}^{g}:=X_{l_{1},\dots,l_{p},r_{1},\dots,r_{q}}^{g}:=\overrightarrow{X}_{l_{1}}^{g}\cdots\overrightarrow{X}_{l_{p}}^{g}\overleftarrow{X}_{r_{1}}^{g}\cdots\overleftarrow{X}_{r_{q}}^{g}~,
\end{equation}
    the edge set $E$ is the set of physical qubits, and each $Z$-check $Z^{\{1\}}_{q_0, \dots, q_k}$ defines a plaquette $p \in F$ with $\partial p = \{q_0, \dots, q_k\}$. In particular, this association guarantees that the stabilizers in Eq.~\eqref{eq:QDstabilizers} commute.
\end{prop}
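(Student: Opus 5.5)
The plan has two parts: a combinatorial bijection between the data of a smooth quantum double code and a 2D CW complex, and then a short algebraic check that the projectors in Eq.~\eqref{eq:QDstabilizers} commute.

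\textbf{Smooth code $\to$ CW complex.} Start from a smooth quantum double code in the sense of Def.~\ref{def:generalizedQDsmooth}.
\begin{enumerate}[1.]
\item Take the 0-cells to be the blocks $B_\alpha$ of the partition of $\{1,\dots,2n\}$.
\item For each physical qudit $q$, the two coordinates $q$ and $q+n$ lie in exactly one block each, say $B_\alpha \ni q$ and $B_\beta \ni q+n$. Attach a 1-cell $e_q$ oriented from $B_\beta$ to $B_\alpha$; it may be a loop if $\alpha=\beta$. With this orientation, $X^g_{B_\alpha}$ acts on $q$ by $\overrightarrow{X}^g$ (head of the edge) and $X^g_{B_\beta}$ acts by $\overleftarrow{X}^g$ (tail), recovering the shorthand in Eq.~\eqref{eq:Xg_shorthand}.
\item For each $Z$-check $Z^{\{1\}}_{q_0,\dots,q_k}$, the defining condition says consecutive letters $q_i,q_{i+1}$ share a vertex $B_{\alpha_i}$. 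The orientation rule ($a_i=-1$ if $q_i\in B_{\alpha_i}$, $a_i=+1$ if $q_i+n\in B_{\alpha_i}$) says each edge, traversed with sign $a_i$, ends at the vertex where the next edge begins. The word therefore traces a closed edge-path in the 1-skeleton, with closure coming from the cyclic index. Glue a 2-cell along this path via the attaching map $S^1\to\Sigma^1$ that is piecewise linear on the $k+1$ arcs.
\end{enumerate}

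\textbf{CW complex $\to$ smooth code.} Given an oriented 2D CW complex, first subdivide if necessary so that each 2-cell's attaching map is a closed edge-path; this is cellular approximation, taken up to homotopy. Then read the construction backwards: set $B_v = \{q : \mathrm{head}(q)=v\}\cup\{q+n : \mathrm{tail}(q)=v\}$ and read off the plaquette words with their orientations. Since every edge has exactly one head and one tail, the sets $B_v$ partition $\{1,\dots,2n\}$.

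\textbf{Commutation of the projectors.} The $Z$-projectors are diagonal, so they commute with each other.
\begin{itemize}
\item $A_v$ with $A_w$ for $v\neq w$: on a shared edge, one vertex acts by left and the other by right multiplication, and these commute by Eq.~\eqref{eq:Xoperatormultiplication}. On unshared edges they act on disjoint qudits. Note that for a loop edge both multiplications come from the same vertex, which causes no conflict.
\item $A_v$ with $B_p$: it suffices to show that each $X^h_v$ preserves the holonomy $g_{q_0}^{a_0}\cdots g_{q_k}^{a_k}$ up to conjugation, since conjugation preserves membership in $K=\{1\}$. At each visit of the path to $v$, the incoming letter ends at $v$ and gets right-multiplied by $h^{-1}$, and the outgoing letter starts at $v$ and gets left-multiplied by $h$. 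This must be checked for both orientations $a=\pm1$: for example, $(hg)^{-1}=g^{-1}h^{-1}$ behaves like a letter ending at $v$. So the inserted pairs $h^{-1}h$ cancel. For the cyclic wrap-around at the basepoint, the product changes instead by conjugation by $h$. Hence $X^h_v$ commutes with $B_p$, and therefore so does $A_v$.
\end{itemize}

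\textbf{Main obstacle.} The main obstacle is bookkeeping rather than mathematics: matching the sign convention for $a_i$ against the head/tail assignment consistently, and handling degenerate cases. These are loops, repeated edges within one plaquette, and a path returning to the same vertex several times. Each visit must be treated separately, and the sign analysis must be applied to each occurrence.
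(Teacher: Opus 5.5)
Your proposal is correct and follows essentially the same route as the paper's proof: blocks become 0-cells, the coordinates $q$ and $q+n$ fix each 1-cell's endpoints, consecutive letters sharing a block give a closed edge-path to which a 2-cell is glued, and commutation follows from each edge having one left and one right multiplier, cancellation of the inserted $h^{-1}h$ at interior visits, and conjugation at the cyclic wrap-around. The only slip is the sign bookkeeping you anticipated: with your step-2 convention (head $=$ block containing $q$, i.e.\ left multiplication), a letter traversed with sign $a_i$ \emph{starts} at $B_{\alpha_i}$ rather than ending there, so the word traces the boundary loop backwards and your commutation paragraph implicitly uses the opposite convention (tail $\leftrightarrow$ left multiplication, as in the paper's converse construction); reversing all 1-cell orientations in step 2 fixes this harmlessly.
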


\begin{proof}
  For each edge $q_i$, we define a gluing map $\varphi_q^1:\partial q \to V$ via $\varphi_q^1(0) = B_\alpha$ if $q_{i}+n \in B_{\alpha}$ and $\varphi_q^1(1) = B_\beta$ if $q_i \in B_\beta$. Note that a single qudit may be connected twice to the same vertex. For each $Z$-check $Z^{\{1\}}_{q_0, \ldots, q_k}$ associated to a plaquette $p$ with $\partial p =\{q_0, \dots, q_k\}$, since $q_i$ and $q_{i+1 \text{ (mod $k$)}}$ appear in the same $X$-check, the path $\gamma = q_0 \to q_1 \to \ldots \to q_k \to q_0$ defines a continuous loop in the graph $V \cup E$; therefore it corresponds to a continuous map $\varphi_p^2:\partial p \to V \cup E$. This shows that a smooth QD code defines a 2D CW complex. 
  
  These associations are also reversible; given a 2D CW complex $\Sigma = (V, E, F)$ with oriented cells, we can associate a qubit to each edge $q_i \in E$. To each vertex $v$, let $q_i \in B_v^L$ if $q_i$ is an outgoing edge from $v$, and $q_{i+n} \in B_v^R$ if $q_i$ is an incoming edge of $v$. Then $B_{v} \equiv B_v^L \sqcup B_v^R$ defines an $X$-check (in the double coset picture). Lastly, to each plaquette $p$ with $\partial p = \{q_0, \ldots, q_k\}$, we associate a $Z$-check $Z^{\{1\}}_{q_0, \dots, q_k}$, where the $i^\text{th}$ orientation $a_i$ is $\pm 1$ if the orientation given to $q_i$ by the vertices is identical (opposite) to the orientation of $q_i$ in $\partial p$. Since $\partial p$ is a continuous loop, $q_i$ and $q_{i+1}$ always belong to a shared $X$-check $X_{B_v}^g$. If $q_i$ is an outgoing edge of $v$, then $q_i \in B_v^L$, in which case $\partial p$ assigns the opposite orientation to $q_i$ and $a_i = -1$, with the reverse if $q_{i+i} \in B_v^R$. 
  This shows that the constructed code is a smooth QD code (Def.~\ref{def:generalizedQDsmooth}).

  Now we show that the projectors $A_{B_v}, B_{q_0, \ldots, q_k}$ [Eq.~\eqref{eq:QDstabilizers}] commute. The $Z$-checks commute since they are diagonal in the group basis. An edge is connected to at most two vertices, and since the vertices assign the edge opposite orientations, each qudit has exactly one $X$-check with left or right multiplication respectively, which commute. 
  Finally, suppose that a $Z$-check $Z_{q_1, \dots, q_k}$ overlaps
  with an $X$-check $X_{B_v}$.
  Because the edges $q_1, \dots, q_k$ form a closed loop, the checks must overlap
  on pairs of edges $(q_i, q_{i+1\text{ (mod $k$)}})$. Assume first that $i \neq k$ and $q_i,
  q_{i+1}$ are both positively oriented with respect to the loop. Then $A_{B_v}$
  acts on these edges by sums of $\overrightarrow X_{q_i}^g \overleftarrow
  X_{q_{i+1}}^g \ket{q_i,q_{i+1}} = \ket{q_i g,g^{-1}q_{i+1}}$,
  but this leaves the product $q_1\dots q_k$ invariant. Since changing the orientation
  of either edge changes the orientation with respect to $\partial p$,
  this argument holds regardless of the orientations of the edges. This
  leaves the case where $i = k$. In this case, $\overleftarrow X_{q_i}^g
  \overrightarrow X_{q_{i+1}}^g\ket{q_1, \dots, q_{k}} = \ket{gq_1, \dots,
    q_{k}g^{-1}}$, which changes the product from $q_1 \dots q_k \mapsto gq_1
  \dots q_kg^{-1}$. Since $gq_1 \dots q_kg^{-1} = 1$ iff $q_1\dots q_k = 1$ for
  any $g \in G$, this shows that the $Z$-checks and $X$-checks commute.
\end{proof}

The proof above has a very simple gauge-theoretic interpretation; the $Z$-checks enforce a trivial holonomy around homotopically trivial loops, and the $X$-checks leave these holonomies invariant. Throughout the rest of the paper, we will find it useful to employ the language of gauge theory, which has been successful for describing the properties of error correcting codes \cite{ungauging, Yoshida_2015, Cui:2019lvb}, as summarized in the following definition.

\begin{defn}[Gauge theory terminology]\label{def:gauge-theory}
    The configuration $\mathbf{g}\in G^n$ is a
    \textit{$G$-valued connection} on $\Sigma$. The $X$-checks generate
    \textit{gauge transformations} on the basis states:   two states $\ket{\vb g}, \ket{\vb g'}$ will be
    called \textit{gauge equivalent} if $\ket{\vb g} \sim \ket{\vb g'}$
    [Def.~\ref{def:orbit}], and the equivalence class $[\vb g]$ will be called the \textit{gauge class} of $\vb g$. 
    Identifying each gauge class with a representative or a subset of representatives will be called \textit{gauge-fixing}, and symmetrizing over the gauge transformations at vertex
    \(v\) will be called \textit{gauging} the $G$-symmetry at \(v\). The product of all $X$-checks $\prod_v X_v^g$ will be called a \textit{large gauge transformation}.
    Given a closed loop $\gamma = (q_1, \dots, q_m)$, 
    the group word $g_{q_1}\cdots g_{q_m}$ is referred to as the \textit{holonomy} of the
    connection around $\gamma$, and will be denoted $\vb g_\gamma$.  The connection is called \textit{flat} if the holonomy
    around each $\partial p$ is the identity in $G$.
\end{defn}

\begin{rmk}
\label{rmk:oriented_surface_duality}
Note that when the CW complex is a cellulation of a 2-manifold there is a notion of a dual lattice, and
exchanging the $X$-checks and $Z$-checks maps a smooth QD code to a corresponding code on the dual lattice. 
However, for a general CW complex the $X$-checks and $Z$-checks play fundamentally different roles; each 1-cell is connected to at most two 0-cells, but a 1-cell can belong to arbitrarily many 2-cells. 
In this case, the duality transformation no longer produces a valid group CSS code.
\end{rmk}

Note that the construction above specifies a set of commuting 
projectors
for any group $G$.
We will show that this property is unique to quantum doubles on 2D CW
complexes: to design any other non-Abelian group-CSS code, we must know
something about the structure of $G$. We will prove this result in Thm.~\ref{thm:classify}.

\subsection{Ghost vertices}
We can push the association between quantum doubles and CW complexes to further
encompass \textit{rough boundaries}---locations where $X$-checks are not measured---and generalizations thereof.
Different types of boundaries are important for the practical usage of the toric code because planar geometries are more compatible with 2D architectures.
Such boundaries have been classified for general 2D quantum doubles~\cite{QD_boundary, Cong1, Cong2, Cong3}.
In this section, we introduce the framework of ``ghost vertices'' (following
Ref.~\cite{CuiTwisted}) to handle quantum doubles with boundaries.
As an immediate consequence, our framework implies that boundaries and holes of
quantum doubles admit transversal logical gates at the boundary, as observed in
Refs.~\cite{Ellison_2026, huang2026hybridlatticesurgerynonclifford}.
We will also use our
framework to extend the connection between SPTs and error correcting codes \cite{LDPCphysics} to non-Abelian global symmetries in the case of
the non-Abelian cluster state introduced in Ref.~\cite{Fechisin_2025}.

\begin{defn}[Ghost vertices and rough quantum doubles] \label{def:roughQD}
A quantum double associated to a CW complex will be called \textit{rough} if
some of the $X$-checks $A_v$ are removed from the stabilizer group, i.e., some
collection of vertices $\{v_1, \dots, v_k\}$ are not associated to $X$-checks.
The removed vertices are named \textit{ghost vertices}.
\end{defn}

While there is no longer a unique CW complex associated to a rough quantum double because ghost vertices can be arbitrarily identified together, we can specify a minimal identification which canonically associates a CW complex and a rough quantum double.

\begin{prop}
  A rough quantum double can be associated to a unique CW complex with a set of
  ghost vertices, and a CW complex with specified ghost vertices can be uniquely
  associated to a rough quantum double.
  \label{prop:roughQDonCW}
\end{prop}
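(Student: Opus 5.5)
The plan is to reduce the statement to Prop.~\ref{prop:QDonCW} by reinserting the missing $X$-checks as ghost vertices, and to fix the non-uniqueness noted before the statement by choosing a minimal identification.

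\textbf{From a rough quantum double to a complex.} A rough quantum double is a group CSS code of the form in Def.~\ref{def:generalizedQDsmooth}, except that the sets $B_\alpha$ attached to the remaining $X$-checks no longer partition $\{1,\dots,2n\}$. Let $U$ be the set of indices in $\{1,\dots,2n\}$ that lie in no $B_\alpha$. The canonical completion puts each index $u\in U$ in its own singleton block $\{u\}$ and declares each such block a ghost vertex. The collection $\{B_\alpha\}\cup\{\{u\}\}_{u\in U}$ is then a partition of $\{1,\dots,2n\}$.

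\textbf{Applying the smooth correspondence.} Prop.~\ref{prop:QDonCW} applied to this partition produces a 2D CW complex $\Sigma=(V,E,F)$ together with a distinguished subset of ghost vertices. Three points need checking.
\begin{enumerate}[(i)]
\item The edge gluing maps $\varphi^1_q$ are well defined. Every index $q$ and every index $q+n$ now lies in exactly one block.
\item Each $Z$-check boundary $\partial p=\{q_0,\dots,q_k\}$ is still a closed loop in the graph.
\item The orientations $a_i$ are consistent, exactly as in the smooth case.
\end{enumerate}

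Point (ii) is the main obstacle. Def.~\ref{def:generalizedQDsmooth} requires consecutive qudits $q_i,q_{i+1}$ to share an $X$-check, and that check may have been removed. I would therefore take the defining data of a rough double to be a smooth double with some vertices deleted, so that consecutive boundary edges meet at a common block, possibly a ghost block. With singleton ghosts, two distinct indices cannot share a singleton block. So whenever a plaquette passes through a ghost, the corresponding indices must be merged into a single ghost block.

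The canonical choice is the finest partition of $U$ that makes every plaquette boundary a closed loop. This partition is generated by the relations "$q_i$ and $q_{i+1}$ are consecutive in some $\partial p$ and are not joined by a surviving $X$-check." It is unique because it is the intersection of all admissible identifications. This is the sense in which the association is "unique" and "minimal."

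\textbf{From a complex with ghosts to a rough double.} Given a CW complex with a specified set of ghost vertices, build the smooth quantum double via Prop.~\ref{prop:QDonCW} and then delete the $A_v$ for the ghost vertices $v$. The resulting code is uniquely determined. It is still a valid group CSS code: the remaining $X$-checks commute among themselves and commute with the $Z$-checks by Prop.~\ref{prop:QDonCW}, since deleting checks cannot destroy commutation. Requirement (Def.~\ref{def:CSS}) that $X$-checks preserve the $Z$-codespace also follows from that proposition.

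Finally, I would check that the two constructions are mutually inverse. Starting from a complex in which the ghost vertices are already minimally identified, deleting the ghost checks and re-completing via the finest partition returns the same complex.
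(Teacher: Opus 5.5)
Your proposal is correct and follows essentially the same route as the paper. You add a singleton ghost check for every index that no surviving check left- or right-multiplies, build the complex via Prop.~\ref{prop:QDonCW}, and merge ghost vertices wherever consecutive edges of a plaquette boundary fail to meet; the reverse direction simply deletes the ghost checks. Your ``finest admissible partition'' is exactly the paper's order-independent identification yielding the maximum number of ghost vertices.
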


\begin{proof}
First, the map from CW complexes to smooth quantum doubles in
Prop.~\ref{prop:QDonCW} extends to a map from CW complexes with a
specified set of ghost vertices $Y$ to rough doubles by removing every check
corresponding to a ghost vertex.
In the other direction, given a rough quantum double, for each qudit which is
not left- (right-) multiplied by any check, we add an additional check to the
stabilizer set which acts only on this qudit by left- (right-) multiplication. We
use this new set of $X$-checks to construct the 1-skeleton of the complex
$\Sigma$ as done in Prop.~\ref{prop:QDonCW}, and the vertices
associated to the new $X$-checks the set of ghost vertices $Y$. Now the corresponding
gluing rules for a 2-cell $p$ may not be continuous. By construction, this
discontinuity can only occur if $q_i, q_{i+1} \in \partial p$ are both connected
to ghost vertices $v_i, v_{i+1} \in Y$. Therefore we identify $v_i$ with $v_{i+1}$
and repeat for each discontinuity in each 2-cell, as depicted in Fig.~\ref{fig:identify_ghost_cells}. Since the identification is
independent of the order, this clearly produces a unique CW complex associated
with the rough double having the maximum number of ghost vertices.
\end{proof}

\begin{figure}
\centering
\hspace{1.5cm}
\def\svgwidth{.4\textwidth}
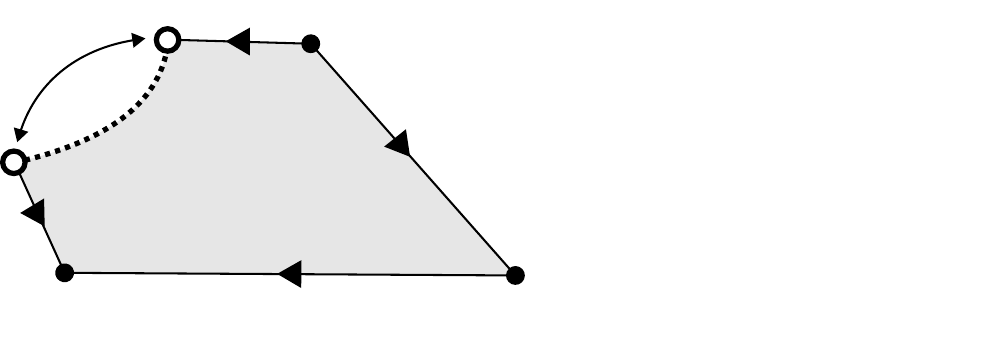
\caption{
  Associating a rough double to a CW complex. The open circles show ghost checks.
 The $X$-checks associated to each vertex are shown by their action on the
  qudits $\ket{a}, \ket{b}, \ket{c}, \ket{d}$ in red.
  The gray shaded region shows a $Z$-check $\mathbb I(a^{-1}bcd^{-1} = 1)$
  with a discontinuity
  illustrated by the dotted line. Once ghost checks are identified to a ghost vertex as shown by
  the double-headed arrow, the gluing rule for the $Z$-check becomes continuous. 
 }
 \label{fig:identify_ghost_cells}
\end{figure}

As we will see shortly in Prop.~\ref{prop:codespace_rough}, we can always
understand the codespace of a rough quantum double by gluing all ghost vertices
together.
However, it is often important to use the minimal identification in the previous proposition.
For example, in Sec.~\ref{sec:matter}, we will see that each ghost vertex
separately hosts a $G$-symmetry corresponding to a transversal logical operator
at each rough boundary of a quantum double \cite{huang2024fermionicquantumcriticalitylens, Ellison_2026}. 

\subsection{Codespace}

Ghost vertices are sufficient to describe the codespace of quantum doubles on CW complexes \(\Sigma\). 
In a nutshell, Proposition \ref{prop:codespace_rough} will show that the codespace contains two pieces:  one dependent on the fundamental group of $\Sigma$, and one dependent on the number of ghost vertices.   One elegant way to understand this codespace, which we will see later in Proposition \ref{prop:Zlogical}, is that $Z$-logical operators correspond to either the holonomy of loops around non-contractible cycles, or to the integral of the connection along paths that stretch between two different ghost 0-cells.   This is our way of formalizing and generalizing the standard observation that there are two kinds of logical operators on quantum double models, one corresponding to non-contractible cycles on a closed surface, and the other associated to lines stretching between defects.

\begin{prop}
The codespace \(\mathcal C\) of a rough quantum double code on a connected CW complex $\Sigma$ with ghost vertex set $Y$ is
\begin{equation}\label{eq:codespace_rough}
    \mathcal{C} = \mathbb{C}[\mathrm{Hom}(\pi_1(\Sigma/Y),G)] = \mathbb{C}[\mathrm{Hom}(\pi_1(\Sigma),G)] \otimes \mathbb{C}[G^{|Y|-1}]~,
\end{equation} 
where $\pi_1(\Sigma/Y)$ is the fundamental group of $\Sigma$ with all ghost vertices identified as a single point, where \(\Hom(\pi_1,G)\) is the set of all homomorphisms from \(\pi_1\) into \(G\), and where \(\mathbb{C}[\Hom(\pi_1,G)]\) is the vector space of formal linear combinations of such maps.
\label{prop:codespace_rough}
\end{prop}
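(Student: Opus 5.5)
We start from Prop.~\ref{prop:codespace}: the codespace is spanned by uniform superpositions over orbits of the $X$-check group acting on the flat connections $\mathcal C_Z=\{\mathbf g\in G^{E}:\mathbf g_{\partial p}=1\ \forall p\in F\}$. Here the $X$-checks are the gauge transformations at the non-ghost vertices $V\setminus Y$. So $\mathcal C\cong\mathbb C[\mathcal C_Z/G^{V\setminus Y}]$, and it suffices to build a bijection between this orbit set and $\mathrm{Hom}(\pi_1(\Sigma/Y),G)$.

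First we pass to $\Sigma'=\Sigma/Y$. Identifying all ghost vertices to a single point $y_*$ changes neither the edges nor the plaquette boundary words. A connection on $\Sigma$ is therefore the same data as a connection on $\Sigma'$, flatness is unchanged, and the gauge group $G^{V\setminus Y}$ is exactly the group of gauge transformations of $\Sigma'$ that are trivial at $y_*$. (If $Y=\emptyset$, we instead fix a basepoint and treat it separately; see below.)

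Next comes the standard holonomy argument. Choose a spanning tree $T$ of the 1-skeleton of $\Sigma'$ rooted at $y_*$. Every non-root vertex can be used to gauge-fix the tree edges to $1$. This fixing is free and unique, because gauge transformations trivial at the root act freely on tree-edge labels by induction from the root outward. Hence the orbits of all connections are in bijection with assignments of $G$ to the non-tree edges. Each non-tree edge is a generator of the free group $\pi_1(\Sigma'^{(1)},y_*)$. By van Kampen, $\pi_1(\Sigma')$ is the quotient of this free group by the plaquette boundary words, read through the tree contraction. The flatness constraints are exactly these relators, so gauge-fixed flat connections correspond bijectively to homomorphisms $\pi_1(\Sigma')\to G$. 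Intrinsically, the map is $[\mathbf g]\mapsto(\gamma\mapsto\mathbf g_\gamma)$ on loops based at $y_*$. It is well defined because gauge transformations fixed at $y_*$ leave based holonomies invariant, and it is well defined on homotopy classes because of flatness. This gives the first equality in \eqref{eq:codespace_rough}.

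For the second equality we need $\pi_1(\Sigma/Y)\cong\pi_1(\Sigma)*F_{|Y|-1}$. The cleanest route is to attach to $\Sigma$ a tree of $|Y|-1$ edges joining the ghost vertices. This is a homotopy-equivalent model of the quotient, since collapsing a contractible subcomplex is a homotopy equivalence. Attaching these arcs to a connected $\Sigma$ adds $|Y|-1$ free generators, so van Kampen gives the free product. Since $\mathrm{Hom}(A*F_m,G)=\mathrm{Hom}(A,G)\times G^m$, the tensor factorization follows.

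The main subtlety is the bookkeeping of the $G$-conjugation redundancy. When $Y=\emptyset$ there is no ghost point, and the gauge group includes the basepoint. Orbits are then $\mathrm{Hom}(\pi_1(\Sigma),G)/G$ up to conjugation, not $\mathrm{Hom}$. In that case I expect the statement must be read with $|Y|\ge1$, or with the convention that $\mathbb C[G^{-1}]$ denotes the conjugation quotient. I would state this caveat explicitly and check the $Y\neq\emptyset$ case carefully, verifying that the gauge action fixing $y_*$ is free on the tree labels.
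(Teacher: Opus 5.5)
Your proposal is correct and follows essentially the same route as the paper. You gauge-fix a spanning tree of $\Sigma/Y$ rooted at the collapsed ghost point; the paper phrases this as a forest $\bigsqcup_i T_i$ in $\Sigma$ with one ghost root per component. You then identify gauge classes of flat connections with homomorphisms via van Kampen, and obtain $\pi_1(\Sigma/Y)\cong\pi_1(\Sigma)*\mathbb{Z}^{*(|Y|-1)}$ by attaching $|Y|-1$ arcs between ghost vertices and collapsing that contractible tree. All of this matches the paper, and your $Y=\emptyset$ caveat agrees with it too: the paper implicitly assumes a nonempty ghost set here and treats the smooth case separately, via the $\Ad_G$ quotient, in Corollary~\ref{cor:codespacesmooth}.
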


The proof straightforwardly generalizes the proof found in Ref.~\cite{Cui:2019lvb} to include CW complexes and rough quantum doubles.
It will be useful to recall the gauge-theory terminology from Def.~\ref{def:gauge-theory} and periodically refer to Fig.~\ref{fig:codespace_ghost} while following this proof.

\begin{proof}
We will first identify a set of edges which can always be gauge-fixed to the identity.
Let $T$ be a spanning tree rooted at one of the ghost vertices $v_1 \in Y$. 
Because $T$ is a tree, we can decompose the vertex set as a disjoint union $V(T) =
\bigsqcup_{i=1}^N V(T_i)$, where $T_i$ is a tree rooted at $v_i$ for each $v_i
\in Y$ (an example is shown in Fig.~\ref{fig:codespace_ghost}).
As in Prop.~\ref{prop:codespace}, the mutual $+1$-eigenspace of all \(X\)-checks $\mathcal C_X$ is spanned by the states $\ket{[\vb g]}$,
defined as even superpositions of $\ket{\vb h}$ for all $\vb h$ which are
gauge-equivalent to $\vb g$. 
The vertex operators act transitively on the edges in each $T_i$, so each gauge
class has a distinguished representative such that each edge of $T_i$ is gauge-fixed to 1 for
each $i$. Moreover, since each edge connected to the ghost vertices are only acted
upon by either right or left multiplication by the $X$-checks, this representative is \textit{unique}.

Next, we define the CW complex 
\begin{equation}\label{eq:quotient}
    \Sigma' \equiv
\Sigma/Y\qquad\qquad\text{(quotient complex)}    
\end{equation}
arising from identifying all ghost vertices.
The entire ghost-vertex set \(Y\) becomes a single vertex in this new quotient complex.
Let $\Sigma'^1$ denote the 1-skeleton of $\Sigma'$, and let $\pi_1(\Sigma'^1, Y)$ be the fundamental group of $\Sigma'^1$ with basepoint $Y$.
Since $T' = \bigsqcup_{i=1}T_i$ 
is a spanning tree of $\Sigma'$ rooted at $Y$, by Van Kampen's theorem \cite{hatcher}, the fundamental group of $\Sigma'^1$
is free and generated by the unique loops passing through the edges of $\Sigma'^1 - T'$.
Therefore, each gauge class $[\vb g]$ can be associated to a map
$\Phi_{[\vb g]} \in \Hom(\pi_1(\Sigma'^1, Y), G)$ in the following way.

Let $q_1,
\dots, q_N$ be the edges in $\Sigma' - T'$. Each edge $q_i$ corresponds to a
loop $\gamma_i$ which passes through $q_i$ and is otherwise contained in $T'$, 
which together generate $\pi_1(\Sigma'^1, Y)$. 
The unique representative $\vb h$
of $[\vb g]$ with all edges in $T'$ gauge-fixed to $1$
specifies an assignment $h_1, \dots, h_N$ to the edges $q_1, \dots, q_N$, so we put
$\Phi_{\vb h}(\gamma_i) = h_i$. This is clearly a homomorphism.
Furthermore, since $h_i$ is the holonomy of $\vb h$ around $\gamma_i$, and since the
application of $X$-checks leaves all holonomies based at $Y$ 
unchanged, $\Phi_{\vb h} = \Phi_{\vb g}$, so $\Phi_{[\vb g]}$ is well-defined
for a gauge-class $[\vb g]$. 

Now, we need to introduce the 2-cells. Since $X^g_{v}$ and $Z_{\partial p}^{\{1\}}$ commute,
the code space $\mathcal C$ is spanned by $\ket{[\vb g]}$ such that $Z_{\partial p}\ket{\vb
  g} = \ket{\vb g}$. Again by
Van Kampen's theorem, the
inclusion $i: \Sigma'^1 \to \Sigma'$ induces a surjective map $i_\ast:
\pi_1(\Sigma'^1, Y) \to \pi_1(\Sigma', Y)$ with kernel $[\varphi_{p}^2]$, where
$\varphi_{p}^2:S^1 \to \Sigma'^1$ is the gluing map associated with face \(p\). Therefore
\begin{equation}
  \Hom(\pi_1(\Sigma', Y),G) = \{\theta \in \Hom(\pi_1(\Sigma'^1, Y), G): \theta([\varphi_p]) = 1 \ \forall  \ p\}~.
\end{equation}
Since $\ket{[\mathbf g]} \in \mathcal C$ if and only if $\Phi_{\mathbf g}([ \varphi_p^2 ]) =
1$, this shows that $[\vb g] \mapsto \Phi_{[\vb g]}$
exhibits a bijection between a basis for $\mathcal C$
and $\Hom(\pi_1(\Sigma', Y), G)$.

To complete the proof, we need the following simple fact from algebraic topology: 
\begin{lem}
    Let $|Y|=R$.  Then $\pi_1(\Sigma/Y) \cong \pi_1(\Sigma)* \mathbb{Z}^{*(R-1)}$, where $*$ denotes the free product of groups \cite{hatcher}.
\end{lem}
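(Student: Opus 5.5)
The plan is to show that collapsing $R$ points of a connected CW complex to a single point has the same homotopy type as attaching $R-1$ arcs joining those points. Then van Kampen computes the fundamental group of the result.

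\emph{Step 1: the arc model.} Write $Y=\{v_1,\dots,v_R\}$. Since $\Sigma$ is connected, for each $i=2,\dots,R$ choose an embedded edge path $P_i\subset\Sigma^1$ from $v_1$ to $v_i$. Taking the $P_i$ inside a spanning tree $T$ of $\Sigma^1$, exactly as in the proof of Prop.~\ref{prop:codespace_rough}, guarantees that $A:=\bigcup_i P_i$ is a subtree and hence a contractible subcomplex containing $Y$.

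Next, form the complex
\[
\widetilde\Sigma = \Sigma\cup e^1_2\cup\dots\cup e^1_R ,
\]
where each new 1-cell $e^1_i$ has endpoints attached to $v_1$ and $v_i$. I will show
\[
\Sigma/Y\simeq \widetilde\Sigma .
\]
To do this, let $C$ be the cone on $Y$, a contractible subcomplex consisting of a cone point and $R$ segments. Glue $C$ to $\Sigma$ along $Y$. Since $(\Sigma\cup C, C)$ is a CW pair with $C$ contractible, the standard fact \cite{hatcher} that $X\to X/C$ is a homotopy equivalence gives $\Sigma\cup C\simeq (\Sigma\cup C)/C=\Sigma/Y$.

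On the other hand, collapsing one segment of $C$, say the one ending at $v_1$, is again a quotient by a contractible subcomplex. It yields exactly $\widetilde\Sigma$, in which the remaining $R-1$ segments become arcs from $v_1$ to $v_i$. Hence $\Sigma/Y\simeq\widetilde\Sigma$.

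\emph{Step 2: a wedge of circles.} Next I show that $\widetilde\Sigma \simeq \Sigma\vee\bigvee_{i=2}^R S^1$. The contractible tree $A\subset\Sigma\subset\widetilde\Sigma$ contains all endpoints of the new arcs. Quotienting $\widetilde\Sigma$ by $A$ is a homotopy equivalence. In $\widetilde\Sigma/A$, each arc $e^1_i$ becomes a loop at the single point $[A]$, so
\[
\widetilde\Sigma/A=(\Sigma/A)\vee\bigvee_{i=2}^{R}S^1,
\]
and moreover $\Sigma/A\simeq\Sigma$.

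\emph{Step 3: van Kampen.} Applying van Kampen to the wedge of CW complexes, where each wedge point has a contractible neighborhood, gives
\[
\pi_1(\Sigma/Y)\cong\pi_1(\Sigma)*\mathbb Z^{*(R-1)} .
\]
This also matches the codespace formula in Prop.~\ref{prop:codespace_rough}, since $\Hom(A*B,G)=\Hom(A,G)\times\Hom(B,G)$ and $\Hom(\mathbb Z,G)=G$.

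The main obstacle I expect is bookkeeping, namely keeping everything in the CW category so that the ``quotient by a contractible subcomplex'' lemma applies. This is why I choose the $P_i$ inside a spanning tree, which makes $A$ a subcomplex tree rather than merely a union of possibly overlapping paths. If a cellular argument is preferred instead, the alternative is to do the computation combinatorially. Use the presentation of $\pi_1$ from a spanning tree of $\Sigma^1$ versus a spanning forest of $(\Sigma/Y)^1$. Identifying $R$ vertices turns a spanning tree into a graph with $R-1$ extra non-tree edges, so $R-1$ additional free generators appear, while the 2-cell relations are unchanged.
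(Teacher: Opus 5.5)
Your proposal is correct and follows essentially the same route as the paper. Both arguments attach $R-1$ arcs from a distinguished ghost vertex to the others, identify the result with $\Sigma/Y$ up to homotopy by collapsing a contractible subcomplex, collapse a tree in $\Sigma$ through the ghost vertices to reach $\Sigma\vee(\mathrm{S}^1)^{\vee(R-1)}$, and apply van Kampen; the only cosmetic difference is your cone detour in Step 1, which the paper skips by observing that the star $Y'$ of new arcs is itself contractible and satisfies $\Psi/Y'=\Sigma/Y$ exactly.
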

\begin{proof} 
    Select an arbitrary distinguished ghost vertex $y_1 \in Y$. 
    Let $\Psi = \Sigma \cup Y^\prime$, where $Y^\prime$ is a 1D CW complex of $R-1$ edges $e_2,\ldots, e_R$ with boundaries $\partial e_j=\lbrace y_1,y_j\rbrace$ for each vertex \(y_j \in Y\).
    Since $Y'$ is connected and contractible, it follows that $\pi_1(\Sigma/Y) = \pi_1(\Psi/Y') \cong \pi_1(\Psi)$.
    Now take a spanning tree $T\subset \Sigma$ that connects $y_1,\ldots, y_R$ and contract a neighborhood of $T$ to the single point $y_1$.  
    These operations reveal that $\Psi$ is homotopy equivalent to $\Sigma \vee (\mathrm{S}^1)^{\vee (R-1)}$.  By van Kampen's Theorem, $\pi_1(\Sigma \vee (\mathrm{S}^1)^{\vee (R-1)}) = \pi_1(\Sigma)*\mathbb{Z}^{*(R-1)}$.
\end{proof}
To complete the proof using this fact,
we now note that $\mathrm{Hom}(\pi_1(\Sigma)* \mathbb{Z}^{*(R-1)},G)= \mathrm{Hom}(\pi_1(\Sigma),G)\times G^{R-1}$ because each additional free generator can be sent to an arbitrary element of $G$.  
\end{proof}

As the above proposition shows, one can either think of the additional logicals afforded
by multiple ghost vertices as living on strings stretching between two distinct
ghost vertices or as non-contractible loops on the complex with these two
vertices identified.

As an example, we consider a punctured grid as shown in
Fig.~\ref{fig:codespace_ghost}. The complex is already topologically non-trivial
due to the puncture: $\pi_1(\Sigma) \cong \Z$ is generated by a single loop winding around the hole.  When looking at $\mathrm{Hom}(\mathbb{Z},G) \cong G$, we can equivalent ask for the $G$-valued holonomy around any loop that winds around the hole.  We can take the generator of $\pi_1(\Sigma)$
to pass through a $b$ edge (colored in red in Figure \ref{fig:codespace_ghost}) exactly once. 
Adding the ghost vertices results in another non-trivial
loop, and we can see that $\pi_1(\Sigma/Y) \cong \Z^2$,
where the additional
generator corresponds to the edges labeled $a$. Thus the codespace is $\mathcal
C \cong \C G^2$, which agrees with Prop.~\ref{prop:codespace_rough}.

\begin{figure}[t]
  \centering
  \def\svgwidth{.3\textwidth}
\begingroup%
  \makeatletter%
  \providecommand\color[2][]{%
    \errmessage{(Inkscape) Color is used for the text in Inkscape, but the package 'color.sty' is not loaded}%
    \renewcommand\color[2][]{}%
  }%
  \providecommand\transparent[1]{%
    \errmessage{(Inkscape) Transparency is used (non-zero) for the text in Inkscape, but the package 'transparent.sty' is not loaded}%
    \renewcommand\transparent[1]{}%
  }%
  \providecommand\rotatebox[2]{#2}%
  \newcommand*\fsize{\dimexpr\f@size pt\relax}%
  \newcommand*\lineheight[1]{\fontsize{\fsize}{#1\fsize}\selectfont}%
  \ifx\svgwidth\undefined%
    \setlength{\unitlength}{289.62722658bp}%
    \ifx\svgscale\undefined%
      \relax%
    \else%
      \setlength{\unitlength}{\unitlength * \real{\svgscale}}%
    \fi%
  \else%
    \setlength{\unitlength}{\svgwidth}%
  \fi%
  \global\let\svgwidth\undefined%
  \global\let\svgscale\undefined%
  \makeatother%
  \begin{picture}(1,0.93780711)%
    \lineheight{1}%
    \setlength\tabcolsep{0pt}%
    \put(0.72242282,0.90548978){\color[rgb]{0,0,0}\makebox(0,0)[lt]{\lineheight{1.10000002}\smash{\begin{tabular}[t]{l}$a$\end{tabular}}}}%
    \put(0,0){\includegraphics[width=\unitlength,page=1]{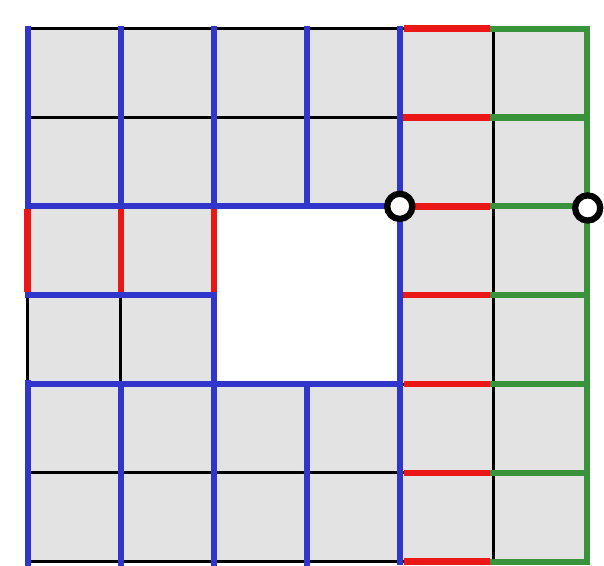}}%
    \put(-0.00215441,0.50297441){\color[rgb]{0,0,0}\makebox(0,0)[lt]{\lineheight{1.10000002}\smash{\begin{tabular}[t]{l}$b$\end{tabular}}}}%
  \end{picture}%
\endgroup%

  \caption{
    Example of codespace construction for a rough quantum double. Each box
    shaded gray hosts a $Z$-check, and all vertices except for the open circles
    host $X$-checks. The green edges correspond to $T_1$, a tree
    rooted at the first ghost check, and the blue edges correspond to $T_2$, a
    tree rooted at the second ghost check. We can see that $T = T_1 \sqcup T_2$ is a
    spanning tree for $\Sigma/Y$. The black edges are trivial once $T$ is fixed
    to the identity, and the red edges can be non-trivial. The loop corresponding to the edges labeled
    $a$ connects the two ghost vertices and is nullhomotopic in $\Sigma$ but
    non-trivial in $\Sigma/Y$. The edges labeled $b$ correspond to the loop
    wrapping around the puncture.
  }
  \label{fig:codespace_ghost}
\end{figure}

Returning to smooth quantum doubles (i.e., those without ghost checks), the second factor in the codespace structure result in Eq.~\eqref{eq:codespace_rough} disappears.
However, 
adding the last ghost check back into the stabilizer group is equivalent to adding a symmetry under large gauge transformations, i.e. conjugating all of the edges by $g$ for each $g \in G$.
More specifically, 
given $\phi \in \mathrm{Hom}(\pi_1(\Sigma),G)$, a codeword of a
smooth quantum double corresponds to a choice of $\phi$ up to conjugation:  $\phi \sim \phi^g$, where $\phi^g \in \mathrm{Hom}(\pi_1(\Sigma),G)$ is defined by $\phi^g(\sigma) \equiv g \phi(\sigma)g^{-1}$. 
This yields the following corollary.

\begin{cor}\label{cor:codespacesmooth}
  Given a smooth quantum double code on a 2D CW complex $\Sigma$,
  \begin{equation}
    \mathcal C  \cong \C[\Hom(\pi_1(\Sigma), G)/\Ad_G]~,
    \label{eq:codespacesmooth}
\end{equation} 
    where the quotient is by \(\Ad_G\), the \textit{adjoint action} of \(G\) (by conjugation).
\end{cor}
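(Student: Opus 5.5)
The plan is to reduce to Prop.~\ref{prop:codespace_rough} by temporarily declaring one vertex a ghost, and then to restore its $X$-check as a residual $G$-symmetry. Fix an arbitrary vertex $v_0\in V$ and let $\mathcal{C}_0$ be the codespace of the rough quantum double with ghost set $Y=\{v_0\}$. Since $|Y|=1$, Prop.~\ref{prop:codespace_rough} gives $\mathcal{C}_0\cong \C[\Hom(\pi_1(\Sigma,v_0),G)]$. The explicit basis from that proof is $\ket{[\vb g]}_0$: uniform superpositions over orbits of the gauge group generated by all vertices except $v_0$. Each orbit has a unique representative with the spanning tree $T$ gauge-fixed to $1$, and it is labeled by $\Phi_{[\vb g]}(\gamma)=\vb g_\gamma$, the holonomy around loops based at $v_0$.

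The smooth code is $\mathcal C=\{\ket\psi\in\mathcal C_0: X^g_{v_0}\ket\psi=\ket\psi\ \forall g\}$. This holds because the smooth check set is the rough check set together with $X^g_{v_0}$, and all these checks preserve $\mathcal C_Z$ by Prop.~\ref{prop:QDonCW}. I would first check that $X^g_{v_0}$ maps $\mathcal C_0$ to itself. Each $X^g_{v}$ with $v\neq v_0$ acts only at vertex $v$, and $X^g_{v_0}$ acts only at $v_0$. Edges joining them, including self-loops, receive left and right multiplications that commute by Eq.~\eqref{eq:Xoperatormultiplication}. Hence $X^g_{v_0}$ commutes with every other $X$-check, so it normalizes the rough gauge group and permutes its orbits. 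The corresponding $Z$-projectors are unaffected as well.

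Next I compute this permutation on the labels. A holonomy based at $v_0$ is $\vb g_\gamma=g_{q_1}\cdots g_{q_m}$, where the loop leaves and returns to $v_0$. By the same computation as in the $i=k$ case of the proof of Prop.~\ref{prop:QDonCW}, $X^h_{v_0}$ multiplies the first factor on the left by $h$ and the last factor on the right by $h^{-1}$. Interior visits to $v_0$ cancel in pairs. So $\vb g_\gamma\mapsto h\,\vb g_\gamma h^{-1}$, which means $\Phi\mapsto\Phi^h$ and $X^h_{v_0}\ket{[\Phi]}_0=\ket{[\Phi^h]}_0$. The subspace of $\C[\Hom(\pi_1(\Sigma),G)]$ fixed by this permutation action is spanned by the orbit sums $\sum_{\Phi'\in \Ad_G\cdot\Phi}\ket{[\Phi']}_0$, one for each $\Ad_G$-orbit. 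This is the same argument as in Prop.~\ref{prop:codespace}, and it gives $\mathcal C\cong\C[\Hom(\pi_1(\Sigma),G)/\Ad_G]$.

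The step I expect to need the most care is the interior-visit cancellation in the holonomy computation. If $\gamma$ passes through $v_0$ in the middle, with an incoming edge followed by an outgoing edge, the pair contributes $\cdots h^{-1}\cdot h\cdots$ only when the orientation conventions of Def.~\ref{def:generalizedQDsmooth} are applied correctly. I would verify this by writing each edge traversal as $g_q^{\pm1}$ and checking the four orientation cases, just as the proof of Prop.~\ref{prop:QDonCW} does for $Z$-checks. A secondary point is the independence from the choice of $v_0$: any change of basepoint identifies the $\Hom$ sets only up to conjugation, so the quotient by $\Ad_G$ is canonical.
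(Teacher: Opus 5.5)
Your proposal is correct and follows essentially the same route as the paper: demote $v_0$ to a ghost vertex, apply Prop.~\ref{prop:codespace_rough} with $|Y|=1$, then restore $X^g_{v_0}$, which conjugates all holonomies based at $v_0$, so the invariant subspace is spanned by $\Ad_G$-orbit sums. You supply details the paper leaves implicit: that $X^g_{v_0}$ commutes with the remaining checks and so permutes the rough-code basis, that interior visits to $v_0$ cancel, and that the fixed subspace of a permutation action is spanned by orbit indicators.
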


\begin{proof}
This is proven explicitly in \cite{Cui:2019lvb} when $\Sigma$ triangulates an orientable two-dimensional manifold, but their technique generalizes in an obvious way to any CW complex.  Here, we will instead leverage the proof of Proposition \ref{prop:codespace_rough}.

First, we remove a single $X$-check at some vertex $v_0$. By Prop.~\ref{prop:codespace_rough}, the codespace is
\begin{align}
\mathcal C \cong \C [\Hom(\pi_1(\Sigma, v_0), G)].
\end{align}
Now note that the basis vectors in $\mathcal{C}$ can be labeled by the holonomy of all non-contractible loops that begin and end at $v_0$.  We now include $X_{v_0}^g$ back into the stabilizer group.
Its effect is to conjugate all holonomies based at $v_0$ by $g$. Therefore, a connection
$\vb g$ no longer has a unique representative $\vb h$ which is gauge-fixed on
the spanning tree, but is only unique up to global conjugation by any $g \in G$.
Therefore $[\vb g]$ corresponds uniquely to a map $\Phi_{[\vb g]} \in \Hom(\pi_1(\Sigma,
v_0), G)/\Ad_G$, proving the claim.
\end{proof}

\begin{rmk}\label{rmk:weirddimension}
    Previously, we had commented that quantum double codes are not group GKP
    codes.   Now that we have reviewed the codespace of quantum double codes, it
    is easy to see why.  The logical dimension of the codespace of a group GKP
    code is $|H|/|K|$.   Since $|H|$ must divide $|G^n|$ in a group GKP code
    (Definition \ref{def:GKP}), we deduce that $\mathrm{dim}(\mathcal{C})$ also
    divides $|G|^n$.  This property does not hold for a quantum double code.  As
    a simple example, consider the symmetry group of a square, the dihedral
    group $G=\mathrm{D}_8$, placed on a cellulation of a two-dimensional torus
    with $\pi_1(\mathrm{T}^2)=\mathbb{Z}^2$. We can use
    Eq.~\ref{eq:codespacesmooth} to compute the dimension of the codespace. We
    see that $\Hom(\Z^2, G)$ corresponds to the number of central pairs in $G$,
    i.e. $a,b \in G$ such that $b \in Z(a)$, where $Z(a)$ is the set of
    elements commuting with $a$. Under conjugation by $g \in G$, we have $Z(a^g) =
    gZ(a)g^{-1}$,  which implies that $\Hom(\Z^2, G)/\Ad_{G}$ is equal to pairs
    $(C_a, Z^a_b)$, where $C_a$ is a conjugacy class in $G$ with representative
    $a$ and $Z^a_b$ is a conjugacy class in $Z(a)$. This is exactly equivalent to
    the standard formula in Ref.~\cite{Cong3}, because the number of conjugacy
    classes in $Z(a)$ is the same as the number of irreps of $Z(a)$.
    In the case $G = \mathrm D_8$, we find $\mathrm{dim}(\mathcal{C}) = 22$, which does not divide $8^n$ for any $n$.
\end{rmk}

\subsection{Quantum doubles with general boundaries}
The operation of quantum double codes can be challenging due to the requirements
on the topology of the quantum processor used to implement the code. Frequently,
it is natural to realize logical non-Clifford
operators or transversal logical operators on the boundary of a 2D geometry \cite{Ellison_2026,
  huang2026hybridlatticesurgerynonclifford}. Quantum double boundaries have been
previously studied for performing universal quantum computation \cite{Cong3} and
for the algebraic description of their anyonic excitations \cite{QD_boundary, Cong1,
  Cong2, Cong3}. In this section, we give a transparent description of the
codespace of such models.

The boundaries of quantum doubles are classified by
a subgroup $H \leq G$ and a cocycle $\phi \in H^2(H, \U(1))$. The group-CSS
formalism only encapsulates the trivial cocycle $\phi = 1$, corresponding to
Type I boundaries in Ref.~\cite{QD_boundary}, so this cocycle will not play a role in our formalism. Given a 1d boundary $\gamma \subseteq
\Sigma$, the boundary Hamiltonian is defined by 
\begin{align}
  \mathcal H^H_{\gamma} = -\sum_{v \in \gamma}\frac{1}{|H|}\sum_{h \in H}X^h_v -\sum_{e \in \gamma}Z^H_{e},\label{eq:Hbdy}
\end{align}
where the sum over $v \in \gamma$ runs over the vertices of the boundary and $e
\in \gamma$ are the edges of the boundary \cite{albert2021spinchainsdefectsquantum, QD_boundary}.
This restricts the boundary edges and
breaks down the gauge symmetry on the boundary vertices from $G$ to $H$.

This boundary description is an interpolation between the commonly used ``rough'' and
``smooth'' boundaries of the toric code when $G = \Z_2$ \cite{bravyiboundary}. To see this, we
introduce an $H$-valued ghost vertex, which is defined by taking a valid $X$-check
of a quantum double and restricting it to the subgroup $H$, with the ghost vertex previously introduced
corresponding to the trivial subgroup $H = \{1\}$. A rough boundary is defined
by removing all $X$-checks at the boundary, corresponding to $H = \{1\}$, and a
smooth boundary is defined by cutting the lattice along an edge, corresponding
to $H = G$. This is illustrated in Fig.~\ref{fig:rough_smooth_bdy}.
In the left panel, the orange dots do not correspond to $H$-valued ghost
checks because the removed checks, $X^g_v$ for $g \notin H$, fail to commute
with the $Z$-checks on the boundary. However, if the boundary is
contractible, then the boundary edges may be gauge-fixed to identity, and we can replace the boundary with a true ghost vertex. This is shown in the lower-left panel and is an example of the ``minimal identification" from Prop.~\ref{prop:roughQDonCW}.
We will
use this description to understand the codespace of codes with mixed boundaries.

\begin{figure}[t]
  \centering
  \def\svgwidth{.65\textwidth}
\begingroup%
  \makeatletter%
  \providecommand\color[2][]{%
    \errmessage{(Inkscape) Color is used for the text in Inkscape, but the package 'color.sty' is not loaded}%
    \renewcommand\color[2][]{}%
  }%
  \providecommand\transparent[1]{%
    \errmessage{(Inkscape) Transparency is used (non-zero) for the text in Inkscape, but the package 'transparent.sty' is not loaded}%
    \renewcommand\transparent[1]{}%
  }%
  \providecommand\rotatebox[2]{#2}%
  \newcommand*\fsize{\dimexpr\f@size pt\relax}%
  \newcommand*\lineheight[1]{\fontsize{\fsize}{#1\fsize}\selectfont}%
  \ifx\svgwidth\undefined%
    \setlength{\unitlength}{1175.64930833bp}%
    \ifx\svgscale\undefined%
      \relax%
    \else%
      \setlength{\unitlength}{\unitlength * \real{\svgscale}}%
    \fi%
  \else%
    \setlength{\unitlength}{\svgwidth}%
  \fi%
  \global\let\svgwidth\undefined%
  \global\let\svgscale\undefined%
  \makeatother%
  \begin{picture}(1,0.33844426)%
    \lineheight{1}%
    \setlength\tabcolsep{0pt}%
    \put(0,0){\includegraphics[width=\unitlength,page=1]{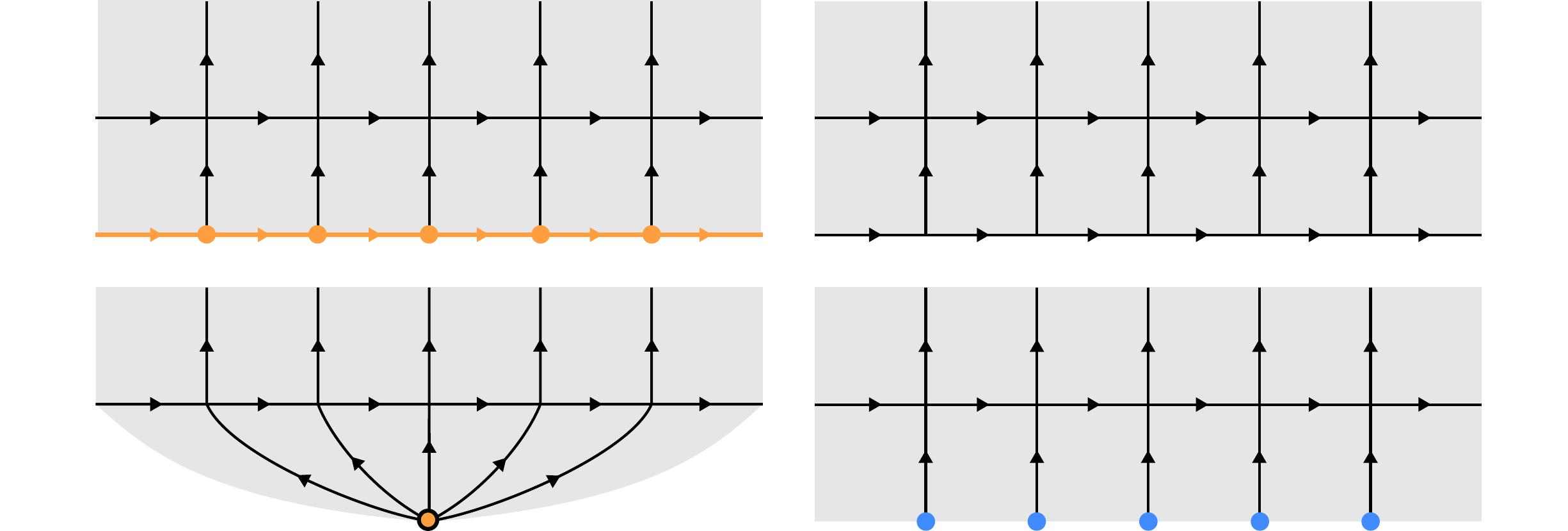}}%
    \put(0.95028474,0.19003833){\color[rgb]{0,0,0}\makebox(0,0)[lt]{\lineheight{1.10000002}\smash{\begin{tabular}[t]{l}$H = G$\end{tabular}}}}%
    \put(0.01574951,0.20439998){\color[rgb]{0,0,0}\makebox(0,0)[lt]{\lineheight{1.10000002}\smash{\begin{tabular}[t]{l}$\Sigma$\end{tabular}}}}%
    \put(-0.00053075,0.12395338){\color[rgb]{0,0,0}\makebox(0,0)[lt]{\lineheight{1.10000002}\smash{\begin{tabular}[t]{l}$\Sigma/ \gamma$\end{tabular}}}}%
    \put(0.94929102,0.14578779){\color[rgb]{0,0,0}\makebox(0,0)[lt]{\lineheight{1.10000002}\smash{\begin{tabular}[t]{l}$H = \{1\}$\end{tabular}}}}%
    \put(0,0){\includegraphics[width=\unitlength,page=2]{figures/rough_smooth_bdy_svg-tex.pdf}}%
  \end{picture}%
\endgroup%

  \caption{Our boundary construction and relationship to commonly studied
    boundary types. On the left, we show a boundary $\gamma$ on a complex
    $\Sigma$ classified by a subgroup
    $H$. The orange dots correspond to $X$-checks restricted to $H$ on those vertices.
    The orange lines correspond to checks $Z^H_{q}$. Gray shaded plaquettes
    $p$ correspond to $Z$-checks $Z^{\{1\}}_{p}$. If the boundary is
    contractible, then we can obtain a topologically equivalent complex
    $\Sigma' = \Sigma/Y$, where $Y$ represents the boundary and becomes a ghost
    cell in the double corresponding to $\Sigma'$, as shown by a closed orange circle.
    Blue dots correspond to vertices
    with no associated checks, i.e. $H = \{1\}$.
    On the right from top to bottom are shown smooth and
    rough boundaries respectively. Since the boundary edges in the bottom panel are
    fixed to $H = 1$, we freely remove the boundary edges, hence the term ``rough''.}
  \label{fig:rough_smooth_bdy}
\end{figure}

The first model we will treat is an open surface with $2R$ alternating rough and
smooth boundaries valued in $H_1, \dots, H_R$, as
shown in Fig.~\ref{fig: open boundary QD}. We can use the method shown in Fig.~\ref{fig:rough_smooth_bdy}
to contract each of the $R$ rough boundaries to an $H_i$-valued ghost vertex. This
manifests the $\prod_{i= 1}^R G/H_i$ boundary
symmetry of such a model, corresponding to the ghost checks removed from the
stabilizer group. In the language of quantum error correction, this implies that such non-Abelian doubles host
transversal logical operators at these rough boundaries, an observation
previously made in Refs.~\cite{huang2026hybridlatticesurgerynonclifford, Ellison_2026}. 

Consider first the case where $H_i = \{1\}$ for each $i$. Similar to the previous
discussion, we glue all rough boundaries to a single ghost vertex and obtain a
new CW complex $\Sigma'$, which is homotopy equivalent to a disk with $R$ points
on its boundary identified, which is in turn homotopy equivalent to the union of
$R-1$ circles joined at a point:  $(\mathrm{S}^1)^{\vee (R-1)}$.  Therefore by
Prop.~\ref{prop:codespace_rough}, the codespace $\mathcal C$ is given by
\begin{align}
\mathcal C \cong \C[\mathrm{Hom}(\pi_1(\Sigma'),G)]=\C[G^{R-1}],
\end{align}
i.e., we can encode $R-1$ logical $G$-qudits in the ground-state subspace of the quantum double model on the square lattice with $R$ rough boundaries.

\begin{figure}[t]
  \centering
  \def\svgwidth{0.7\textwidth}
  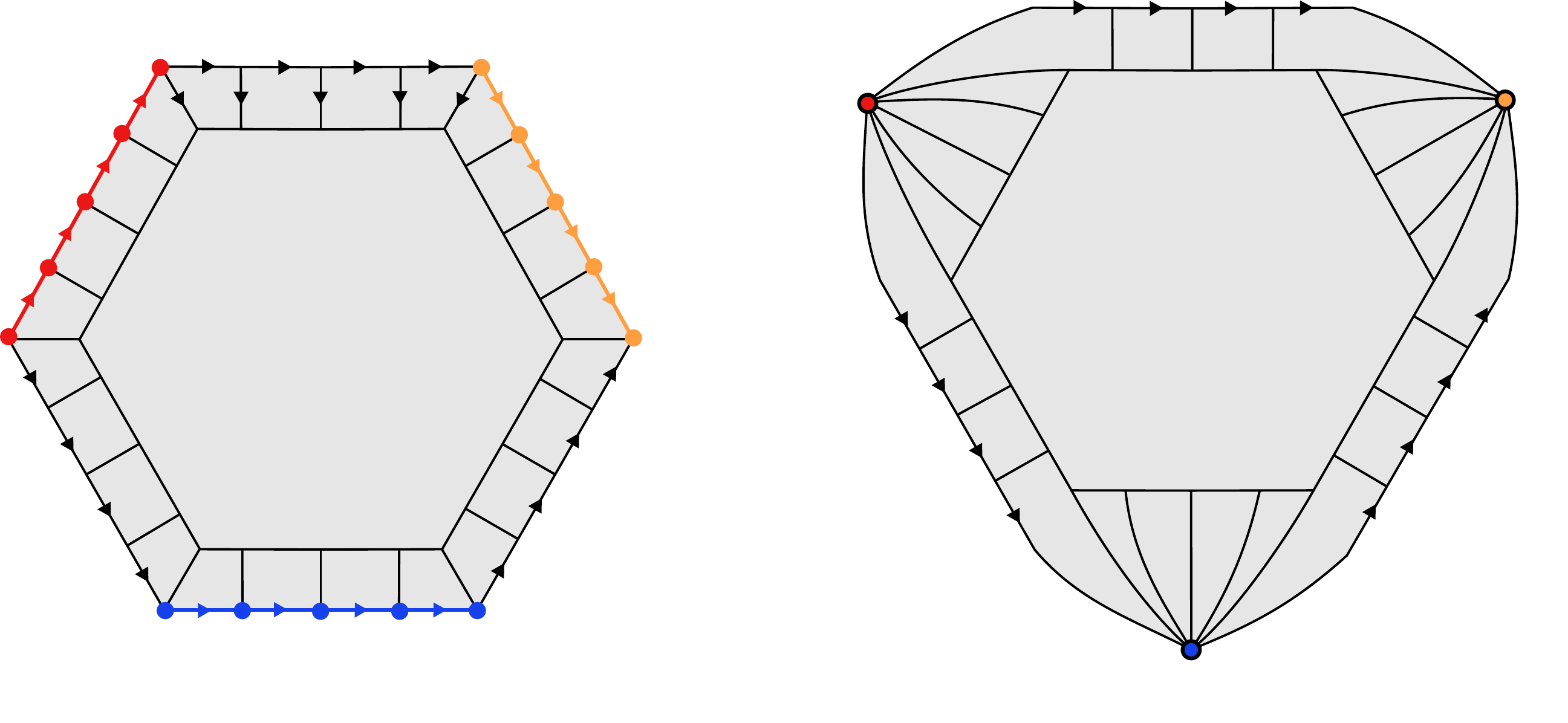
  \caption{
    Quantum double model with $R = 3$ smooth boundaries and rough boundaries
    corresponding to subgroups $H_1, H_2, H_3 \leq G$ indicated by red, blue,
    and orange. The checks on each boundary are given by Eq.~\eqref{eq:Hbdy}.
    In the right panel, we identify each boundary with a single ghost vertex to
    form a rough quantum double [Def.~\ref{def:roughQD}]. This manifests the
    $\prod_i G/H_i$ symmetry of the boundary, or equivalently, the existence of
    transversal logical operators corresponding to the removed checks at the
    ghost vertex.
    }
\label{fig: open boundary QD}
\end{figure}

Next, we can introduce a formulation of holes in the language of ghost vertices.
There are different conventions for defining quantum double boundaries and
holes. Consider a lattice with bulk $\mathfrak B$ and a contractible region
$\mathfrak h$ called a hole, labeled with a subgroup $H \leq G$, as shown in
Fig.~\ref{fig:hole}. Given a region $\Omega$, consider the commuting Hamiltonian
\begin{align}
  \mathcal H_{\Omega}^H = -\sum_{v \in V(\Omega)}\frac{1}{|H|}\sum_{h \in H}X^k_v - \sum_{e \in E(\Omega)}Z^H_e - \sum_{p \in F(\Omega)}Z^{\{1\}}_p
\end{align}
where $V(\Omega), E(\Omega)$, and $F(\Omega)$ are the vertices, edges, and
plaquettes of $\Omega$. The ground space of this Hamiltonian corresponds
simply to restricting the topological order within $\Omega$ from $G$ to $H$.
Then we define the Hamiltonian of the system to be
\begin{align}
  \mathcal H = \mathcal H^G_{\mathfrak B} + \sum_i \mathcal H^{H_i}_{\mathfrak h_i},
\end{align}
where $\mathfrak h_i$ is a hole classified by the subgroup $H_i \leq G$.

As we depict graphically in Fig.~\ref{fig:hole}, we can represent such a hole by
taking a quotient of the complex by $\mathfrak h$ and replacing it with a single $H$-ghost vertex. Because $\mathfrak h$ is assumed to
be contractible, we can gauge-fix $\mathfrak h$ to the identity. Then
the only remaining degree of freedom is multiplying each of the bulk edges
attached to $\mathfrak h$ simultaneously by $g$ up to its coset in $H$, which is exactly a
$H$-valued ghost check, or a transversal logical operator acting on the boundary of $\mathfrak h$. 

The importance of using a hole (left panel of Fig.~\ref{fig:hole}) over simply defining a ghost defect (right panel of Fig.~\ref{fig:hole}) is that the former is manifestly LDPC, whereas the latter is not if the size of the hole grows with $n$, which is required to have $d_X$ also grow with $n$.

\begin{figure}[t]
  \centering
  \def\svgwidth{.55\textwidth}
\begingroup%
  \makeatletter%
  \providecommand\color[2][]{%
    \errmessage{(Inkscape) Color is used for the text in Inkscape, but the package 'color.sty' is not loaded}%
    \renewcommand\color[2][]{}%
  }%
  \providecommand\transparent[1]{%
    \errmessage{(Inkscape) Transparency is used (non-zero) for the text in Inkscape, but the package 'transparent.sty' is not loaded}%
    \renewcommand\transparent[1]{}%
  }%
  \providecommand\rotatebox[2]{#2}%
  \newcommand*\fsize{\dimexpr\f@size pt\relax}%
  \newcommand*\lineheight[1]{\fontsize{\fsize}{#1\fsize}\selectfont}%
  \ifx\svgwidth\undefined%
    \setlength{\unitlength}{1087.03574834bp}%
    \ifx\svgscale\undefined%
      \relax%
    \else%
      \setlength{\unitlength}{\unitlength * \real{\svgscale}}%
    \fi%
  \else%
    \setlength{\unitlength}{\svgwidth}%
  \fi%
  \global\let\svgwidth\undefined%
  \global\let\svgscale\undefined%
  \makeatother%
  \begin{picture}(1,0.48548254)%
    \lineheight{1}%
    \setlength\tabcolsep{0pt}%
    \put(0,0){\includegraphics[width=\unitlength,page=1]{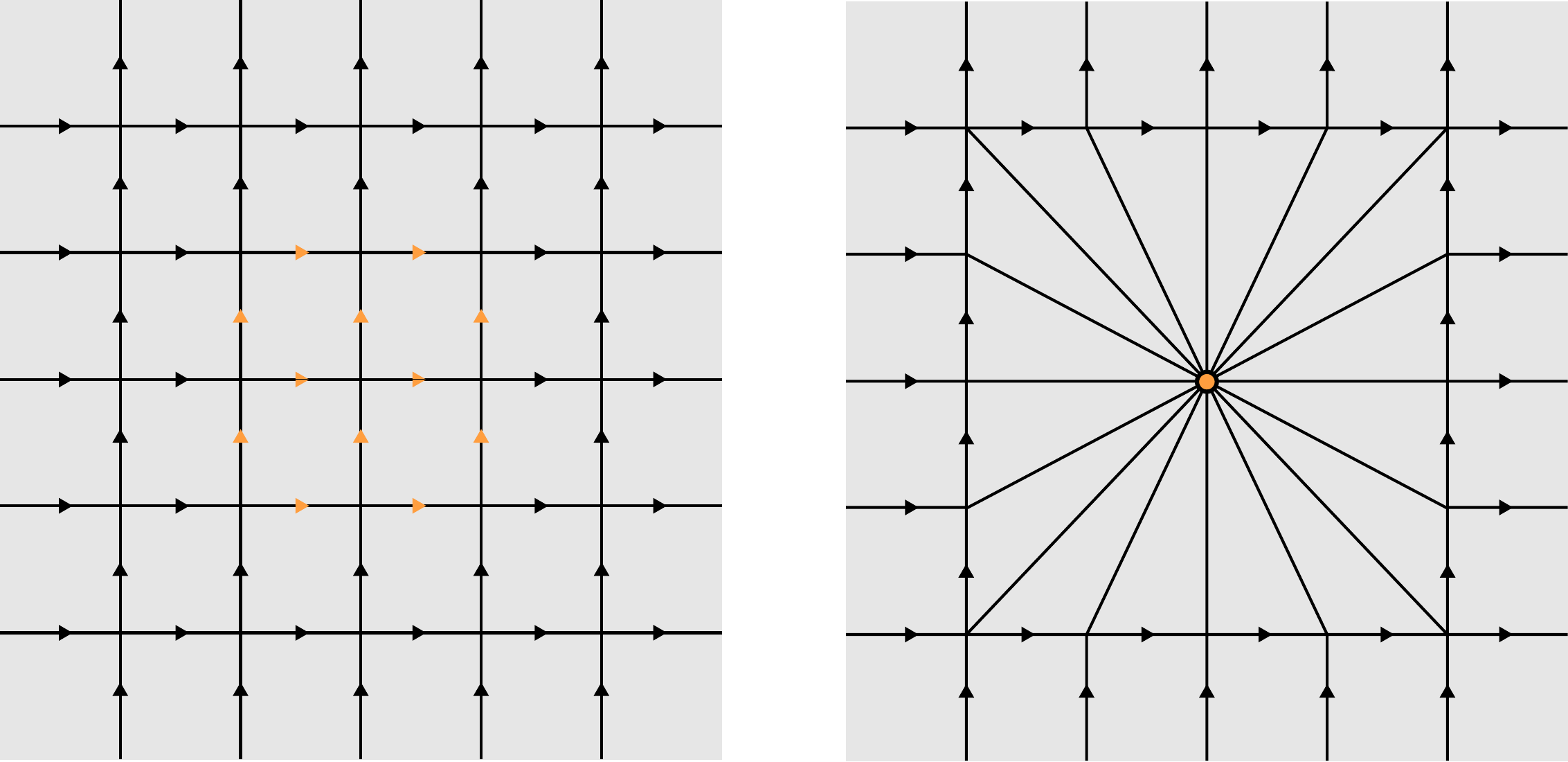}}%
    \put(0.19001048,0.27555786){\color[rgb]{0,0,0}\makebox(0,0)[t]{\lineheight{1.10000002}\smash{\begin{tabular}[t]{c}\(\mathfrak h\)\end{tabular}}}}%
    \put(0.19042632,0.35657194){\color[rgb]{0,0,0}\makebox(0,0)[t]{\lineheight{1.10000002}\smash{\begin{tabular}[t]{c}\(\mathfrak B\)\end{tabular}}}}%
    \put(0,0){\includegraphics[width=\unitlength,page=2]{figures/hole_svg-tex.pdf}}%
  \end{picture}%
\endgroup%

  \caption{Our formulation of rough holes in a quantum double using ghost vertices.
    On the left is depicted a complex $\Sigma = \mathfrak B \sqcup \mathfrak h$ with vertices corresponding to
    $X$-checks, edges corresponding to qudits, and shaded plaquettes
    corresponding to $Z$-checks. The orange edges and vertices correspond to the
    hole $\mathfrak h$, where the edges are restricted to lie in $H$ and the
    gauge transformations at the vertices are also restricted to $H$. On the
    right, taking a quotient of $\Sigma$ with $\mathfrak h$, the entire hole can
    be represented with a single ghost vertex.
  }
  \label{fig:hole}
\end{figure}

To comment on the relationship between this theory of holes and boundaries in
relation to others found in the literature; our boundary construction is the
same as those found in Refs.~\cite{albert2021spinchainsdefectsquantum, QD_boundary}. We do
not adopt the construction of Ref.~\cite{Cong2} for holes, because 
these holes support small logical operators connecting checks at the
boundary and a logical degree of freedom encircling the boundary.

Now that we have seen two examples of interest---open boundaries and
holes---which can be represented by a QD on a CW complex with $H$-valued ghost checks, we
will prove formulas for the code space and logical operators of such models.
Using the gauging interpretation, it is straightforward to generalize our
results from the previous section when $H$ was the trivial subgroup.

\begin{cor}\label{cor:intermediate}
    Given a quantum double code where ghost vertex $y_i\in Y$ is associated with subgroup $H_i$, the codespace is \begin{equation} \label{eq:intermediatecodespace}
        \mathcal{C}= \mathbb{C}[\mathrm{Hom}(\pi_1(\Sigma/Y),G)/(H_1 \times \cdots H_R) ]
    \end{equation}
    where $R=|Y|$, and $H_1\times \cdots H_R$ acts on the codespace as follows (without loss of generality): Given data $(\phi, g_{12},\ldots, g_{1R})\in \mathrm{Hom}(\Sigma, G)\times G^{R-1} = \mathrm{Hom}(\Sigma/Y, G)$, which we can interpret as the holonomy of non-contractible loops on $\Sigma$ rooted at $y_1$ ($\phi$) together with the group element traced along a path from $y_1$ to $y_i$ ($g_{1i}$ for $i=2,\ldots, R$), $\mathbf{h}\in H_1\times \cdots \times H_R$ acts as \begin{equation}\label{eq:codespacegeneral}
        \mathbf{h} \cdot (\phi, g_{12},\ldots, g_{1R}) = (\mathrm{Ad}_{h_1}\circ \phi, h_1 g_{12} h_2^{-1},\ldots, h_1 g_{1R}h_r^{-1}).
    \end{equation}
\end{cor}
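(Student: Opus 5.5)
We reduce to Prop.~\ref{prop:codespace_rough} by treating the $H_i$-valued ghost checks as a residual gauge symmetry acting on the codespace of the fully rough double. Take the CW complex $\Sigma$ with ghost set $Y=\{y_1,\dots,y_R\}$ and first remove \emph{all} ghost checks, i.e.\ set $H_i=\{1\}$. Prop.~\ref{prop:codespace_rough} and its proof then give a basis of that codespace $\mathcal C_0$ labeled by $\mathrm{Hom}(\pi_1(\Sigma/Y),G)\cong \mathrm{Hom}(\pi_1(\Sigma),G)\times G^{R-1}$. Concretely, we fix the spanning forest $T'=\bigsqcup_i T_i$ rooted at the ghost vertices and gauge-fix its edges to $1$. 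We also fix a spanning tree of $\Sigma$ rooted at $y_1$, and use it to choose paths $\gamma_{1i}$ from $y_1$ to $y_i$ together with loops based at $y_1$. The label of a basis state is then $(\phi,g_{12},\dots,g_{1R})$, where $\phi$ records the holonomies of loops based at $y_1$ and $g_{1i}$ is the group element transported along $\gamma_{1i}$. This labeling is gauge invariant under all non-ghost $X$-checks.

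Next we add back the restricted checks $A_{y_i}^{H_i}=\frac{1}{|H_i|}\sum_{h\in H_i}X^h_{y_i}$. Because each $X^h_{y_i}$ commutes with all non-ghost $X$-checks and all $Z$-checks, exactly as in the proof of Prop.~\ref{prop:QDonCW}, it preserves $\mathcal C_0$ and permutes its basis states $\ket{[\vb g]}$. The codespace $\mathcal C$ is therefore the subspace of $\mathcal C_0$ invariant under the group generated by these operators. Checks at distinct ghost vertices act on disjoint edge ends, so they commute, and the generated group is the image of $H_1\times\cdots\times H_R$. Since $\mathcal C_0$ has a permutation basis, the invariant subspace is spanned by uniform superpositions over orbits, which gives $\mathcal C\cong\mathbb C[\mathrm{Hom}(\pi_1(\Sigma/Y),G)/(H_1\times\cdots\times H_R)]$. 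This is the same argument as in Prop.~\ref{prop:codespace}, now applied to the orbit of a finite group acting by permutations.

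It remains to compute the action, Eq.~\eqref{eq:codespacegeneral}, which is the main step. A gauge transformation by $h$ at a vertex $v$ changes the transport along any path $\gamma$ from $u$ to $w$ as $\vb g_\gamma\mapsto h_u\,\vb g_\gamma\, h_w^{-1}$, with the convention that the left multiplier acts at the start and the right multiplier at the end, as in Eq.~\eqref{eq:Xoperatormultiplication}. The transport is unchanged when $v$ is an interior vertex of $\gamma$, since the gauge factors cancel in adjacent pairs as in the proof of Prop.~\ref{prop:QDonCW}. Applying this rule gives three cases. For a loop based at $y_1$ we get $\phi\mapsto \mathrm{Ad}_{h_1}\circ\phi$. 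For $\gamma_{1i}$ we get $g_{1i}\mapsto h_1 g_{1i}h_i^{-1}$. Paths passing through some $y_j$ as an interior point are unaffected.

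One subtlety is that the gauge-fixed representative may cease to be gauge-fixed after the action. This is harmless because the labels are holonomies and transports, which are gauge invariant under the non-ghost checks, so the action on labels is well defined. The remaining care point is the orientation convention: we must check that the left and right multipliers at a ghost vertex enter as $h$ and $h^{-1}$ consistently. This fixes the statement only up to relabeling $h\leftrightarrow h^{-1}$, which is why it is stated ``without loss of generality''. Finally, we verify that the resulting map $H_1\times\cdots\times H_R\to\mathrm{Sym}(\mathrm{Hom}(\pi_1(\Sigma/Y),G))$ is a group action, which follows directly from the formula.
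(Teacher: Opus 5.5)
Your proposal is correct and follows essentially the paper's route. You start from the fully rough code of Prop.~\ref{prop:codespace_rough} and re-add the $H_i$-restricted ghost checks as a permutation action on its basis, which is exactly how the paper mirrors the proof of Cor.~\ref{cor:codespacesmooth}. The paper additionally checks by hand that the formula is consistent with path composition between ghost vertices ($g_{\gamma_3}=g_{\gamma_2}g_{\gamma_1}$); your derivation of the action directly from the physical gauge transformation on all transports makes that consistency automatic.
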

\begin{proof}
    The proof mirrors that of Corollary \ref{cor:codespacesmooth}, which handles the case $r=1$ and $H_1=G$.

    While it might be unclear whether the formula [Eq.~\eqref{eq:codespacegeneral}]
is well defined because of the choice of the ``first'' ghost vertex from which the
logical loops originate, it turns out that there
is no issue with this. To see this, consider Fig.~\ref{fig: open boundary QD}.
Let $\gamma_1, \gamma_2, \gamma_3$ be strings connecting boundaries $1$ and $2$,
$2$ and $3$, and $1$ and $3$ respectively. Due to the plaquette constraints, we
have $g_{\gamma_3} = g_{\gamma_2}g_{\gamma_1}$. Under the equivalence relation,
\begin{equation}
  \vb k \cdot g_{\gamma_1}g_{\gamma_2} = k_1g_{\gamma_1}k_2^{-1}k_2g_{\gamma_2}k_3^{-1} = k_1g_{\gamma_3}k_3^{-1} = \vb k \cdot g_{\gamma_3}.
\end{equation}
This shows that the equivalence relation is well-defined on homotopy classes of
loops in $\Sigma'$.
\end{proof}

We are not aware of an elegant formula for \eqref{eq:intermediatecodespace} in general.   If however one of the ghost vertices has $H_1=1$, then we note the elegant simplification \begin{equation}
    \mathcal{C}= \mathbb{C}[\mathrm{Hom}(\pi_1(\Sigma/Y),G)/(H_1 \times \cdots H_r) ] =\mathbb{C}[\mathrm{Hom}(\pi_1(\Sigma),G)]\otimes \mathbb{C}[G/H_2]\otimes \cdots \otimes \mathbb{C}[G/H_r]. 
\end{equation}
This becomes $\mathcal C \cong \bigotimes_{i=1}^R\C[G/H_i]$ for the surface with
$R$ rough boundaries [Fig.~\ref{fig: open boundary QD}], because in this case
$\pi_1(\Sigma)$ is trivial.

We note that if we take $G=\mathbb{Z}_2$, the codespace can also be interpreted in terms of the ``Majorana defects'' between rough and smooth boundaries. In a rough quantum double with $R$ rough boundaries and $R$ smooth boundaries, there is a Majorana zero mode at each junction between a rough boundary and a smooth boundary. Therefore, there are $2R$ Majorana zero modes ($\gamma_1,\cdots,\gamma_{2R}$) on the boundary, which is equivalent to $R$ physical (complex) fermions. Furthermore, there is a fermion parity even constraint, namely \cite{Sarkar_2024, bravyi_coherent_errors}
\begin{align}
\ii^R\gamma_1\cdots\gamma_{2R}=1,
\end{align}
which leaves only $R-1$ independent physical fermions on the boundary. As a consequence, the codespace dimension is $\mathrm{dim}(\mathcal{C}) = \mathbb{Z}_2^{R-1}$, in agreement with our formula.

\subsection{Logical operators}
Now we will define the logical operators for rough and smooth QD codes on
non-manifold CW complexes. Going forward, we will restrict our attention to the
$H = \{1\}$ case for simplicity when there are ghost vertices, but our formalism
straightforwardly generalizes.

It is convenient to first define the logical operators for a rough quantum
double. 
In a smooth double, invariance under large gauge transformations (see Def.~\ref{def:gauge-theory}) makes
the logical operators take a more complicated form. 
Unlike
Abelian CSS codes, the logical operators cannot be separated into purely
$X$-type and purely $Z$-type in the sense of Defs.~\ref{def:Xcheck} and \ref{def:Zcheck}.

\begin{prop}[Logical operators of a rough quantum double]\label{prop:Zlogical}
  Let $\Sigma$ be a 2D CW complex with a non-empty set of ghost vertices $Y$.
  The union of trees $T' = \bigsqcup_i T_i$ from Prop.~\ref{prop:codespace_rough}
  defines a presentation of $\pi_1(\Sigma/Y)$ with the edges in $\Sigma - T'$
  as generators $\mathcal G$ and the 2-cells as relations.
  A complete set
  of $Z$-logical operators is given by\footnote{In gauge theory, one typically thinks of Wilson loops, i.e. characters of group representations evaluated on holonomies, as being a set of gauge-invariant observables. Indeed, for Abelian groups and for common groups such as $\SU(2)$, they are a complete set of observables. In \cite{Cui:2019lvb}, it is shown that an \textit{outer class automorphism}, i.e. $\phi: G \to G$ such that $\phi(g) \sim g$ for all $g$ but $\phi(g) \ne hgh^{-1}$ for any $h\in G$, provides a transformation between gauge-invariant states that is not detectable by Wilson loops. This is proven for excited states in contractible complexes, however we can see that the modification $\ket{[\mathbf g]} \to \ket{[\phi(g_1), \dots, \phi(g_n)]}$ also acts non-trivially on the ground space in a non-contractible complex. Our construction of logical $Z$ operators avoids this ambiguity.
  }
  \begin{equation} \label{eq:Zlogicalrough}
    Z^{\mathrm{L}}_{\phi, \gamma}\ket{\vb g} \equiv \mathbb I(\vb g_\gamma = \phi([\gamma]))\ket{\vb g}
  \end{equation}
  for each $\phi \in
  \Hom(\pi_1(\Sigma/Y, Y), G)$ and $\gamma \in \mathcal G$.

  Given $\phi_1, \phi_2 \in
  \Hom(\pi_1(\Sigma/Y, Y), G)$ and $\gamma \in \mathcal G$ associated to an edge
  $q_\gamma \in \Sigma - T'$, a complete set of logical operators for the code is given by
  \begin{align}
    X_{\phi_{2}\phi_{1}}^{\mathrm{L}}=\prod_{\gamma\in\mathcal{G}}\overrightarrow{X}_{q_\gamma}^{\left(\phi_{2}(\gamma)\phi_{1}(\gamma)^{-1}\right)^{\mathbf{g}_{\sigma(\gamma)}}}Z_{\phi_{1},\gamma}^{\mathrm{L}}
      \label{eq:logicals}
  \end{align}
  where $\sigma(\gamma)$ is a path connecting the terminal vertex of $q_\gamma$ to $Y$, and where $h^g:=ghg^{-1}$.
\end{prop}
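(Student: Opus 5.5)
The plan is to work entirely in the gauge-fixed basis from the proof of Prop.~\ref{prop:codespace_rough}. There, every codeword $\ket{[\vb g]}$ corresponds bijectively to some $\phi\in\Hom(\pi_1(\Sigma/Y,Y),G)$, via the unique representative $\vb h$ with $h_e=1$ on all edges of $T'$. The generator $\gamma\in\mathcal G$ attached to the edge $q_\gamma\in\Sigma-T'$ then has holonomy $\vb h_\gamma=h_{q_\gamma}=\phi([\gamma])$.

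\textbf{Step 1 ($Z$-logicals).} First I show that $Z^{\mathrm L}_{\phi,\gamma}$ preserves the codespace. It is diagonal. The holonomy $\vb g_\gamma$ around a loop based at the ghost set $Y$ is invariant under every $X$-check, because vertex operators at interior vertices cancel in adjacent pairs along the loop, and the endpoints lie at ghost vertices, which carry no checks. So $Z^{\mathrm L}_{\phi,\gamma}$ commutes with all $A_v$ and with all $Z$-checks. On a codeword $\ket{[\vb g]}$ labelled by $\phi'$, it acts as the scalar $\mathbb I(\phi'([\gamma])=\phi([\gamma]))$, since every term in the superposition has the same holonomy. Completeness follows because a homomorphism is determined by its values on the generators $\mathcal G$. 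The products $\prod_{\gamma}Z^{\mathrm L}_{\phi,\gamma}$ are therefore exactly the rank-one projectors onto the basis codewords $\ket{[\phi]}$, and these span all diagonal logical operators.

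\textbf{Step 2 ($X$-logicals).} Next I show that $X^{\mathrm L}_{\phi_2\phi_1}$ maps $\ket{[\phi_1]}$ to $\ket{[\phi_2]}$ and annihilates every other basis codeword. The projector $\prod_\gamma Z^{\mathrm L}_{\phi_1,\gamma}$ (I read the formula as containing this product) kills every codeword except $\ket{[\phi_1]}$. Consider a representative $\vb g$ in the orbit of $[\phi_1]$. It is related to the gauge-fixed $\vb h$ by a gauge transformation $\{k_v\}$ with $k_y=1$ at ghost vertices, so $g_{q_\gamma}=k_{t(q_\gamma)}h_{q_\gamma}k_{s(q_\gamma)}^{-1}$.

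Along the tree path $\sigma(\gamma)$ from $t(q_\gamma)$ to $Y$ we have $\vb g_{\sigma(\gamma)}=k_{t(q_\gamma)}$ (up to orientation conventions, which I will fix so that this holds). Left-multiplying $q_\gamma$ by $(\phi_2\phi_1^{-1}(\gamma))^{\vb g_{\sigma(\gamma)}}$ then gives
\[
k_t\,\phi_2(\gamma)\phi_1(\gamma)^{-1}\,k_t^{-1}\,k_t\,\phi_1(\gamma)\,k_s^{-1}=k_t\,\phi_2(\gamma)\,k_s^{-1}.
\]
This is precisely the gauge transform, by the same $\{k_v\}$, of the gauge-fixed representative of $\phi_2$. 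The conjugation exponent depends only on tree edges, which the operator leaves untouched, so applying the factors for different generators in any order gives the same result. Hence the operator maps the orbit of $\phi_1$ bijectively onto the orbit of $\phi_2$, sending $\ket{[\phi_1]}\mapsto\ket{[\phi_2]}$ with matching normalization. Here I must also check that the image satisfies the $Z$-checks. This holds because $\phi_2$ is a homomorphism of $\pi_1(\Sigma/Y)$, so it kills every face relator.

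\textbf{Step 3 (completeness).} The operators $X^{\mathrm L}_{\phi_2\phi_1}$ act as the matrix units $\ket{[\phi_2]}\bra{[\phi_1]}$ on $\mathcal C$, and the matrix units span all operators on $\mathcal C=\C[\Hom(\pi_1(\Sigma/Y),G)]$. Taking $\phi_2=\phi_1$ recovers the $Z$-logicals.

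The main obstacle is Step 2: keeping the orientation and basepoint conventions consistent. The path $\sigma(\gamma)$ must lie in the tree component $T_i$, so that the $k$ at its ghost endpoint is trivial, and the left/right multiplication has to match the edge orientation. I would fix these by orienting each $q_\gamma$ so that its terminal vertex is the left-multiplied end, and I would first verify the formula on the single-edge case before doing the general bookkeeping.
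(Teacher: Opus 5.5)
Your proposal is correct and follows essentially the paper's route: gauge-fix on $T'$, use invariance of holonomies along paths ending at ghost vertices to get the $Z$-logicals, and realize the $X$-logicals as left multiplication on $q_\gamma$, conjugated by the tree-path holonomy and composed with the projector $\prod_\gamma Z^{\mathrm L}_{\phi_1,\gamma}$. The paper derives the conjugation by sequentially re-symmetrizing over the vertex checks, whereas you verify it directly: you parametrize the orbit as $k\cdot\mathbf h$ with $k_y=1$ at ghost vertices, telescope $\mathbf g_{\sigma(\gamma)}=k_{t(q_\gamma)}$, and close with a matrix-unit completeness argument, which makes explicit what the paper only sketches.
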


Note that the checks removed at the ghost vertices are included as  logical operators in Eq.~\ref{eq:logicals}, so it is only ever beneficial to create ghost vertices at very high degree vertices.
Importantly, the ghost checks also create logical operators on paths that are not non-contractible loops on $\Sigma$, but are on $\Sigma/Y$.  More transparently, we find $Z$-logical operators defined along paths between any two distinct ghost vertices in $\Sigma$, in addition to those that wind around non-contractible loops on $\Sigma$.
With respect to intersection pairing between $X$- and $Z$-logicals, we can view the new $Z$-logical operators as the loops pairing non-trivially with the removed $X$-checks.
This situation is
illustrated in Fig.~\ref{fig:logicals_ghostcells}.

\begin{figure}
\centering
\def\svgwidth{.55\textwidth}
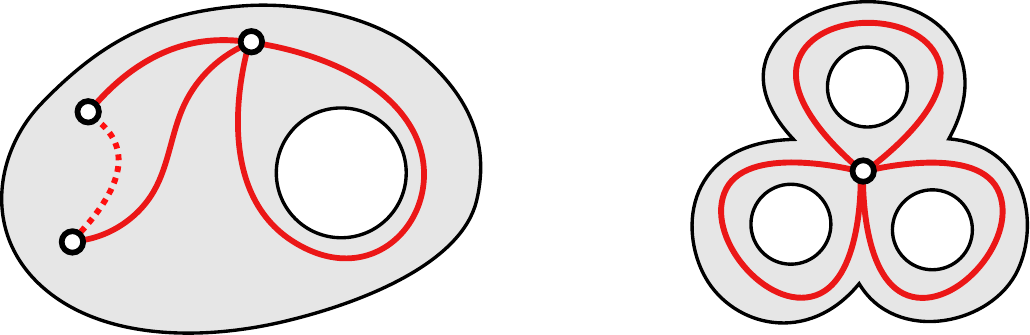
\caption{Illustration of $Z$-logical operators on a topologically non-trivial
  surface with ghost vertices. The underlying CW complex is some cellulation of the
  surfaces. (a) The 2D surface $\Sigma$, shown in light blue, has
  one topologically non-trivial loop winding around the puncture which generates
$\pi_1(\Sigma) \cong \Z$, as illustrated with a red line. The open circles show
 the set of ghost vertices $Y$. The product
of qudits along a loop connecting any two ghost vertices is a $Z$-logical,
shown by the solid red lines, which depict the 3 independent logical operators (the dotted red line is also a logical operator but does not contain additional information). (b) The quotient complex $\Sigma/Y$
deformation retracts onto a wedge-sum of three circles corresponding to the
logical operators shown in panel (a).}
\label{fig:logicals_ghostcells}
\end{figure}

\begin{proof}

  First, we treat the $Z$-logicals. The operator $Z_{\phi, \gamma}$ checks if the holonomy of the loop  
  $\gamma$ is equal to $\phi(\gamma)$, and from our construction of the code
  space,  it is clear that this is a complete set of $Z$ operators. We only need
  to confirm that these operators  are well-defined and commute with all \(X\)-checks.
  Since $\ket{[\vb g]} \in \mathcal C$ means that $\vb g$ represents a flat
  connection (Def.~\ref{def:gauge-theory}), the holonomy $g_{\gamma}$ depends only on the homotopy class $[\gamma]$.
  Since applying any vertex operator $X_{v}^{g}$ does not change the
  holonomy, $Z_{\phi, \gamma}$ also does not depend on the representative of
  $[\vb g]$. This also implies that $Z_{\phi, \gamma}$ commutes with $X_{v}^g$.

  Now we will construct the remaining logicals. Let $\Sigma' \equiv
  \Sigma/Y$~\eqref{eq:quotient}.
  Given any state $\ket{[\vb g]}$, we will first
  gauge-fix this state to identity on $T'$ as in Prop.~\ref{prop:codespace_rough}.
  By construction, $\Sigma'/T'$ is a bouquet graph of circles
  $\bigvee_{\alpha}\textnormal{S}^1_{\alpha}$ with attached 2-cells. 
  The fundamental group $\pi_1(\Sigma')$ has the group
  presentation $\pi_1(\Sigma') = \langle  ~F~|~R~
  \rangle$, where $F$ is the alphabet of the circles in the bouquet, and where $R$ is a set of relations generated by the attaching maps.\footnote{While this might not be the optimal presentation of $\pi_1$, in general there is no canonical presentation of a group, and even discerning whether two finite presentations represent isomorphic groups is an undecidable problem.}
  Suppose that $\phi_1, \phi_2:\pi_1(\Sigma', Y) \to G$ are two homomorphisms.
  Left-multiplication on the edge $q \in \Sigma - T'$ representing $\gamma$ by
  $\phi_2([\gamma])\phi_1([\gamma])^{-1}$ maps $\ket{\phi_1}$ to $\ket{\phi_2}$,
  but this operation does not commute with $X$-checks, so we consider
  re-adding the checks sequentially, beginning with the single checks that act by left-multiplication on $q$, i.e., associated to the outgoing vertex $v_q$.
The edge $q$ corresponds to a loop $\gamma$. Symmetrizing over the checks $X^h_{v_q}$ yields the map
  \begin{equation}
  \ket{q', q} = \ket{1,\phi_1(\gamma)} \mapsto \frac{1}{|G|}\sum_k\ket{k^{-1}, k\phi_1(\gamma)},
\end{equation}
where $q'$ is any outgoing edge from $q$ in $T'$.
We assume WLOG that both edges are positively oriented from the vertex.
Then the logical operator $X^L_{\phi_2\phi_1}$ maps
\begin{equation}
  \sum_k\ket{k^{-1}, k\phi_1(\gamma)} \mapsto \sum_k\ket{k^{-1}, kh\phi_1(\gamma)} = \sum_{k}\ket{k^{-1}, h^{k}k\phi_1(\gamma)},
\end{equation}
where $h = \phi_2(\gamma)\phi_1(\gamma)^{-1}$, and where $k$ is any edge contained in
$T'$ and connected to $v_q$. Continuing this process sequentially, we obtain $k$ as the oriented product of
edges in a path $\sigma(q)$ through $T'$ connecting the vertex $v_q$ to $Y$. We
repeat this for each $q$. Define the resulting operator $X_{\phi_2\phi_1}$,
which explicitly acts on each edge $q \in \Sigma'-T'$ by left-multiplication by
$(\phi_2(\gamma)\overline \phi_1(\gamma))^{g_{\sigma(q)}}$.

We have constructed $X_{\phi_2\phi_1}$ such that $X_{\phi_2\phi_1}\ket{\phi_1}= \ket{\phi_2}$, but unless $\pi_1$
is a free group, $X_{\phi_2\phi_1}$ will in general not preserve the codespace when acting on other states. Therefore,
we consider the family of logical operators $X_{\phi_2\phi_1}^{\mathrm{L}} =
X_{\phi_2\phi_1}\prod_{\gamma \in \mathcal G}Z^{\mathrm L}_{\phi_1, \gamma}$. This forms a complete basis
for logical operators on the code space of the QD code.
\end{proof}

We note that a systematic way of constructing the logical operators of rough quantum double codes is proposed as aforementioned. It is not necessarily the most convenient way to construct the logical operators. In fact, for rough quantum double codes, we can construct a transversal $X$-logical operator at any ghost vertex \cite{huang2026hybridlatticesurgerynonclifford, Ellison_2026}: consider the product of $X$-type operators among all edges connecting to the pertinent ghost vertex; in fact, this operator is exactly the missing $X$ check at the ghost vertex, which commutes with all existing $Z$ checks. Furthermore, this operator is not in the stabilizer group, hence it must be a logical $X$ operator on the rough boundary. 

\begin{cor}[Logical operators of a smooth double]
If $\Sigma$ is the CW complex corresponding to a smooth quantum double,
$\mathcal G$ is a generating set of $\pi_1(\Sigma, v_0)$ for some basepoint
$v_0$, then let $Z^{\mathrm L}_{\phi}, X^{\mathrm L}_{\phi_2\phi_1}$ be the
logical operators of the double with a ghost vertex at $v_0$.

A complete set of $Z$-logicals of the smooth double are then given by
\begin{equation}
\widetilde Z_{\phi}^{\mathrm {L}} \equiv \frac{1}{|G|}\sum_{g \in G}\prod_{\gamma \in \mathcal G}Z_{\phi_1^g, \gamma}^{\mathrm L}, \label{eq:Zlogicalsmooth}
\end{equation}
and more generally, a complete set of logical operators is
\begin{equation}
\widetilde X_{\phi_2\phi_1}^{\mathrm{L}} \equiv \frac{1}{|G|}\sum_{g \in G}X^{\mathrm L}_{\phi_2^g \phi_1^g}
\label{eq:logicalssmooth}
\end{equation}
\end{cor}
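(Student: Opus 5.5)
The plan is to obtain the smooth double from the rough double with a single ghost vertex at $v_0$ by projecting onto the $G$-invariant sector of the one missing check. Concretely, let $\mathcal C_{\mathrm{r}}$ be the codespace of the rough double with $Y=\{v_0\}$, and let $P_{v_0}=\frac{1}{|G|}\sum_{g}X^g_{v_0}$ be the projector for the removed $X$-check. By Prop.~\ref{prop:codespace_rough}, $\mathcal C_{\mathrm r}\cong \C[\Hom(\pi_1(\Sigma,v_0),G)]$. Since all other checks commute with $X^g_{v_0}$, the smooth codespace is $\mathcal C=P_{v_0}\mathcal C_{\mathrm r}$.

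As in the proof of Cor.~\ref{cor:codespacesmooth}, I will first verify that $X^g_{v_0}$ acts on the basis $\ket{\phi}$ of $\mathcal C_{\mathrm r}$ by $\ket{\phi}\mapsto\ket{\phi^g}$. Any holonomy based at $v_0$ is conjugated by $g$; after re-gauge-fixing the tree $T'$, the state is exactly the $\phi^g$ basis state. It follows that $P_{v_0}\ket{\phi}=\frac{1}{|G|}\sum_g\ket{\phi^g}$, and these orbit states form a basis of $\mathcal C$ labelled by $\Hom/\Ad_G$.

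With this in hand, I will conjugate the rough logicals. Prop.~\ref{prop:Zlogical} shows that $X^{\mathrm L}_{\phi_2\phi_1}$ acts on $\mathcal C_{\mathrm r}$ as $\ket{\phi_1}\bra{\phi_1}\mapsto$ the map $\ket{\phi_1}\mapsto\ket{\phi_2}$, and annihilates every other basis state; this uses the projector $\prod_\gamma Z^{\mathrm L}_{\phi_1,\gamma}$. Likewise $\prod_\gamma Z^{\mathrm L}_{\phi,\gamma}=\ket{\phi}\bra{\phi}$ on $\mathcal C_{\mathrm r}$, since the generators $\mathcal G$ determine $\phi$. Next I will check the covariance relation $X^g_{v_0}X^{\mathrm L}_{\phi_2\phi_1}(X^g_{v_0})^{-1}=X^{\mathrm L}_{\phi_2^g\phi_1^g}$ restricted to $\mathcal C_{\mathrm r}$, which follows from the matrix-unit description. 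The averaged operator $\widetilde X^{\mathrm L}_{\phi_2\phi_1}$ is then the twirl of a matrix unit, so it commutes with $P_{v_0}$ and preserves $\mathcal C$. Because it commutes with all remaining checks, it is a logical operator. Restricted to $\mathcal C$, it equals $\frac{1}{|G|}\sum_g\ket{\phi_2^g}\bra{\phi_1^g}$.

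For completeness, I will show that these operators span $\mathrm{End}(\mathcal C)$. Decompose $\mathcal C_{\mathrm r}$ into $G$-orbits $O_\phi$; $\mathcal C$ is spanned by the orbit sums $\ket{[\phi]}$. For two orbits $[\phi_1]$ and $[\phi_2]$, the projection $P_{v_0}\widetilde X^{\mathrm L}_{\phi_2\phi_1}P_{v_0}$ is proportional to $\ket{[\phi_2]}\bra{[\phi_1]}$ with a nonzero coefficient: expanding gives $\frac{1}{|G|}\sum_g\ket{\phi_2^g}\bra{\phi_1^g}$ sandwiched by orbit sums, which is a positive multiple of $|O|^{-1}$ factors. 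This yields all matrix units, so the set is complete. The $Z$ case, Eq.~\eqref{eq:Zlogicalsmooth}, is the special case $\phi_1=\phi_2$, giving the projectors onto $\ket{[\phi]}$.

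The main obstacle is the covariance relation. $X^{\mathrm L}_{\phi_2\phi_1}$ is defined using tree-dependent dressings $\mathbf g_{\sigma(\gamma)}$, and $X^g_{v_0}$ moves the state out of the gauge-fixed representative. I would handle this by working only on $\mathcal C_{\mathrm r}$, where both operators are determined by their action on the basis $\ket{\phi}$. That reduces the check to the basis-level statement proved above, rather than requiring an operator identity on the full Hilbert space.
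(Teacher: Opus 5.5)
Your proposal is correct and follows the paper's route: place a ghost vertex at $v_0$, apply Prop.~\ref{prop:Zlogical} to the rough double, then restore the missing check by averaging over its conjugation action $\ket{\phi}\mapsto\ket{\phi^g}$ on $\mathcal C_{\mathrm r}$. You make explicit what the paper only asserts (the covariance $X^g_{v_0}X^{\mathrm L}_{\phi_2\phi_1}(X^g_{v_0})^{-1}=X^{\mathrm L}_{\phi_2^g\phi_1^g}$, the commutation of the twirl with $P_{v_0}$, and completeness via matrix units between orbit sums), with one harmless caveat: $\widetilde Z^{\mathrm L}_\phi$ restricts to $|O_\phi|^{-1}$ times the projector onto $\ket{[\phi]}$, not the projector itself.
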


\begin{proof}
  After removing a single $X$-check from the set of $X$-stabilizers to create a ghost vertex, we can apply Prop.~\ref{prop:Zlogical} to construct the set of logical operators. 
  Note that when there is only one ghost vertex, there are no additional non-contractible loops on the quotient complex.
  Adding in the single removed ghost check is equivalent to symmetrizing these logicals over the large gauge transformations, i.e. conjugation of all holonomies by $g$ for each $g \in G$.
   The operators constructed this way are are well-defined on the code
  space and commute with the checks, and so they are logical operators.
\end{proof}

Later on, we will use a systolic argument to bound the $Z$-distance $d_Z$ of quantum double codes. We will find that while $d_X$ can be very large, even $\mathrm O(n)$, the $Z$-distance is fundamentally restricted by graph theory, as the following corollary shows.

\begin{cor}\label{cor:dZloop}
    Every quantum double code is equivalent to one drawn on a CW complex $\Sigma$ (possibly with ghost vertices $Y$), where 
    each non-contractible loop supports a $Z$-logical operator. In such a complex, $d_Z$ is the length of the smallest non-contractible loop in the CW complex $\Sigma/Y$.
\end{cor}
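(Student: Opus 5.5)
We would prove Corollary~\ref{cor:dZloop} in two halves: first a normalization of the complex, then a matching upper and lower bound on $d_Z$.

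\textbf{Normalization.} Start from the CW complex given by Prop.~\ref{prop:QDonCW} and Prop.~\ref{prop:roughQDonCW}. The issue is that a non-contractible loop in $\Sigma/Y$ might carry trivial holonomy for every flat $G$-connection. This happens exactly when its class lies in the kernel of every homomorphism $\pi_1(\Sigma/Y)\to G$. In that case we modify $\Sigma$ without changing the code. For each such loop $\gamma$ we glue an additional 2-cell along $\gamma$, i.e.\ add the $Z$-check $Z^{\{1\}}_\gamma$. This check acts as the identity on $\mathcal C$, because every codeword is a superposition of flat connections whose holonomy around $\gamma$ is already trivial.

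By Prop.~\ref{prop:codespace_rough}, $\mathrm{Hom}(\pi_1(\Sigma/Y),G)$ is unchanged, since the new relations are satisfied by all homomorphisms. Hence the codespace is unchanged. The stabilized subspace is also unchanged: the added checks commute with the $X$-checks by the argument in Prop.~\ref{prop:QDonCW}. We call the codes equivalent in this sense.

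After adding all such cells (at most finitely many classes matter), the following holds. Every non-contractible loop has some $\phi\in\mathrm{Hom}(\pi_1(\Sigma/Y),G)$ with $\phi([\gamma])\neq 1$. It therefore supports the nontrivial $Z$-logical $Z^{\mathrm L}_{\phi,\gamma}$ of Eq.~\eqref{eq:Zlogicalrough}. For a smooth double we use the conjugation-averaged version of Eq.~\eqref{eq:Zlogicalsmooth}.

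\textbf{Upper bound on $d_Z$.} Take a shortest non-contractible loop $\gamma$ of length $\ell$ in $\Sigma/Y$. The operator $\mathbb I(\mathbf g_\gamma = 1)$ is a $Z$-type operator of weight $\ell$. It distinguishes the codewords $|\phi\rangle$ with $\phi([\gamma])=1$ from those with $\phi([\gamma])\neq1$, and after normalization both kinds exist. So it violates Eq.~\eqref{eq:KL}, giving $d_Z\le \ell$.

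\textbf{Lower bound on $d_Z$.} Let $\mathcal O_Z=\mathbb I(g_{q_1}^{a_1}\cdots g_{q_j}^{a_j}\in K)$ act on fewer than $\ell$ distinct edges $S$. We must show $\langle[\mathbf g]|\mathcal O_Z|[\mathbf g']\rangle$ has the form $c\,\delta$. Off-diagonal terms vanish because $\mathcal O_Z$ is diagonal and the codewords have disjoint supports.

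For the diagonal terms, consider the subcomplex spanned by $S$, together with the vertices it touches and $Y$ collapsed. Every loop in it has length below $\ell$, so it is contractible in $\Sigma/Y$. The plan is then to gauge-fix along a spanning forest of this subcomplex extended to a spanning tree $T'$ of $\Sigma/Y$ containing as many edges of $S$ as possible. Edges of $S$ outside the tree close loops, and since those loops are contractible, flatness pins the holonomies along them in terms of tree edges. The claim to establish is that the joint distribution of $(g_q)_{q\in S}$, under a uniform superposition over a gauge class, is independent of the class.

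Concretely, the gauge orbit acts on the edges in $S$ by independent vertex transformations. Edges in $S$ meeting a ghost vertex or the base point need care, and here we would use that $S$ contains no path between two distinct ghost vertices, since such a path would be a non-contractible loop in $\Sigma/Y$. Consequently $(g_q)_{q\in S}$ is uniformly distributed over the flat assignments of the forest, with a law that does not depend on $\phi$.

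\textbf{Main obstacle.} This lower-bound step, and in particular the smooth case, is the hardest part. There, global conjugation interacts with a tree that reaches the base point, and we would handle it by placing the base point off $S$'s components or by averaging over $\mathrm{Ad}_G$.
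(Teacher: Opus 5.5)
Your proposal is correct and follows the same route as the paper. You glue 2-cells along non-contractible loops whose holonomy is trivial under every homomorphism to $G$. You get $d_Z\le\ell$ from the holonomy projector on a shortest non-contractible loop of $\Sigma/Y$, and $d_Z\ge\ell$ because gauge transformations randomize any set of fewer than $\ell$ edges.

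Your lower-bound step is actually more careful than the paper's, which simply places the $m$ edges in a spanning tree; that is impossible if they contain a short contractible cycle. Your version works: take a spanning forest of the support, root each component at its at most one ghost vertex, and let flatness fix the remaining edges. The smooth case then needs no special handling, since the full gauge group $G^{V}$ already acts freely and transitively on the edges of each rooted component. One wording fix: replace ``every loop has length below $\ell$'' with ``every simple cycle in the support has length at most $|S|<\ell$, and these cycles generate its fundamental group.''
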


\begin{proof}
    An operator of the form of Eqs.~\eqref{eq:Zlogicalrough} or \eqref{eq:Zlogicalsmooth} restricted to a single non-contractible loop violates the Knill-Laflamme condition [Eq.~\eqref{eq:KL}] and is therefore a $Z$-logical operator, unless the holonomy around this non-contractible loop is forced to be trivial (a trivial holonomy solution where $g_j=1$ always is part of a codeword, by construction).  If it is forced to be trivial, then the codespace matches that of a quantum double code where we add a 2-cell glued around this non-contractible loop, turning it into a contractable loop.   This means that there exists a CW complex such that every non-contractible loop admits a logical operator.\footnote{A simple illustration of this phenomenon is to consider a code with a single qudit ($n=1$) defined on 
    a group $G$ such that a prime $p$ does not divide the order of $G$, with a 2-cell glued to enforce holonomy $g^p=1$. 
    The loop $g$ is non-contractible on the CW complex, however on $G$ the only solution to the constraint is $g=1$.  We can therefore freely add another 2-cell enforcing $g=1$, making the $g$ loop contractible.}
    
    In contrast, any operator acting on $m<d_Z$ qudits must obey Eq.~\eqref{eq:KL}: using $X$-checks (gauge transformations) we can arbitrarily change the simultaneous state of all $m$ qudits, as we can always place these $m$ edges in the CW complex into a spanning tree.  Combining these two observations together we deduce that $d_Z$ is exactly the length of the smallest non-contractible loop.
\end{proof}

As an example, we will show in detail how Eq.~\ref{eq:logicalssmooth} reproduces the closed ribbon
operators on the torus quantum double.
\begin{figure}
  \centering
  \def\svgwidth{.35\textwidth}
  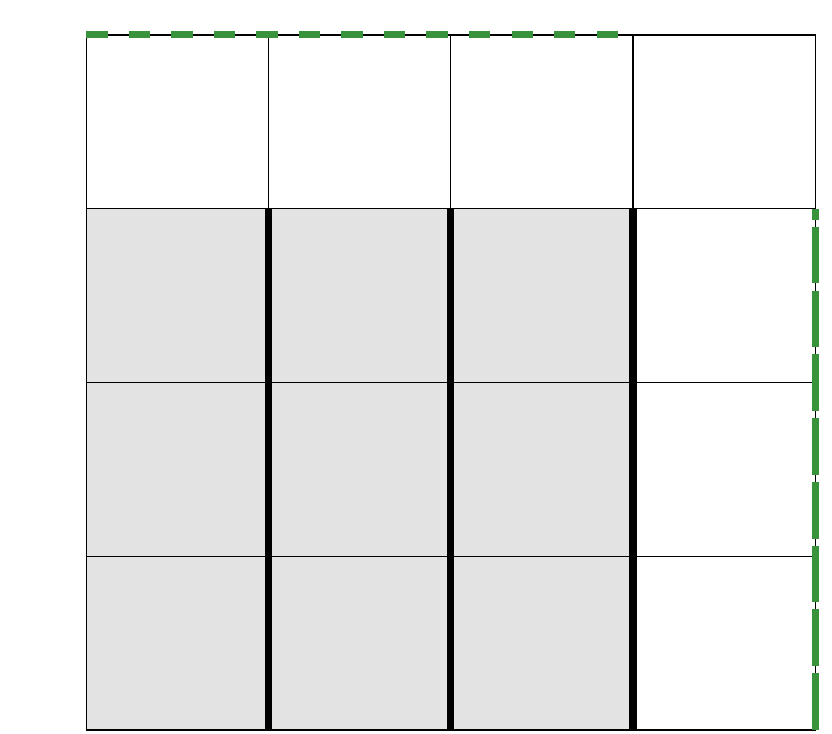
  \caption{
  Example showing the derivation of the standard logical ribbon operators on a cellulation of a torus.
  The green edges show the spanning tree, with the open circle at the
    root $v_0$. The dotted lines show the edges that are identified through the
    periodic boundaries. 
    After gauge-fixing $T$ to $\ket{\vb 1}$, the $Z$-checks also fix additional edges to 1, which are shown in black. There is a complete set of logical operators which is not supported on the edges in $T$ or the edges colored black, and this region is shaded gray.
    The paths $\gamma_{2,3,4}$ (in blue) label paths through $T$ based at $v_0$ from each of
    the edges which are left-multiplied by $x^{g_\gamma} \neq e$. The labels $a_1, \dots, a_4$ and $b_1, \dots,
    b_4$ label the generators of the fundamental group.}
  \label{fig:torus}
\end{figure}
The fundamental group of the cellulation of the torus in Fig.~\ref{fig:torus} has the presentation
\begin{align}
  \langle a_1, \dots, a_4, b_1, \dots, b_4 \ | \ a_na_{n+1}^{-1}, b_nb_{n+1}^{-1}, [a_1,b_1] \rangle \cong \mathbb Z^2
\end{align}
In this case, it is obvious from the presentation that $\pi_1(\Sigma)$ has two
generators, $a$ and $b$, with $a = a_1=a_2=a_3=a_4$ and $b = b_1=b_2=b_3=b_4$,
and the fundamental group is isomorphic to $\mathbb Z^2$. Therefore,
$\Hom(\pi_1(\Sigma), G)$ is in bijection with pairs $(a, b)$ such that $b \in
Z(a)$, where $Z(x)$ is the centralizer of $x$.
In particular, $a$ can be multiplied by $x$ for any $x \in Z(b)$. On the
complex with all $X$-checks removed, the logical operator
\begin{equation}
X_{a \mapsto xa} \equiv \sum_{(a, b), b \in Z(x)}X^{L}_{\phi_{(xa, b)} \phi_{(a,b)}}~,
\end{equation}
is represented by
\begin{equation}
  \sum_{z \in Z(x)}\prod_i\overrightarrow X_{a_i}(x)\mathbb{I}(b_4 = z)~.
\end{equation}
Gauging the $X$-checks gives the operator
\begin{align}
  \frac{1}{|G|}\sum_{k \in G}\sum_{z \in Z(x^k)}\prod_i \overrightarrow X_{a_i}(x^{g_{\gamma_i}k})\delta(g_{\sigma}^k = z)
  = \frac{|C(x)|}{|G|}\sum_{c \in C(x)}\sum_{z \in Z(c)}\prod_i \overrightarrow X_{a_i}(c^{g_{\gamma_i}})\delta(g_{\sigma} = z) = F^{(C(x), \rm{triv})}~,
\end{align}
where $\sigma$ is the loop based at $v_0$ passing through $b_4$, or
equivalently, the loop encircling the upper handle of the torus, and $C(x)$ is the conjugacy class of $x$. The product runs over the edges labeled $a_i$ in Fig.~\ref{fig:torus}. This is the standard closed magnetic anyon ribbon operator $F^{(C(x), \rm{triv})}$ (transforming in the trivial representation of $Z(x)$) \cite{Kitaev_2003, SWSSB_nonabelian}, 
and other representations would be obtained by choosing different weights in the sum over $b \in Z(x)$. This also illustrates how the ribbon operators of a general CW quantum double can be much more complicated than those for a cellulation of an orientable surface; the fundamental group of a genus-$g$ orientable surface $M$ has the presentation 
\begin{equation}\label{eq:genusg}
\langle a_1, b_1, \dots, a_g, b_g \ | \ [a_1, b_1] \dots [a_g, b_g]\rangle,
\end{equation}

In the condensed matter literature, the ribbon operators, which are non-unitary, are often referred to as non-invertible symmetries. When $G$ is Abelian, the ribbon operators are invertible, corresponding to strings of generalized $X$ and $Z$ operators. Thus, the presence of non-invertible symmetries is another feature which distinguishes quantum doubles on non-Abelian $G$.

\begin{rmk}[Non-invertible symmetries]\label{rmk:noninvertible}
    For a smooth quantum double code on CW complex $\Sigma$, pick any non-contractible loop $\gamma=g_{i_1}\cdots g_{i_p}$.  Analogously to \eqref{eq:Zlogicalsmooth}, we can define $Z$-logical operators violating the Knill-Laflamme condition~\eqref{eq:KL} associated to holonomy around $\gamma$: 
    \begin{equation}
   Z_{\gamma}^{\Gamma}|\mathbf{g}\rangle:=\mathrm{Tr}\left[Z^{\Gamma}(g_{i_{1}})\cdots Z^{\Gamma}(g_{i_{p}})\right]|\mathbf{g}\rangle~,
    \end{equation} 
    which form a $\mathrm{Rep}(G)$ fusion category \cite{Fechisin_2025} \begin{equation} \label{eq:fusioncategory}
        Z^\Gamma_\gamma Z^{\Gamma^\prime}_{\gamma}=\sum_{\Gamma^{\prime\prime}\in \mathrm{Rep}(G)} N^{\Gamma^{\prime\prime}}_{\Gamma \Gamma^\prime } Z^{\Gamma^{\prime\prime}}_\gamma
    \end{equation}
    with coefficients $N^{\Gamma^{\prime\prime}}_{\Gamma\Gamma^\prime}$ denoting the multiplicity of irrep $\Gamma^{\prime\prime}$ in the composite irrep $\Gamma \otimes \Gamma^\prime$.  
    It is manifest from the form of \eqref{eq:fusioncategory} that the $Z^\Gamma_\gamma$ are not unitary operators when $G$ is non-Abelian, so they are called non-invertible symmetries in the physics literature \cite{schafernameki2023ictplecturesnoninvertiblegeneralized, shao2024whatsundonetasilectures, Luo_2024, thorngren2019fusioncategorysymmetryi, thorngren2021fusioncategorysymmetryii, Choi_2022, Bhardwaj_2022, zhang2023anomalies11dcategoricalsymmetries}.   
    
     There are (at least) two reasons why these non-invertible symmetries are interesting.  Firstly, each term in the Hamiltonian $H$ commutes with $Z^\Gamma_\gamma$, so the symmetry is enforced locally.  Secondly, there is generically no other family of unitary operators acting on as few qudits as $Z^\Gamma_\gamma$ that represents such a symmetry.  It is therefore more natural to talk about the non-invertible symmetry $Z^\Gamma_\gamma$ than to artificially build unitary operators that take one codeword into another.
\end{rmk}

\subsection{Rate and distance bounds on CW quantum doubles}

Of particular interest in coding theory are (families of) asymptotically good codes where the \textit{code rate} $k/n$ and relative distances, $d_X/n$ and $d_Z/n$, remain separated from zero as $n\rightarrow\infty$.  
Abelian CSS codes of this kind have been found \cite{panteleev2022asymptotically,leverrier2022quantum,dinur2023good}, and so by definition group qudit codes can also be asymptotically good.  
We are interested in designing group CSS codes on \(n\) group-qudits (see Def.~\ref{def:CSS}) with effective number of logical qudits $k$ (a.k.a. code log-dimension), and with distances $d_X$ and $d_Z$ as large as possible, for more generic groups $G$.
Here, we derive results on optimal code parameters for quantum double codes on CW complexes that we developed in Sec.~\ref{sec:QD}.  
In Cor.~\ref{cor:dZloop}, we remarked that there is a connection between the $Z$-distance of a QD code on a CW complex and the systole, or shortest-length cycle in the graph. To illustrate how this constrains $d_Z$, we introduce the following Moore bound, proven in Ref.~\cite{alon}.

 \begin{thm}[Moore bound \cite{alon}] \label{thm:alon}
        Given a graph with $N$ vertices, where every vertex has degree $\ge 2$, if the average degree $K$ obeys $K>2$, then the size of the smallest cycle in the graph (the girth $g$) obeys \begin{equation}
            g < \frac{2}{\log (K-1)}\log \left(1+\frac{K-2}{K}N \right)+2. \label{eq:alon}
        \end{equation}
    \end{thm}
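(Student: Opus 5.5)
The plan is to prove this Moore bound (the Alon--Hoory--Linial irregular Moore bound) by counting non-backtracking walks. I take $g$ to be the girth and set $r=\lfloor (g-1)/2\rfloor$. Since every vertex has degree at least $2$, the non-backtracking random walk is well defined. From a vertex it moves to a uniformly random neighbor. At every later step it moves to a uniformly random neighbor other than the one it just came from.

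The key observation concerns walks of length at most $r$. Any non-backtracking walk of length $\ell\le r$ starting at $v$ traces a path in the ball of radius $r$ around $v$. That ball is a tree, because a cycle inside it would have length at most $2r+1\le g-1<g$. Therefore distinct non-backtracking walks from $v$ of length $\le r$ end at distinct vertices, and this gives a lower bound on $N$ by counting them.

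The difficulty is that the degrees are irregular. Naively, the number of walks from a single vertex depends on local degrees, which could be as small as $2$. The Alon--Hoory--Linial device is to average over the starting directed edge with the stationary measure of the non-backtracking walk. This measure is uniform on the $NK$ directed edges. Concretely:
\begin{enumerate}
\item For a fixed starting vertex, the probability of a specific endpoint path of length $\ell$ is $1/\prod(\text{degrees}-1)$ along the path, with a factor $1/d(v)$ at the start.
\item By the tree property, these probabilities sum to $1$ and the endpoints are distinct. Hence, over walks $P$ of length $\le r$ from $v$, the count $\sum_P 1 = \sum_P \Pr(P)\cdot \Pr(P)^{-1}$ is at most $N$.
\item Apply Jensen (AM--GM) to $\log \Pr(P)^{-1}$, averaging over $v$ weighted by degree. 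Stationarity gives
\[
\mathbb E\log\Pr(P)^{-1}=\log d+(\ell-1)\,\mathbb E_{e}\log(d(e)-1),
\]
with $\mathbb E_e$ taken over the uniform directed edge.
\item Convexity of $x\mapsto x\log(x-1)$ (Hoory's lemma) shows that the edge-biased average $\frac{1}{NK}\sum_v d(v)\log(d(v)-1)$ is at least $\log(K-1)$.
\end{enumerate}

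Summing the resulting per-radius bounds yields, in the even and odd girth cases, a bound of the form
\[
N\ \ge\ 1+K\sum_{i=0}^{r-1}(K-1)^i .
\]
The odd case gives roughly $1+K\frac{(K-1)^r-1}{K-2}$. The even case uses the tree rooted at an edge and gives about $2\frac{(K-1)^r-1}{K-2}$. In either case this rearranges to
\[
(K-1)^{r}\ \le\ 1+\frac{K-2}{K}N \quad\text{(up to the parity constant)}.
\]
Taking logarithms gives $r\le \log(1+\frac{K-2}{K}N)/\log(K-1)$. Finally, $g\le 2r+2$ together with a strictness check at the boundary gives the stated $g<\frac{2}{\log(K-1)}\log(1+\frac{K-2}{K}N)+2$.

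I expect the main obstacle to be step~3 together with the convexity inequality in step~4. Passing from the vertex average of degrees to the geometric growth rate $K-1$ requires weighting by directed edges and an exact use of stationarity of the non-backtracking chain. I would follow Hoory's argument verbatim there. Since the paper simply cites this result from Ref.~\cite{alon}, it is reasonable to treat it as imported and only sketch the argument.
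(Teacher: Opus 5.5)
Your proposal is correct, but it follows a genuinely different route from the paper. The paper proves only the $K$-regular case. There the ball of radius $\lceil g/2\rceil-1$ about any vertex is a $K$-regular tree, so counting its vertices and inverting gives the bound. The irregular statement is simply deferred to Alon--Hoory--Linial.

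You instead reconstruct the Alon--Hoory--Linial argument itself. You use the same radius, since $\lfloor (g-1)/2\rfloor=\lceil g/2\rceil-1$. The exact tree count is replaced by $\mathbb{E}[\Pr(P)^{-1}]\ge\exp\mathbb{E}[\log\Pr(P)^{-1}]$ with a uniform directed-edge start. Stationarity of the non-backtracking chain and convexity of $x\log(x-1)$ finish the job. In the regular case every path of length $\ell$ has probability $K^{-1}(K-1)^{-(\ell-1)}$, Jensen is tight, and your argument collapses to the paper's count.

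Your route proves the theorem as actually stated and as used in Theorem~\ref{thm:bounds}. There $\widetilde\Sigma_*$ only has minimum degree $2$ and average degree $K=2E/V$, a case the paper's regular-graph proof does not cover. The price is the doubly-stochastic structure of the non-backtracking chain, which is where the minimum-degree-$2$ hypothesis enters, plus two convexity inequalities.

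A few details to tighten:
\begin{itemize}
\item For odd $g=2r+1$, the ball of radius $r$ need not be a tree, because $2r+1=g$, not $\le g-1$. What you actually use, and what does hold, is weaker: two distinct non-backtracking walks of length $\le r$ from $v$ with a common endpoint would close a cycle of length $\le 2r<g$. The same distinct-endpoint statement is really what the paper's ``acyclic ball'' claim needs as well.
\item The first-step term is $\frac{1}{NK}\sum_v d(v)\log d(v)$, not literally $\log K$. You need convexity of $x\log x$ to get $\ge\log K$.
\item The edge-rooted count and the hedge ``up to the parity constant'' are unnecessary. With $r=\lfloor (g-1)/2\rfloor$, the vertex-rooted bound holds for both parities and gives $(K-1)^r\le 1+\frac{K-2}{K}(N-1)<1+\frac{K-2}{K}N$. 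Hence $r<\log(1+\frac{K-2}{K}N)/\log(K-1)$, and $g\le 2r+2$ gives the strict inequality directly.
\end{itemize}
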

While this is proven in Ref.~\cite{alon} for arbitrary graphs, we will give a short illustrative proof for regular graphs here.
\begin{proof}
Consider an $K$-regular graph with $K > 2$ and girth $g$. The key observation is that given any vertex $v$, a ball $B$ of radius $\lceil g/2 \rceil - 1$ around $v$ is acyclic, so $B$ is a $K$-regular tree. Therefore, the volume of this ball is
\begin{align}
N > |B| = K\sum_{i=1}^{g/2-2}(K-1)^i = K\frac{(K-1)^{g/2-1}-1}{K-2},
\end{align}
which followed by elementary counting and summing the exponential series.
Inverting this inequality for $g$ gives the desired bound.
\end{proof}

It has long been known that there are graphs which saturate this bound, called Moore graphs \cite{erdossachs}. For a particularly simple and explicit construction, see Ref.~\cite{margulis}.
Note that the way we define the code rate [Eq.~\eqref{eq:rate_definition}] is $k = \log_{|G|}\dim(\mathcal C)$, so for this discussion we will assume that $|G|$ is not allowed to vary with $n$.
With this caveat in mind, we can state the following results.

\begin{thm}\label{thm:bounds}
    A generalized quantum double code with $n$ physical $G$-qudits, log-dimension $k = \mathrm O(n)$, and distances \(d_Z\) and \(d_X\) satisfies
    \begin{equation}
        d_Z \le 2+\frac{2\log (n)}{\log(1+2(k-1)/n)}.
        \label{Eq: dZ bound}
    \end{equation}
    \label{thm:dZ bound}
    In the limit of large $n$, this bound becomes asymptotically
    \begin{align}
    kd_Z \le \mathrm{\Theta}(n\log(n)).
    \end{align}
\end{thm}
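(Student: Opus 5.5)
By Cor.~\ref{cor:dZloop}, I may assume every non-contractible loop of $\Sigma'=\Sigma/Y$ carries a $Z$-logical, so that $d_Z$ is the length of the shortest non-contractible loop in $\Sigma'$. The plan is to turn the rate into a lower bound on the average degree of a graph, and then to apply the Moore bound, Thm.~\ref{thm:alon}, in a form that only sees non-contractible cycles.

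\textbf{Step 1 (rate controls the graph).} Let $\Sigma'^1$ be the $1$-skeleton of $\Sigma'$, with $n$ edges and $N$ vertices. By Prop.~\ref{prop:codespace_rough} and Cor.~\ref{cor:codespacesmooth}, $\dim\mathcal C\le|\Hom(\pi_1(\Sigma'),G)|$. Since $\pi_1(\Sigma')$ is a quotient of the free group $\pi_1(\Sigma'^1)$ of rank $n-N+1$, we get
\begin{equation}
|\Hom(\pi_1(\Sigma'),G)|\le|G|^{\,n-N+1},
\end{equation}
hence $k\le n-N+1$ and $N\le n-k+1$. I then clean the graph.
\begin{enumerate}[(i)]
\item Delete degree-$1$ vertices repeatedly. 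Their edges lie on no loop, so this changes neither $\pi_1$, $k$, nor $d_Z$.
\item Suppress degree-$2$ vertices, remembering path lengths. Suppressing a vertex only shortens graph distances, so any cycle bound obtained afterwards still upper bounds $d_Z$.
\end{enumerate}
After cleaning, the average degree is $K=2n/N\ge 2n/(n-k+1)$. This gives $K-1\ge 1+2(k-1)/(n-k+1)\ge 1+2(k-1)/n$, which is the base of the logarithm in \eqref{Eq: dZ bound}. Also $\log(1+\tfrac{K-2}{K}N)\le\log n$, since $N\le n$.

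\textbf{Step 2 (main obstacle: short contractible cycles).} Thm.~\ref{thm:alon} bounds the girth of $\Sigma'^1$, but $d_Z$ can exceed the girth when short cycles are contractible. I would handle this with the ball argument from the regular-graph proof of Thm.~\ref{thm:alon}, applied to non-contractible loops only.
\begin{itemize}
\item Fix a vertex $v$ and let $r=\lceil d_Z/2\rceil-1$. Take a BFS tree in the radius-$r$ ball around $v$.
\item Every loop inside the ball is a product of fundamental cycles of this BFS tree. Each has length $\le 2r+1<d_Z$, so each is contractible, and the whole ball contributes nothing to $\pi_1(\Sigma')$.
\item I would therefore add $2$-cells along all these short cycles. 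This leaves $\mathcal C$ unchanged, by the argument in Cor.~\ref{cor:dZloop}.
\item I would then contract the BFS ball to a point, iterating over a maximal family of disjoint balls. In the resulting quotient graph, $\pi_1$ is unchanged up to adding relations, and the girth is comparable to $d_Z$.
\item The counting of Step 1 is redone on the quotient graph, with $N$ now the number of cells of the partition.
\end{itemize}
The delicate points are, first, to check that the edge-to-vertex ratio, and hence the bound $K\gtrsim 2n/(n-k)$, survives this quotienting. Second, the additive constant $2$ must be tracked. If the quotient approach proves awkward, the fallback is to run the volume count of the proof of Thm.~\ref{thm:alon} directly on the universal-cover-like BFS growth of non-contractible branches.

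\textbf{Step 3 (asymptotics).} Plugging $K-1\ge1+2(k-1)/n$ and $N\le n$ into \eqref{eq:alon} gives \eqref{Eq: dZ bound}. For $k=\Theta(n)$ the denominator is $\Theta(1)$, so $d_Z=O(\log n)$ and $kd_Z=O(n\log n)$. For $k=o(n)$, use $\log(1+x)\ge x/2$ for small $x$: this gives $d_Z\lesssim n\log n/(k-1)$, and again $kd_Z\le\Theta(n\log n)$.
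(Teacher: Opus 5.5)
Your Steps 1 and 3 reproduce the paper's counting. The bound $\dim\mathcal C\le|\Hom(\pi_1,G)|\le|G|^{E-V+1}$ gives $k\le E-V+1$; leaves are pruned; and $K=2E/V\ge 2+2(k-1)/n$ with $V\le E\le n$ is inserted into Thm.~\ref{thm:alon}. You are also right that everything hinges on having a graph whose shortest cycle is non-contractible, because Thm.~\ref{thm:alon} bounds the girth, not the shortest non-contractible loop. The paper handles this by passing to $\widetilde\Sigma_*$ (a spanning tree plus the non-tree edges that are non-trivial in some gauge-fixed codeword) and identifying its girth with $d_Z$.

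Your Step 2 remedy does not work, because contracting disjoint balls of radius $r\approx d_Z/2$ \emph{shortens} loops. Two contracted balls joined by two edges already form a $2$-cycle in the quotient, and that cycle may be contractible, so a Moore bound on the quotient girth $g'$ need not bear on any logical. Even when the short quotient cycle is non-contractible, lifting it back to $\Sigma'$ costs up to $2r$ edges per contracted vertex. That gives only $d_Z\le g'(2r+1)\approx g'\,d_Z$, which is vacuous. Contracting the now-contractible balls does preserve $\pi_1$, so the edge-count side survives; it is the girth side that collapses.

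The fallback has the same defect. The $1$-skeleton of the universal cover still contains lifts of all contractible cycles, so balls there need not grow like $(K-1)^r$; for the toric code they grow quadratically.

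There are two smaller slips. In step (ii) the inequality runs the wrong way: suppressing degree-$2$ vertices shortens cycles, so a girth bound afterwards does not bound $d_Z$. The step is unnecessary anyway, since Thm.~\ref{thm:alon} only needs minimum degree $2$. Also, $r=\lceil d_Z/2\rceil-1$ gives $2r+1=d_Z$, not $<d_Z$, when $d_Z$ is odd.

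The missing idea is to \emph{delete} rather than contract. If a simple cycle $C$ of the $1$-skeleton is null-homotopic in $\Sigma'$, remove one edge $e\in C$ and reroute every $2$-cell boundary that uses $e$ along $C\setminus e$. This is a Tietze elimination, and it has two consequences:
\begin{itemize}
\item The new complex maps to $\Sigma'$ extending the inclusion of $1$-skeleta, and this map is an isomorphism on $\pi_1$. Hence $k\le E-V+1$ still holds for the smaller graph.
\item Every non-contractible loop of the new complex is a non-contractible loop of $\Sigma'$ of the same length. Hence the shortest non-contractible loop can only get longer.
\end{itemize}
Iterate until no cycle of the graph is contractible, then prune leaves. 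The resulting graph has girth $g$ attained by a non-contractible cycle, so $d_Z\le g$, and your Steps 1 and 3 then finish the proof.

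This iteration is the completed form of the paper's $\widetilde\Sigma_*$. Removing in one pass only the edges whose own fundamental cycle is forced trivial need not eliminate all short contractible cycles. For example, two consecutive generator edges $a_i,a_{i+1}$ in Fig.~\ref{fig:torus}, together with the tree paths between their endpoints, form a contractible cycle that is short for suitable spanning trees. So your concern was justified; only the proposed cure fails.
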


\begin{proof}

Let $\Sigma$ be the CW complex associated to the code as in Cor.~\ref{cor:dZloop}, such that every non-contractible loop in $\Sigma$ supports a logical-$Z$ operator.
We construct a subcomplex $\Sigma_*$ by taking the union of a spanning tree with all of the edges in its complement 
which take non-trivial values in some gauge class $\ket{[\vb g]}$~\eqref{eq:basis-codewords} after gauge-fixing the edges of the spanning tree (e.g. by removing all the blue edges in Fig.~\ref{fig:torus}). Since none of the removed edges represent a non-contractible loop, $\Sigma$ and $\Sigma_*$ have the same girth $g$, and by Cor.~\ref{cor:dZloop}, $d_Z = g$.
Both complexes $\Sigma, \Sigma_*$ contain a spanning tree of $\Sigma$, so they have the same vertex set.  On the other hand, clearly any loop in $\Sigma_*$ can only pass through a vertex connected to a single edge $e$ by traversing the path $\cdots ee^{-1}\cdots  $, which does not increase $d_Z$, so we may remove all vertices of degree 1 and restrict to a subcomplex $\widetilde\Sigma_*$ in which every vertex has at least degree 2. Let $V$ be the number of vertices in $\widetilde\Sigma_*$, $E$ the number of edges, and $L = E - V +1$ be the number of edges not in the original spanning tree (i.e. the number of loops).

    Now we may apply Theorem \ref{thm:alon} as soon as we bound the average degree of $\widetilde\Sigma_*$:
    \begin{equation}
        K = \frac{1}{V}\sum_{v\in \widetilde\Sigma_*} K_v = \frac{2 E}{V} = 2 + 2\frac{L-1}{V}, \label{eq:Kbound}
    \end{equation}
    where and $K_v \ge 2$ denotes the degree of each vertex in $\widetilde\Sigma_*$.  
    We have used the fact that $\sum_{v \in \widetilde \Sigma_*}K_v = 2 E$, which is valid for a general graph as can be seen inductively.

    Now, by Prop.~\ref{prop:codespace_rough}, we have
    \begin{align}
    \dim \mathcal C \leq |\Hom(\pi_1(\Sigma), G)|  \leq |G|^{L},
    \end{align}
    where the first inequality arises from considering the quotient by $\Ad_G$ in the smooth case, and the last inequality arises from the fact that different loops may be homotopic to each other. From this, we obtain $k \leq L$.
    Moreover, from $K\ge 2$ \eqref{eq:Kbound}, we obtain $V \le E \le n$, where the last inequality comes from the fact that a subset of the $n$ qubits are edges on $\widetilde\Sigma_*$.  Hence  \begin{equation}
        K \ge  2 + \frac{2(\lceil k\rceil -1)}{V} \ge 2+\frac{2(k -1)}{n}~,
    \end{equation} 
    and the code distance obeys \begin{equation}
        d_Z = g< 2+\frac{2\log (L)}{\log (1+\frac{2(L-1)}{V})} < 2 + \frac{2\log(n)}{\log(1+\frac{2(k-1)}{n})}~. \label{eq:dZboundinproof} 
    \end{equation}
    where we used $L < n$ and $(L-1)/V = K/2-1 \geq \frac{(k-1)}{n}$.
\end{proof}

Let us comment briefly on the optimality of our bound. Asymptotically, our bound reduces to
\begin{align}
kd_Z \lesssim n\log(n) .
\end{align}
When $k = \mathrm \Omega(n)$, then this bound becomes tight. However, in the case of surface codes and $k < \mathrm O(n)$, this is not as tight as the bound in Ref.~\cite{Delfosse_2013},
\begin{align}
kd_Z^2 \lesssim (\log k)^2 n.
\label{eq:Zboundsurface}
\end{align}
This does appear to be a fundamental limit of the Moore bound argument which leads to logarithmic sub-optimality in the bound. Take the toric code on an $\ell \times \ell$ torus for example. Then $V = \ell^2$ and $L \leq 2\ell$, as shown in Fig.~\ref{fig:torus}. We can choose any $L \leq 2 \ell$, and our bound becomes, asymptotically,
\begin{align}
d_Z \lesssim \frac{V}{L}\log(L) \lesssim \ell\log(\ell) \sim \sqrt{n}\log(n).
\end{align}
We can see that no valid choice of $L$ will compensate for the $\log(n)$, so our bound is best used for non-manifold CW complexes. However, we note that the bound from Ref.~\cite{Delfosse_2013} similarly implies
\begin{align}
kd_X^2 \lesssim (\log k)^2 n
\label{eq:Xboundsurface}
\end{align}
for any QD code on a 2D surface, and this is optimal for constant-rate codes because it is saturated by hyperbolic surface codes \cite{Breuckmann_2016, Breuckmann_2017}.
This is ultimately due to the duality between the systole and the co-systole. In the non-manifold setting, this duality is lost, and $d_X$ can be much larger. To illustrate this, we show the optimality of our bounds in the constant-rate case:

\begin{thm}[Optimal code parameters] \label{thm:optimal}
    There exist quantum double code families (Definition \ref{def:generalizedQDsmooth}) on $n$ qudits with 
    \begin{subequations}\label{eq:goodquantumdoubleproperties}
        \begin{align}
            k &= \mathrm{\Theta}(n), \\
            d_X &= \mathrm{\Theta}(n), \\
            d_Z &= \mathrm{\Theta}(\log n).
        \end{align}
    \end{subequations}
\end{thm}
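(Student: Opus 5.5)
The construction will be a quantum double on a 2D CW complex built from a bounded-degree, high-girth graph, with no 2-cells at all or only a few of them. Take $\Gamma$ to be a $K$-regular graph ($K\ge 3$ fixed) on $N$ vertices with girth $g=\Theta(\log N)$, for instance the explicit Margulis/LPS Ramanujan graphs cited after Theorem~\ref{thm:alon}. Let $\Sigma=\Gamma$, viewed as a 1-dimensional CW complex, and form the smooth quantum double of Definition~\ref{def:generalizedQDsmooth}. Then there are $n=|E|=KN/2$ qudits, one $X$-check per vertex, and no $Z$-checks. By Corollary~\ref{cor:codespacesmooth}, $\mathcal C\cong\C[\Hom(F_L,G)/\Ad_G]$ with $L=E-V+1=(K/2-1)N+1=\Theta(n)$. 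Since $|\Hom(F_L,G)/\Ad_G|\ge |G|^L/|G|$, we get $k\ge L-1=\Theta(n)$. If the CSS definition insists on at least one $Z$-check, we can glue a bounded number of 2-cells, which changes nothing asymptotically.

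\textbf{$Z$-distance.} Use Corollary~\ref{cor:dZloop}. In a graph, every cycle is non-contractible, and since $\pi_1$ is free every holonomy is unconstrained, so each cycle supports a $Z$-logical. Hence $d_Z$ equals the girth, i.e.\ $\Theta(\log n)$. The upper bound is consistent with Theorem~\ref{thm:bounds}.

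\textbf{$X$-distance, the main obstacle.} An $X$-type error $\mathcal O_X=X^{\mathbf a,\mathbf b}$ supported on an edge set $S$ acts nontrivially on the logical space only if it is not equivalent, on the codespace, to a gauge transformation. Concretely, in the gauge-fixed picture of Prop.~\ref{prop:codespace_rough}, $\mathcal O_X$ changes some holonomy class, and the Knill--Laflamme matrix elements $\langle[\mathbf g]|\mathcal O_X|[\mathbf g']\rangle$ are nonzero only between gauge classes whose connections differ only on $S$. I would show that if $|S|<c\,n$ for a small constant $c$, then any two flat connections that agree off $S$ are gauge equivalent. The argument goes as follows. Removing $S$ from an expander leaves a subgraph $\Gamma\setminus S$ with a giant connected component $C$ covering all but $O(|S|)$ vertices, because edge expansion bounds the number of vertices that can be cut off. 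The remaining edges in $S$ attach to $C$ or to small pieces. To make $\mathcal O_X$ change a holonomy we need a cycle in $\Gamma$, but $L=\Theta(n)$ means most independent cycles avoid $S$ anyway, so this is not yet the right criterion. The needed statement is that the values on $S$ are determined up to gauge by the values on $\Gamma\setminus S$, and this fails whenever $S$ contains a cycle-generator edge.

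So the plain graph gives $d_X$ small, as small as a single edge, since each non-tree edge is an independent free generator. The fix is to add 2-cells. I would glue 2-cells along the boundaries of a family of cycles so that $\Sigma$ is the 2-skeleton of a complex whose "cosystole" is linear. A natural choice is to take a good classical LDPC code/expander structure: glue faces so that $\pi_1(\Sigma)$ is still large, e.g.\ a complex with $\pi_1(\Sigma)$ free of rank $\Theta(n)$ but with every generator forced to spread over $\Theta(n)$ edges. One way is to take the Abelian case as a guide, choosing the face structure from a good qLDPC code's $Z$-checks restricted so that each face boundary is a closed loop, and prove a non-Abelian "cosystolic expansion": any edge set $S$ of size $<cn$ can be gauge-fixed to identity using the flatness constraints plus vertex gauge moves. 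This propagation argument, which peels off edges of $S$ using faces that have all but one edge outside $S$ (a small-set expansion/unique-neighbor property of the edge--face incidence), is the step I expect to be hardest. It works non-Abelianly because it only uses "one unknown edge in a relator is determined". The remaining checks are that the face lengths stay bounded, so that $d_Z$ remains the girth $\Theta(\log n)$ of the underlying graph, and that the relators kill at most a constant fraction of the $L$ generators, so that $k=\Theta(n)$ persists.
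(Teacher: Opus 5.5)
Your bare-graph construction is fine as far as it goes: it gives $k=\Theta(n)$ and $d_Z$ equal to the girth, and you correctly see that it has $d_X=\mathrm O(1)$. But the whole content of the theorem is the linear $X$-distance, and there you only list desiderata (a ``non-Abelian cosystolic expansion'' proved by unique-neighbour peeling) with no construction and no proof, so the argument has a genuine gap.

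Worse, the desiderata are inconsistent. On a 1-skeleton of girth $g=\Theta(\log n)$, every closed walk of length $<g$ freely reduces to the empty walk. So a 2-cell of bounded length is glued along a freely trivial word: its $Z$-check holds identically, $\pi_1$ is unchanged, and $d_X$ stays $\mathrm O(1)$. Any face that does anything must have length at least $g$, so the code cannot be LDPC in $Z$ this way; the paper notes its own $Z$-checks have weight $\Theta(\log n)$. Conversely, bounded face length is not needed to protect $d_Z$. Gluing cells only makes loops contractible, so $d_Z\ge g$ for any choice of faces, and $d_Z=\mathrm O(\log n)$ follows from Theorem~\ref{thm:bounds} once $k=\Theta(n)$.

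The same girth remark rules out importing the constant-weight $Z$-checks of a good qLDPC code as faces. The unique-neighbour property as you state it also fails for every vertex star: every face boundary meets a star in an even number of edge occurrences, so no face isolates a single unknown there. Peeling must therefore be done modulo gauge, which is exactly the statement you leave unproved. Finally, ``relators kill at most a constant fraction of the generators'' is a linear rank count that you have not justified for non-Abelian relators.

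The idea you are missing is to linearize. The paper's 1-skeleton is a depth-$R$ $a$-ary tree $T$ plus a random matching $M$ of its leaves, pruned to remove cycles shorter than $\alpha R$ (Lemma~\ref{lem:A}). After gauge-fixing $T$ there is one free holonomy $g_e$ per $e\in M$, and the girth is $\Theta(\log n)$. The paper then glues 2-cells, all of weight $\Theta(\log n)$, of three kinds:
\begin{enumerate}
\item Cells imposing $g_e^p=1$ and $[g_e,g_{e'}]=1$. These confine the holonomies to an elementary Abelian $p$-subgroup and make the problem $\mathbb{Z}_p$-linear.
\item The parity checks of a good classical $[n_0,k_0,d_0]_p$ code on the $g_e$. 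Any $X$-logical must then realize a nonzero classical codeword and touch $\ge\nu n_0$ holonomies.
\item Sibling checks $\prod_{e\in S'}g_e=1$ with $\gcd(|S'|,p)=1$. These forbid cheap logicals on interior tree edges: such an action shifts every holonomy in a sibling block by the same order-$p$ element and violates the check. This forces the logical onto $\Theta(n)$ leaf edges.
\end{enumerate}
Rank--nullity on the $\mathbb{Z}_p$ subcode then gives $k=\Theta(n)$, choosing $a=2\lceil 1/\mu\rceil$ so the sibling checks cost fewer than $3k_0/4$ logicals. In short, classical expander-code distance replaces the non-Abelian cosystolic expansion your route would have required.
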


\begin{proof}
A constructive proof is given in Appendix \ref{app:optimal}.  The construction relies on a careful ``lift" of a classical code which is asymptotically good ($k, d_X = \mathrm{\Theta}(n)$) to a CW complex whose 1D subcomplex has a large girth.  Qubits are assigned in a specific way to ensure that $d_Z\sim \log n$.
We note that the construction does \emph{not} give an LDPC code -- typical $Z$-checks have $\mathrm{\Theta}(\log n)$ weight.
\end{proof}

 We note that the scaling $d_X = \mathrm{\Theta}(n)$ far exceeds what is possible with a surface code.

 \section{Phases of quantum matter as CW quantum doubles}\label{sec:matter}

The past decade has seen a fruitful exchange of ideas between the theory of quantum codes and the classification of phases of gapped quantum matter \cite{Dennis_2002, Kitaev_2003, Levin_2005, Bombin_2010, Bravyi_2010p, Chen_2010, Haah_2011, Terhal_2015, Yoshida_2015, Wen_2017, Barkeshli_2019, lavasani2024stabilityklocalquantumphases, 7x71-8j7k, Yin_2025}.   Indeed, many interesting phases of quantum matter in two dimensions have already been analyzed via their relationships to quantum double codes on two-dimensional surfaces \cite{Kitaev_2003, Bombin_2008, Cui:2019lvb}. Some examples from the literature that we will not discuss explicitly include 2D models~\cite{SWSSB_nonabelian,Brell_2015}, fracton-like derivatives \cite{tantivasadakarn2021non}, the 3D quantum triple models \cite{Moradi_2015, Delcamp_2017}, more general (untwisted) Dijkgraaf-Witten gauge theories \cite{dijkgraaf1990topological}.  Still, these constructions utilize only the 0D, 1D, and 2D structures (i.e., 2D skeletons) of higher-dimensional cellulations.
Due to the generality of our Definition~\ref{def:CSS}, these models are closely related to our group CSS construction.

In this section, we will point out that a number of \emph{other} phases of quantum matter with group qudits can also be understood as quantum double codes on CW complexes with ghost vertices.  We emphasize that this relationship is not restricted to two-dimensional quantum phases---our first example will be of a one-dimensional phase.

First, we introduce an alternative, but equivalent, set of \(Z\)-checks to define physics-based codes. 
This more common set of checks can be constructed using the group's irreducible representations, or irreps, \(\Gamma \in \text{Rep} (G)\). Define the corresponding operators
\begin{equation}
        Z^\Gamma_{i, \alpha\beta}|\mathbf{g}\rangle := \Gamma_{\alpha\beta}(g_i)|\mathbf{g}\rangle,
\end{equation}
    where $\Gamma_{\alpha\beta}(g_i)$ is (component-wise) the $d_\Gamma \times
    d_\Gamma$ matrix representing group element $g_i$ at site \(i\). These operators form a basis for functions on the group. 
The group CSS \(Z^K\) checks can be expressed in terms of \(Z^{\Gamma}\) checks as
\begin{subequations}
\begin{align}
Z_{q_{1}\cdots q_{j}}^{K}&=\sum_{k\in K}\sum_{\mathbf{g}\in G^{n}}\mathbb{I}(g_{q_{1}}\cdots g_{q_{j}}=k)|\mathbf{g}\rangle\langle\mathbf{g}|\\&=\sum_{k\in K}\sum_{\mathbf{g}\in G^{n}}\sum_{\Gamma\in\mathrm{Rep}(G)}{\textstyle \frac{d_{\Gamma}}{|G|}}\mathrm{Tr}[\Gamma(k^{-1})\Gamma(g_{q_{1}}\cdots g_{q_{j}})]|\mathbf{g}\rangle\langle\mathbf{g}|\\
&=\sum_{\Gamma\in\mathrm{Rep}(G)}{\textstyle \frac{d_{\Gamma}}{|G|/|K|}}\mathrm{Tr}(\Pi^{K}Z_{q_{1}}^{\Gamma}\cdots Z_{q_{j}}^{\Gamma})~.
\end{align}
\end{subequations}
Above, we use the group delta function \cite[Eq.~(21)]{albert2021spinchainsdefectsquantum} in the second step, the definition \(\Pi^{K}=\frac{1}{|K|}\sum_{k\in K}\Gamma(k)\) of the projection of the \(G\)-irrep \(\Gamma\) onto the trivial \(K\)-irrep in the third step, and define ``Tr'' to be the trace over internal irrep indices \(\alpha\beta\).
A reverse formula is also possible for certain \(Z^K\)'s \cite[Eq.~(29b)]{albert2021spinchainsdefectsquantum}.
An advantage of using this presentation is that it is clear there is a
matrix product operator representation of the $Z$-checks with \(\alpha\beta\) being the bond indices \cite{Fechisin_2025}, making them
``almost" transversal. 
Abelian \(G\) have only one-dimensional irreps, in which case \(Z^{\Gamma}\) become unitary \(Z\)-type Pauli operators, while \(Z^K\) remain projections onto certain qudit states.

In a physics context, it is often useful to interpret the group-CSS codespace $\mathcal{C}$ as the ground state subspace associated to a Hamiltonian $H$ (cf. \cite[Eq.~(137)]{Albert_2020} for the group GKP case and \cite[Def. 3.5]{gottesman2024surviving}),
\begin{equation}
        \mathcal H = \sum_{S\in \mathcal{S}_X\cup \mathcal{S}_Z} \left(1-\frac{S+S^\dagger}{2}\right) \label{eq:codeH}~,
    \end{equation}
where we have defined $\mathcal H$ such that its ground state energy is 0.
This can be proven as follows. 
If $|\psi\rangle \in \mathcal{C}$, clearly $\frac{1}{2}(S+S^\dagger)|\psi\rangle = |\psi\rangle$.  
Moreover, we can easily see that the maximal eigenvalue of $S+S^\dagger$ is 1 for each check. 
Therefore, if $|\psi\rangle \in \mathcal{C}$, it minimizes each term in $H$ individually, which is the definition of it being is a frustration-free ground state.

\subsection{Spontaneous symmetry breaking}\label{Sec: SSB}

As a warm-up example, consider $n$ qudits arranged in a one-dimensional chain with periodic boundary conditions, described by the Hamiltonian \cite{drouffe1979lattice,munk2018dyonic}
\begin{equation}\label{eq:ising}
   \mathcal H = - \frac{1}{|G|} \sum_{i=1}^{n-1}\sum_{\Gamma \in \mathrm{Rep}(G)}d_\Gamma \mathrm{Tr}\left(Z_i^{\Gamma\dagger} Z^\Gamma_{i+1}\right).
\end{equation}
This is a member of the family of \(G \downarrow H\) symmetry-breaking models for any subgroup \(H \leq G\) and is a lattice version of the principal chiral model Hamiltonian (in which case \(Z^{\Gamma}\) become group-valued fields) \cite{albert2021spinchainsdefectsquantum}.
The general quantum Ising-type model---called the flux ladder \cite{munk2018dyonic,albert2021spinchainsdefectsquantum}---has a dual model \cite{warman2025categorical,chung2025spontaneously} under a generalized group-qudit Kramers-Wannier transformation \cite{lootens2023dualities,Fechisin_2025} that hosts \(\text{Rep}(G)\)-symmetry breaking.

This model corresponds to a quantum double code on the following CW complex:  we have two ghost vertices $a$ and $b$ and every edge starts at $a$ and ends at $b$.   A 2-cell is glued between edges $i$ and $i+1$ for each $i$.   The CW complex is simply connected as it is topologically equivalent to a disk (ignoring the ghost vertices).  A similar interpretation of this model as a gauge theory for Abelian \(G\) can be found in earlier work \cite{Milsted_2016}.  

Notice that since there are two ghost vertices, the model has a global $G\times G$ symmetry.   Moreover, Proposition \ref{prop:codespace_rough} tells us that the codespace, i.e. the ground states of $H$ in \eqref{eq:ising}, is $\mathbb{C}[G]$.  In this example, it is easy enough to write down the ground states explicitly: \begin{equation}
    \mathcal{C} = \mathrm{span}\lbrace |g,g,\ldots, g\rangle \rbrace_{g\in G}.
\end{equation}
This is a ``group qudit repetition code", as indeed the checks simply specify that $g_i^{-1}g_{i+1}=1$ for each neighboring pair of qudits.   We further notice that the global $G\times G$ symmetry that we identified through the presence of two ghost 0-cells is spontaneously broken to $G$.  While one does not need the CW complex picture to understand the ground states of this Hamiltonian, it is elegant because it allows us to immediately discover that the same symmetry-breaking pattern arises in a more non-trivial model, which we introduce next.

\begin{figure}
\centering
\def\svgwidth{.75\textwidth}
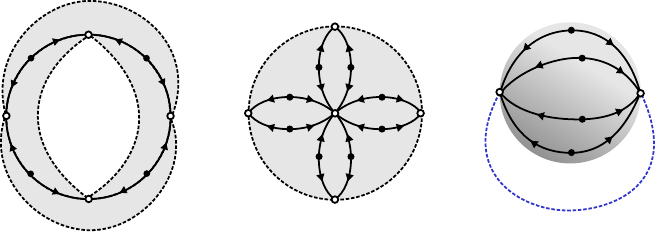
\caption{Example of spontaneous symmetry breaking of $G \times G$ symmetry in a 1D lattice system consisting of 4 qudits with periodic boundary conditions. The gray shaded regions illustrate 2-cells. The dotted lines show where ghost cells are identified with each other, as forced by the $Z$-checks (in the sense of Prop.~\ref{prop:roughQDonCW}). Solid dots correspond to $X$-checks, arrows show the locations of qudits, and open dots show ghost cells. \textbf{Left:} Illustration of the complex corresponding to $\mathcal H_{\mathrm{SSB}}$ [Eq.~\eqref{Eq: SSB}]. \textbf{Center:}  The complex looks like a disk with a ghost cell corresponding to the even sublattice at the center and the ghost cell corresponding to the odd sublattice on the boundary. A disk with its boundary identified to a point is a sphere. \textbf{Right:} The complex in the center is homeomorphic to a sphere with ghost cells at antipodal points. The blue dotted line illustrates the complex $\Sigma_{\mathrm {SSB}}/Y \cong S^1 \vee S^2$, with fundamental group $\pi_1(\Sigma_{\mathrm{SSB}}/Y) \cong \Z$, where $Y$ is the set of ghost cells.}
\label{Fig: SSB}
\end{figure}

Now consider a 1D chain with periodic boundaries and a $G$-qudit on each vertex, with the following Hamiltonian \cite{Fechisin_2025}:
\begin{align}
\mathcal H_{\mathrm{SSB}}=-\frac{1}{|G|}\sum_{i\in\mathrm{odd}}\left(\sum_{\Gamma\in\mathrm{Rep}(G)}d_{\Gamma}\mathrm{Tr}\left[Z_{i}^{\Gamma\dag}Z_{i+1}^{\Gamma}Z_{i+3}^{\Gamma\dag}Z_{i+2}^{\Gamma}\right]+\sum_{g\in G}\overrightarrow{X}_{i}^{g}\overrightarrow{X}_{i+1}^{g}\right).
\label{Eq: SSB}
\end{align}
The $G\times G$ symmetry is generated by  the following pair of symmetry operators:   \begin{equation}
A_{\mathrm{even/odd}}^g  := \prod_{i\in \text{even/odd}}\overleftarrow{X}^g_i .  
\end{equation}
Note that this symmetry is invertible as the above operators are unitary.
One finds $|G|$ degenerate ground states labeled by a single group element:
\begin{align} \label{eq:Gfolddegenerate}
\mathcal{C} = \mathrm{span}\left\lbrace |h_{\mathrm{L}}\rangle :=\sum_{g_i\in G}\ket{g_1, g_1h, g_3, g_3h, \cdots} \right\rbrace_{h \in G}.
\end{align}
One can see that $A_{\mathrm{even}}^g |h_{\mathrm{L}}\rangle := |(hg^{-1})_{\mathrm{L}}\rangle $ while $A_{\mathrm{odd}}^g |h_{\mathrm{L}}\rangle := |(gh)_{\mathrm{L}}\rangle $.

As before, we clearly notice that $\mathcal H_{\mathrm{SSB}}$ [Eq.~\eqref{Eq: SSB}] takes the form of a (rough) quantum double code, so we can look for the corresponding CW complex. As illustrated in Fig.~\ref{Fig: SSB}, we put the $G$-qudits on the edges, and $X$ checks in $\mathcal H_{\text{SSB}}$ are depicted by black vertices. We notice that for each qudit, there is only one $X$ check. To ensure compatibility with the $Z$-checks, we can add exactly two ghost 0-cells, one associated with all the open ends of odd-numbered qudits, and the other with even-numbered qudits, as illustrated by blue empty dots in Fig.~\ref{Fig: SSB}. Notice that, as in the SPT examples, the global $G\times G$ symmetry is associated with multiplication by group elements on each of the two ghost vertices independently. 
One can then see that the resulting CW complex $\Sigma_{\mathrm{SSB}}$ is homeomorphic to a sphere with ghost cells at antipodal points. Taking the quotient of $\Sigma_{\mathrm{SSB}}$ with the set of ghost cells $Y$ produces a complex homeomorphic to $S^1 \vee S^2$, with fundamental group $\pi_1(\Sigma_{\mathrm{SSB}}/Y) \cong \Z$; hence $\mathrm{Hom}(\pi_1(\Sigma_{\mathrm{SSB}}/Y),G)\cong \mathrm{Hom}(\Z,G) = G$,
which explains the $|G|$-fold degenerate ground state \eqref{eq:Gfolddegenerate}.

More generally, let us consider any model whose Hamiltonian takes the ``quantum double" form as in \eqref{Eq: SSB}.  If this model has a $G^R$ global symmetry for some integer $R\ge 1$, this means that we can glue exactly $R$ independent ghost 0-cells to the corresponding CW complex.   Proposition \ref{prop:codespace_rough} then shows us that the resulting ground states will be at least $|G|^{R-1}$-fold degenerate.  All examples of SPT phases and SSB phases in this section have and will saturate this general bound.

\subsection{Non-invertible symmetry-protected topological phases}
\label{Sec: non-inv SPT}

We first discuss the 1D non-Abelian $G$-cluster state discussed in Ref. \cite{Fechisin_2025} as an example of non-invertible symmetry-protected topological (SPT) states.   Given a 1D chain with periodic boundaries and a $G$-qudit on each vertex, the 1D non-Abelian $G$-cluster state $\ket{\mathcal{C}}$ is the unique ground state of the following stabilizer Hamiltonian, 
\begin{align}
\mathcal H_\mathcal{C}=-\frac{1}{|G|}\sum_{i\in\mathrm{odd}}\left(\sum_{\Gamma\in\mathrm{Rep}(G)}d_\Gamma \mathrm{Tr}\left[Z_{i}^{\Gamma\dag} Z_{i+1}^\Gamma Z_{i+2}^\Gamma\right]+\sum_{g\in G}\overleftarrow{X}_{i+1}^g\overrightarrow{X}_{i+2}^{g}\overrightarrow{X}_{i+3}^g\right).
\label{Eq: 1D cluster}
\end{align}
The $G$-cluster state $\ket{\mathcal{C}}$ is a \emph{non-invertible} SPT state, whose symmetry is $G\times\mathrm{Rep}(G)$, with the following symmetry operators,
\begin{align}
A_g:=\prod_{i\in\mathrm{odd}}\overleftarrow{X}_{i}^g,~~B_\Gamma:=\mathrm{Tr}\left[\prod_{i\in\mathrm{even}}Z_{i}^\Gamma\right].
\end{align}
The symmetry corresponding to $B_\Gamma$ is non-invertible \cite{Fechisin_2025} (Remark \ref{rmk:noninvertible}).  

\begin{figure}
\centering
\def\svgwidth{.6\textwidth}
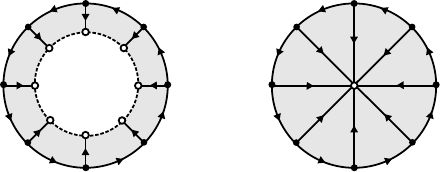
\caption{1D non-Abelian $G$-cluster state $\ket{\mathcal{C}}$ embedded onto a CW complex with ghost checks shown by open circles. Each black dot depicts an $X$ check, plaquettes are shaded gray, and qudits are associated to 1-cells. \textbf{Left:} CW complex representation of $\mathcal H_{\mathcal C}$ [Eq.~\eqref{Eq: 1D cluster}]. The dotted lines show where ghost checks are identified to a single ghost vertex. The resemblance to the boundary in Fig.~\ref{fig:rough_smooth_bdy} is apparent. \textbf{Right:} After identifying the removed checks with a single ghost cell, the complex $\Sigma_{\mathcal C}$ is homeomorphic to a 2-disk, which is contractible, implying that $\ket{\mathcal C}$ is the unique ground state.} 
\label{Fig: 1D cluster}
\end{figure}

The Hamiltonian $\mathcal H_{\mathcal C}$ is also that of a quantum double code. Indeed, we will now show that $\ket{\mathcal C}$ is the unique codeword of a quantum double code on a connected 2D CW complex. Using the construction of Prop.~\ref{prop:roughQDonCW}, the corresponding CW complex $\Sigma_{\mathcal C}$ is illustrated in Fig.~\ref{Fig: 1D cluster}, and we can see that $\Sigma_{\mathcal C} \cong D^2$ (the 2-disk).
Since $D^2$ is contractible, the fundamental group of $\Sigma_{\mathcal C}$ is trivial, so the non-Abelian $G$-cluster state $\ket{\mathcal C}$ is the unique ground state of this model.

We can also neatly see the relationship between these non-Abelian cluster states and edge theories in quantum double models with boundaries.  
Locally, the CW complex in the left panel of Fig.~\ref{Fig: 1D cluster} is identical to the rough boundary depicted in Fig.~\ref{fig:rough_smooth_bdy}.
This ``bulk-boundary'' relationship between the rough quantum double model and the non-Abelian cluster state can also be understood in the context of the recently proposed \textit{symmetry topological field theory (symTFT)} \cite{Ji_2020, Gaiotto_2021, Apruzzi_2023, freed2024topologicalsymmetryquantumfield, Bhardwaj_2025, huang2024fermionicquantumcriticalitylens}.

Next, we consider a more sophisticated example: the 2D non-Abelian $G$-cluster state on a Lieb lattice.
We put a $G$-qudit on each vertex and each edge, and initialize them in the following product state,
\begin{align}
\ket{\mathcal{C}_2^0}=\ket{+}\otimes\ket{1}\otimes\ket{+}\otimes\ket{1}\otimes\cdots,
\label{Eq: 2D product}
\end{align}
where the vertex qudits are initialized in the state $\ket{+}\equiv \sum_g\ket{g}$, and the edge qudits are initialized in the state $\ket{1}$, with \(1\) the identity group element. 

We define the generalized CNOT gate for $G$-qudits. Consider a two-qudit state $\ket{g_i, g_j}$, two generalized CNOT gates are defined as
\begin{align}
\overrightarrow{CX}_{(i,j)}\ket{g_i, g_j}:=\ket{g_i, g_ig_j},~~\overleftarrow{CX}_{(i,j)}\ket{g_i, g_j}=\ket{g_i g_j^{-1}, g_j}. 
\end{align}
Then we apply these generalized CNOT gates for every nearest pair of qudits on the product state \eqref{Eq: 2D product}, using the vertex qudits to control the edge qudits. The resulting 2D state $\ket{\mathcal{C}_2}=\prod_{v}\prod_{e\ni v}\overrightarrow{CX}_{(v,e)}\ket{\mathcal{C}_2^0}$ is the unique ground state of the following stabilizer Hamiltonian: 
\begin{align}
\begin{tikzpicture}[]
\tikzstyle{sergio}=[rectangle,draw=none]
\draw[very thick] (0,-1) -- (0,1);
\draw[very thick] (-1,0) -- (1,0);
\path (0,0) node [style=sergio]{\color{red}$\overrightarrow{X}_g$};
\path (0,-1) node [style=sergio]{\color{red}$\overrightarrow{X}_g$};
\path (1,0) node [style=sergio]{\color{red}$\overrightarrow{X}_g$};
\path (-1,0) node [style=sergio]{\color{red}$\overleftarrow{X}_g$};
\path (0,1) node [style=sergio]{\color{red}$\overleftarrow{X}_g$};
\path (-1.8,-0.1) node [style=sergio]{\Large$\sum\limits_{g\in G}$};
\path (-2.5,-0) node [style=sergio]{\Bigg(};
\path (-4,-0.1) node [style=sergio]{\Large $\mathcal H=-\frac{1}{|G|}\sum\limits_{i\in v}$};
\path (1.8,-0) node [style=sergio]{\Large$+$};
\path (2.9,-0.1) node [style=sergio]{\Large$\sum\limits_{\Gamma\in\mathrm{Rep}(G)}$};
\draw[very thick] (5,-1) -- (5,1);
\draw[very thick] (4.8,-1) -- (5.2,-1);
\draw[very thick] (4.8,1) -- (5.2,1);
\path (5,1) node [style=sergio]{\color{red}$Z_{\Gamma}^\dag$};
\path (5,0) node [style=sergio]{\color{red}$Z_{\Gamma}$};
\path (5,-1) node [style=sergio]{\color{red}$Z_{\Gamma}$};
\path (4.25,0) node [style=sergio]{$d_\Gamma$~Tr};
\path (6,0) node [style=sergio]{$+~d_\Gamma$~Tr};
\draw[very thick] (7,0) -- (9,0);
\draw[very thick] (7,0.2) -- (7,-0.2);
\draw[very thick] (9,0.2) -- (9,-0.2);
\path (7,0) node [style=sergio]{\color{red}$Z_{\Gamma}^\dag$};
\path (8,0) node [style=sergio]{\color{red}$Z_{\Gamma}$};
\path (9,0) node [style=sergio]{\color{red}$Z_{\Gamma}$};
\path (9.5,-0) node [style=sergio]{\Bigg)};
\end{tikzpicture}
\label{Eq: 2D cluster}
\end{align}
where the sum is taken over all vertices. In Appendix \ref{App: 2D cluster}, we give an explicit tensor network representation of the ground state of this Hamiltonian. The 2D non-Abelian cluster state is a non-invertible SPT state protected by a global $G$ symmetry and a 1-form $\mathrm{Rep}(G)$ symmetry, with the following symmetry operators, 
\begin{align}
A_g:=\prod_{i\in v}\overleftarrow{X}_{i}^g,\qquad\qquad B_\Gamma:=\mathrm{Tr}\left[\prod_{i\in \mathrm{loop}}Z_{i}^\Gamma\right],
\label{Eq: 2D cluster symmetry}
\end{align}
where the global $G$ symmetry operator $A_g$ is a product of $X$ over all vertex qudits, and the 1-form $\mathrm{Rep}(G)$ symmetry operator $B_\Gamma$ is a product of $Z$ over all edge qudits on arbitrary closed loops of the graph.

\begin{figure}
\centering
\def\svgwidth{.65\textwidth}
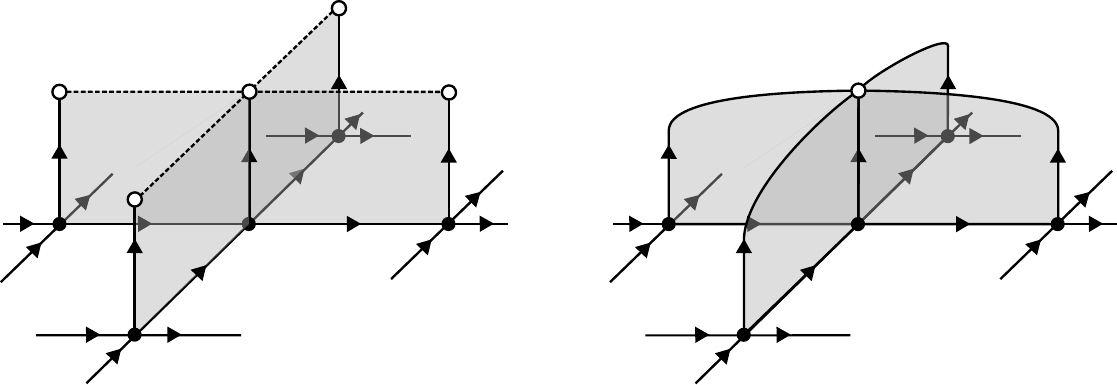
\caption{2D non-Abelian $G$-cluster state $\ket{\mathcal{C}_2}$ embedded onto a CW complex with a ghost check. \textbf{Left:} Each black dot depicts an $X$ check,  each arrow depicts a $G$-qudit on an oriented 1-cell, and each gray plaquette depicts a $Z$ check. 
The open circles represent ghost $X$-checks, and the dotted lines represent where these checks are identified together.
\textbf{Right:} After properly identifying the ghost checks to one ghost vertex, we can see the $\mathrm{Rep}(G)$ 1-form symmetry manifest, because all closed loops on the plaquettes of the lattice become topologically trivial.}
\label{Fig: 2D cluster}
\end{figure}

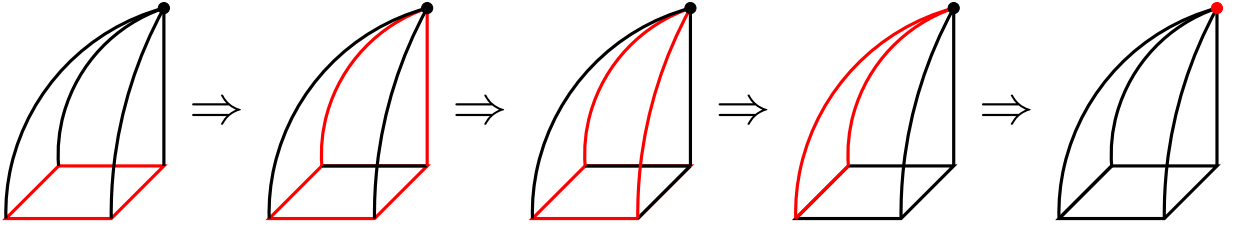
\begin{figure}
\begin{tikzpicture}[use Hobby shortcut,scale=0.7]
\tikzstyle{sergio}=[rectangle,draw=none]
\draw[very thick, color=red] (-1,1) -- (1,1) -- (0,0) -- (-2,0) -- cycle;
\filldraw[fill=black, draw=black] (1,4) circle (3pt);
\draw[very thick] (1,1) -- (1,4);
\draw[very thick] (-1,1) .. (-0.9,2) .. (1,4);
\draw[very thick] (0,0) .. (0.05,1) .. (1,4);
\draw[very thick] (-2,0) .. (-1.9,1) .. (1,4);
\path (2,2) node [style=sergio]{\huge$\Rightarrow$};
\draw[very thick, color=red] (4,1) -- (6,1) -- (5,0) -- (3,0) -- cycle;
\draw[very thick] (4,1) -- (6,1);
\draw[very thick, color=red] (1+5,1) -- (1+5,4);
\draw[very thick, color=red] (-1+5,1) .. (-0.9+5,2) .. (1+5,4);
\draw[very thick] (0+5,0) .. (0.05+5,1) .. (1+5,4);
\draw[very thick] (-2+5,0) .. (-1.9+5,1) .. (1+5,4);
\filldraw[fill=black, draw=black] (6,4) circle (3pt);
\path (7,2) node [style=sergio]{\huge$\Rightarrow$};
\draw[very thick, color=red] (9,1) -- (11,1) -- (10,0) -- (8,0) -- cycle;
\draw[very thick] (9,1) -- (11,1);
\draw[very thick] (10,0) -- (11,1);
\draw[very thick, color=black] (1+10,1) -- (1+10,4);
\draw[very thick, color=red] (-1+10,1) .. (-0.9+10,2) .. (1+10,4);
\draw[very thick, color=red] (0+10,0) .. (0.05+10,1) .. (1+10,4);
\draw[very thick, color=black] (-2+10,0) .. (-1.9+10,1) .. (1+10,4);
\filldraw[fill=black, draw=black] (11,4) circle (3pt);
\path (12,2) node [style=sergio]{\huge$\Rightarrow$};
\draw[very thick, color=black] (9+5,1) -- (11+5,1) -- (10+5,0) -- (8+5,0) -- cycle;
\draw[very thick, color=red] (14,1) -- (13,0);
\draw[very thick, color=black] (1+15,1) -- (1+15,4);
\draw[very thick, color=red] (-1+15,1) .. (-0.9+15,2) .. (1+15,4);
\draw[very thick] (0+15,0) .. (0.05+15,1) .. (1+15,4);
\draw[very thick, color=red] (-2+15,0) .. (-1.9+15,1) .. (1+15,4);
\filldraw[fill=black, draw=black] (16,4) circle (3pt);
\path (17,2) node [style=sergio]{\huge$\Rightarrow$};
\draw[very thick, color=black] (9+10,1) -- (11+10,1) -- (10+10,0) -- (8+10,0) -- cycle;
\draw[very thick, color=black] (14+5,1) -- (13+5,0);
\draw[very thick, color=black] (1+20,1) -- (1+20,4);
\draw[very thick, color=black] (-1+20,1) .. (-0.9+20,2) .. (1+20,4);
\draw[very thick] (0+20,0) .. (0.05+20,1) .. (1+20,4);
\draw[very thick, color=black] (-2+20,0) .. (-1.9+20,1) .. (1+20,4);
\filldraw[fill=red, draw=red] (21,4) circle (3pt);
\end{tikzpicture}
\caption{Illustration of the 1-form $\mathrm{Rep}(G)$ non-invertible symmetry as the contractibility of loops corresponding to the symmetry operators in Eq.~\eqref{Eq: 2D cluster symmetry} on the CW complex.}
\label{fig:1formSSB}
\end{figure}

Now we embed $\ket{\mathcal{C}_2}$ onto a CW complex.
This is obtained from the original Lieb lattice by
extending the qudits placed on the vertices of the Lieb lattice vertically, and associating each vertex of the Lieb lattice with a vertex of the CW complex, as
illustrated in Fig.~\ref{Fig: 2D cluster}.
Each qudit remaining on the horizontal plane is involved in exactly two $X$ checks, but each vertical qudit is only invovled in one. Therefore, the new vertical edges terminate in ghost checks, which are identified by the plaquettes into a single ghost vertex to for the CW complex $\Sigma_2$, as shown in the right panel of Fig.~\ref{Fig: 2D cluster}. Similarly, Fig.~\ref{fig:1formSSB} illustrates the 1-form non-invertible symmetry as the set of contractible loops on the square lattice formed by the edges of the original Lieb lattice.

Similar to the 1D non-Abelian cluster state, 
we obtain that the fundamental group of $\Sigma_2$ is trivial.  This is easiest to see by noting that $\Sigma_2$ is the (topological) cone of the 1D CW complex corresponding to the square lattice graph, and every cone is a simply connected space.
Equivalently, $\Sigma_2$ is the $2$-skeleton of $D^3$ (the 3-disk), and the fundamental group can only depend on the $2$-skeleton of the complex, and therefore $\pi_1(\Sigma_2) \cong \pi_1(D^3) = \{1\}$.
Thus, the 2D non-Abelian $G$-cluster state can also be treated as a quantum double model embedded onto the CW complex $\Sigma_2$ with a ghost 0-cell, with a unique ground state.

\section{Rigidity of CSS codes for non-Abelian simple groups}\label{sec:rigid}

In the previous section,  we explored a basic extension of a well-known construction of group-qudit codes:  the quantum double model.   Remarkably, we will prove in this section that under seemingly mild assumptions, for certain non-Abelian $G$, these are the \emph{only} group qudit CSS codes that can be built.

To motivate this result, firstly, let us emphasize that clearly any \emph{qubit} CSS code, which we have already seen can be interpreted as a group qudit code with $G=\mathbb{Z}_2$, can be embedded into many other $G$ (cf. \cite[Sec.~VIII.B]{faist2020continuous}).
\begin{eg}\label{eg:silly}
    Let $G$ be a group with a non-trivial element $h\in G$ such that $h^2=1$, so that the subgroup $\langle h\rangle \cong \mathbb{Z}_2$.   
    Now, given 4 qudits, consider $Z$-checks of the form $Z^{\langle h\rangle}_1,\ldots, Z^{\langle h\rangle}_{4},\ldots Z^1_{1234}$ and a single $X$-check,  $X^h_{1234,\emptyset}$, following the notation of \eqref{eq:Xg_shorthand}.   This code is equivalent to the standard 4-qubit CSS code on $\mathbb{Z}_2$, with stabilizer group $X_1X_2X_3X_4$ and $Z_1Z_2Z_3Z_4$.
\end{eg}

This simple example can clearly be generalized to any CSS code on $\mathbb{Z}_2$.   To avoid this and similar qubit-based constructions, it makes sense to demand some additional structure.  
A modest proposal is to study $G$-covariant codes, which we define as:

\begin{defn}[$G$-covariant code] \label{def:covariant}
  A group CSS code on $G$ is covariant if and only if
  $X_{\mathrm{tot}}^g$ is a code automorphism for all $g \in G$, where
  \begin{equation}
        X^g_{\mathrm{tot}} := \prod_{i=1}^n \overrightarrow{X}^g_i\overleftarrow{X}^g_i .
    \end{equation}
By code automorphism, we mean that $X_{\mathrm{tot}}^g\mathcal{S}X_{\mathrm{tot}}^{g^{-1}} = \mathcal{S}$,
    where $\mathcal S$ is the set of stabilizers of the code.
\end{defn}

Note that the property of covariance still encompasses all Abelian CSS codes: If $G$ is Abelian, then $X^g_{\mathrm{tot}}$ is the identity operator.   However, the $G$-covariant constraint is a non-trivial constraint on a group CSS code over non-Abelian $G$. Intuitively, the $G$-covariance condition is intended to force us to ``use" the full structure of the group $G$ when designing the code. While quantum double models are $G$-covariant, Example \ref{eg:silly} is not in general covariant.\footnote{It could be made $G$-covariant if there is a $\mathbb{Z}_2$ subgroup in the center of $G$.} Indeed, our main result Theorem \ref{thm:classify} will show that this $G$-covariance condition can be quite restrictive: subject to reasonable assumptions on the \(Z\)-check weight $w_Z$ and distance $d_Z$, we find that the only family of group CSS codes that are well-defined for an arbitrary group $G$ are quantum double codes on CW complexes.   Before we can state this result, we need to introduce some useful terminology from combinatorial group theory \cite{cohen1989combinatorial}: 

\begin{defn}[Freely trivial words]
\label{def:freely-trivial}
  A \textit{group word} is a function $f:G^{r} \to G^w$ (as sets), where each component of
  $f(g_1, \dots, g_r)$ is $g_i^{\pm 1}$ for some $i$, and where $w$ is called the weight of $f$. For example, $f(g) = (g^{-1}, g,
  g)$ has $r=1$ and $w=3$.  
  A group word $f$ may be \textit{evaluated} in a given $G$ by taking the product, \(g_1 g_2 \cdots g_r\), of the
  components of $f(g_1,\dots, g_r)$.
  We say that $f$ is \textit{freely trivial} if $f$ evaluates to $1$ for all inputs in $G^r$ when
  $G$ is the free group of rank $w$. 
\end{defn}
  
By the properties of free groups \cite{cohen1989combinatorial}, all freely trivial words
are related to $1$ by a sequence of elementary reductions, i.e. canceling
adjacent inverses. This implies that freely trivial $Z$-constraints are equivalent to the $Z$-checks of a quantum double (Lemma \ref{lem:Zcheckform}).

\begin{defn}[Group law]\label{def:grouplaw}
    A group law is a function $f(g_1,\ldots, g_r)$ which evaluates to 1 for every $(g_1,\ldots, g_r)\in G$.  The smallest law weight, denoted as $\alpha(G)$, is the smallest possible weight of a group law for $G$.
\end{defn}

Clearly $\alpha(G)$ is finite for any finite group $G$, because $f(a)=a^{|G|}$ is not freely trivial but is a group law.
Non-trivially however, $\alpha(G)$ can be arbitrarily smaller in groups with large $|G|$: we are aware of examples for $\mathrm{PSL}_q(\mathbb{F}_r)$ and $\mathrm{A}_n$ \cite{cohen1989combinatorial}, and provide proofs for these in Appendix~\ref{app:group-laws}.

\begin{thm}\label{thm:classify}
Let $G$ be non-Abelian and simple (the commutator subgroup $[G, G]=G$) with smallest law weight $\alpha$.  Define \begin{equation}
    \beta := \left\lfloor \frac{\alpha-1}{2}\right\rfloor. \label{eq:beta_def}
\end{equation} Then every $G$-covariant CSS code with $d_Z>\beta$ and maximal
$Z$-check weight $w_Z\le \beta$ is, after a suitable change of generating set
for $\mathcal{S}_X$ and $\mathcal{S}_Z$, a quantum double code
(Definition \ref{def:roughQD}).
\end{thm}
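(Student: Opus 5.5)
The plan is to treat the $Z$-checks and the $X$-checks separately, and then show that their compatibility forces exactly the structure of Def.~\ref{def:generalizedQDsmooth}, up to ghost vertices as in Def.~\ref{def:roughQD}. The two group-theoretic inputs are simplicity, $[G,G]=G$, and the absence of group laws of weight $<\alpha$. From the latter, any word of weight $\le 2\beta+1\le\alpha$ that evaluates to $1$ on all of $G^r$ must be freely trivial.

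\textbf{Step 1 ($Z$-checks).} Consider a $Z$-check $Z^K_{q_1\cdots q_j}$ with $j\le w_Z\le\beta$ and local subgroup $K$. I first use $G$-covariance. Conjugating by $X^g_{\mathrm{tot}}$ sends $g_{q}\mapsto g g_q g^{-1}$, so the set $\mathcal{S}$ is invariant under replacing $K$ by $gKg^{-1}$. Combining checks, the admissible set is then cut out by the normal core of $K$ intersected over conjugates. Since $G$ is simple, the effective subgroup is either $\{1\}$ or $G$; the case $K=G$ gives a trivial check and is discarded.

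Strictly, a single check with $K$ non-normal need not combine this way, so I expect to have to argue via the flatness locus. The admissible configurations $\mathcal{C}_Z$ are invariant under the $X$-group, and $d_Z>\beta$ means no check of weight $\le\beta$ can distinguish codewords. This should let me replace each check by its $G$-conjugation-averaged version $Z^{\{1\}}$.

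Next I show each remaining word $w(g_{q_1},\dots,g_{q_j})=1$ is, after free reduction, the boundary word of a cyclic loop. Suppose the check is not freely trivial as a relation. Then I compare it with the $X$-check group. If the word were a law it would be trivial; since its weight is $<\alpha$, it is not a law. This gives a genuinely nontrivial constraint, which after cyclic reduction is a cyclic word in the qudit variables. That is the form $Z^{\{1\}}_{q_0,\dots,q_k}$ of Lemma~\ref{lem:Zcheckform}.

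\textbf{Step 2 ($X$-checks).} Let $H\le G^n\times G^n$ be the $X$-check group. Covariance makes $H$ stable under the diagonal conjugation action. The commutator identities in Eq.~\eqref{eq:Xoperatormultiplication}, combined with $[G,G]=G$, show the following for each coordinate $i$ (left or right). The projection of $H$ onto that coordinate is normal, so it is $1$ or $G$. Moreover, the kernel subgroups are generated by commutators. Consequently, if two coordinates $a,b$ appear with correlated values in $H$, commutators produce elements supported on one of them unless the correlation is the full diagonal $g_a=g_b$.

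Goursat's lemma for subdirect products of simple groups then gives the following structure. Up to a change of generators, $H$ is a product over blocks $B_\alpha$ partitioning the active coordinates, with $H=\prod_\alpha\{X^g_{B_\alpha}\}$. Automorphisms twisting the diagonal are ruled out by covariance, or at worst reduce to inner ones, which are absorbed.

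There is also a second case to exclude, in which some single coordinate is fully free, i.e.\ some $\overrightarrow{X}^g_i$ lies in $H$. I plan to handle it with $d_Z>\beta$: a free coordinate would either kill all $Z$-checks on that qudit or make a weight-one $Z$ operator detectable, contradicting the distance bound.

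\textbf{Step 3 (compatibility and assembly).} The requirement that each $X^g_{B_\alpha}$ preserve $\mathcal{C}_Z$ forces the consecutive letters of each cyclic word to share blocks with opposite orientations. This is exactly the condition of Def.~\ref{def:generalizedQDsmooth}. Coordinates not covered by any block become ghost vertices, and Prop.~\ref{prop:roughQDonCW} finishes the identification with a rough quantum double.

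\textbf{Main obstacle.} I expect the hardest part to be Step 1. The difficulty is showing that a preserved, non-law word of weight $\le\beta$ must cancel letter-by-letter under the block actions. The natural approach is to apply $X^g_{B_\alpha}$ and observe that the word changes to $w'$. Preservation then requires $w=1\Leftrightarrow w'=1$ on $\mathcal{C}_Z$. Forming $w'w^{-1}$ gives a word of weight $\le 2\beta+1<\alpha+1$ in the variables and $g$, which must vanish identically and hence be freely trivial. That in turn forces adjacent insertions of $g,g^{-1}$, i.e.\ loop structure. Making the ``on $\mathcal{C}_Z$'' versus ``on all of $G^r$'' distinction rigorous is where I will use $d_Z>\beta$: it guarantees that the $\le\beta$ variables involved are unconstrained locally on $\mathcal{C}_Z$.
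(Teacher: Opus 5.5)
Your handling of the local subgroups (normal core) and of the $X$-check group (Goursat on pairs of coordinates, blocks $B_\alpha$, twists excluded by covariance) is essentially the paper's Lemma~\ref{lem:groupword}, Lemma~\ref{lem:equivalentthm} and the argument after it. Three side remarks there are off but harmless:
\begin{itemize}
\item Covariance already puts every conjugated check in $\mathcal S$, so no appeal to $d_Z$ is needed for the local subgroups.
\item Covariance also kills inner twists, since $Z(G)=1$, so nothing needs ``absorbing''.
\item A ``free coordinate'' need not be excluded, and your argument could not exclude it. It is just a singleton block, i.e.\ a degree-one vertex, which Def.~\ref{def:generalizedQDsmooth} allows.
\end{itemize}

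The genuine gap is the step you call the main obstacle, and the route you sketch for it fails. Preservation of $\mathcal C_Z$ only gives $w'=1$ \emph{whenever} $w=1$; it does not make $w'w^{-1}$ trivial. Already for an honest 2-cell, the $X$-check at the basepoint sends $w\mapsto gwg^{-1}$ (proof of Prop.~\ref{prop:QDonCW}), and $gwg^{-1}w^{-1}$ is not a law. Nor does $d_Z>\beta$ make the qudit variables of a check ``locally unconstrained'' on $\mathcal C_Z$: the check itself constrains them, and $d_Z$ is a statement about codeword reduced states, not about the support of $\mathcal C_Z$. Your weight count is also off: $w'w^{-1}$ can have weight up to $4j$, and even weight $2\beta+1$ can equal $\alpha$. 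Finally, ``not a law, hence after cyclic reduction a cyclic word of the form in Lemma~\ref{lem:Zcheckform}'' says nothing about which $X$-blocks consecutive letters share.

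The missing idea is to make the block gauge parameters, not the qudit values, the free variables. Once all checks are $Z^{\{1\}}$, the identity configuration lies in $\mathcal C_Z$. Hence so does its whole orbit under $H$ enlarged by $X^h_{B_0}=X^h_{\mathrm{tot}}\prod_\alpha X^{h^{-1}}_{B_\alpha}$, which preserves $\mathcal C_Z$ because $X^h_{\mathrm{tot}}$ merely conjugates every word. On this orbit $g_q=a_\alpha a_\beta^{-1}$ for $q\in B_\alpha$, $q+n\in B_\beta$, with the $a$'s independent and arbitrary in $G$. So a weight-$k$ check becomes a weight-$2k$ word in the $a$'s that is identically $1$, i.e.\ a law. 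Since $2k\le 2\beta<\alpha$, it is freely trivial; this is exactly why $\beta=\lfloor(\alpha-1)/2\rfloor$.

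You then still need a change of generators for $\mathcal S_Z$, because your Step~3 claim (consecutive letters share a block with opposite orientations) is false for the given checks. For example, take edges $e_1\colon u\to v$, $e_2\colon v\to u$, $e_3\colon x\to y$, $e_4\colon y\to x$ and checks $e_3e_4=1$, $e_1e_3e_4e_2=1$. The vertex checks preserve $\mathcal C_Z$, yet $e_1$ and $e_3$ share no vertex.

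The paper handles this by reading the non-crossing cancellation pattern of the freely trivial vertex word as a tree of nested closed walks, each of length $\le k<d_Z$. Since $|[\mathbf 1]\rangle$ is a codeword, each such walk has trivial holonomy in every codeword. Otherwise its gauge-invariant holonomy projector would violate \eqref{eq:KL}. So each walk can be added as a 2-cell check, and pruning the tree (Lemma~\ref{lem:Zcheckform}) replaces the original check by genuine 2-cell checks. That is where $d_Z>\beta$ is actually used.
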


\begin{proof}
The proof is presented in Appendix \ref{app:main}.
\end{proof}

One may hope that the group CSS framework was too restrictive, and that a different generalization of CSS codes to $G$-valued qudits would allow for more sophisticated codes.  A hint that this might be possible is the absence of duality between $X$/$Z$-type checks (Remark \ref{rmk:XZdual}).  Unfortunately, the main obstruction to building a good group CSS code is the $X$-checks (Definition \ref{def:Xcheck}), which were the nicer of the two generalizations of CSS codes (these checks are still transversal).

\begin{cor}
    Given any quantum stabilizer code stabilized by $X$-checks as given in Definition \ref{def:Xcheck}, the code distance is bounded by \eqref{Eq: dZ bound} if $G$ is non-Abelian and simple and the code is $G$-covariant.\label{cor:dZ_simple_covariant}
\end{cor}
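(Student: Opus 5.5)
The plan is to show that the $X$-check structure alone forces the quantum-double form, then invoke Theorem~\ref{thm:classify} and Theorem~\ref{thm:bounds}. The corollary says ``any quantum stabilizer code stabilized by $X$-checks as in Definition~\ref{def:Xcheck}''. Its $Z$-type part is not assumed to be of the form \eqref{eq:generalZcheck}. So the first step is a reduction: the projector onto the joint $+1$ eigenspace of the $X$-checks is $\frac{1}{|\diagX|}\sum_{\sigma\in\diagX}\sigma$. Its range is spanned by the orbit states $\ket{[\mathbf g]}$ of Prop.~\ref{prop:codespace}, and any codespace is a subspace of this range. I will argue that, after passing to the subset $\mathcal C_Z$ of orbits actually supported in the code, the code coincides with a group CSS code. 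One route is to take the $Z$-type conditions to be the diagonal projectors onto configurations appearing in the code. Where this does not give local $Z$-checks, I will only use the $X$-side of the argument, since the distance bound needs only $d_Z$ and the graph structure.

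The core step is to import the part of the proof of Theorem~\ref{thm:classify} that concerns the $X$-checks. For $G$ non-Abelian and simple, $G$-covariance together with the commutator identities \eqref{eq:Xoperatormultiplication} shows the following. If a qudit were left-multiplied, or right-multiplied, by two independent checks, then group commutators $\overrightarrow X^{ghg^{-1}h^{-1}}$ would generate, since $[G,G]=G$, all of $\overrightarrow X^{G}$ on isolated qudits. Those qudits then carry no logical information and can be discarded. Hence, after a change of generating set, the $X$-check group is $\prod_\alpha G$ acting diagonally on blocks $B_\alpha$ that partition the $2n$ coordinates, exactly as in Def.~\ref{def:generalizedQDsmooth}, possibly with ghost vertices. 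This yields the graph (the 1-skeleton) of Prop.~\ref{prop:QDonCW} and Prop.~\ref{prop:roughQDonCW}.

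With the graph in hand, I will redo the systolic argument of Theorem~\ref{thm:bounds}. The $X$-checks gauge-fix a spanning tree, so any operator on fewer qudits than the shortest loop that is non-trivial on the code satisfies Knill--Laflamme, by the spanning-tree argument in Cor.~\ref{cor:dZloop}. Conversely, a loop whose holonomy is not fixed on the code supports a $Z$-logical. The loops whose holonomy is fixed can be absorbed as 2-cells. Therefore $d_Z$ is at most the girth of the pruned graph $\widetilde\Sigma_*$. Counting orbits gives $\dim\mathcal C\le |G|^{L}$, so $k\le L$, and the Moore bound (Theorem~\ref{thm:alon}) gives \eqref{Eq: dZ bound}.

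The main obstacle is the second step: showing, without the $w_Z\le\beta$ hypothesis, that covariance and simplicity force the bivalent block structure. In particular, one must check that the generated weight-one $X$ operators really trivialize those qudits rather than merely constrain them. I would first try the commutator argument directly on the $X$-check group $\diagX\le G^n\times G^n$. The idea is to project $\diagX$ onto each coordinate and use that a normal subgroup of simple $G$ is $1$ or $G$, in the spirit of Lemma~\ref{lem:equivalentthm}.
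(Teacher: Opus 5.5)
Your proposal is correct and takes essentially the same route as the paper. Covariance makes each projection $\pi_j(\diagX)$ a normal subgroup, hence $1$ or $G$; the Goursat-type Lemma~\ref{lem:equivalentthm} then forces the $X$-check group into vertex-block form with no hypothesis on the $Z$-checks, after which the spanning-tree/Moore-bound argument of Theorem~\ref{thm:bounds} applies. The one step you leave implicit is that covariance, together with the trivial center of $G$, is also what removes the automorphism twists $\theta_j$ in \eqref{eq:thetagenerate}, which the paper handles in the discussion right after that lemma.
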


\begin{proof}
    The assumptions above are already enough to force the $X$-checks of a $G$-covariant code take the form of a quantum double code (see the discussion after Lemma \ref{lem:equivalentthm}), so the proof of Theorem \ref{thm:dZ bound} applies.
\end{proof}

\section{Outlook}\label{sec:outlook}
We generalize the standard definition of CSS codes on qubits (more generally, on modular qudits $\mathbb{Z}_m$) 
to qudits associated with general groups $G$, yielding group-CSS codes. 
Our construction includes quantum double models as a special case, but it is not limited to them.   

We precisely characterize how quantum double codes can be placed on general 2D CW complexes, which go beyond two-dimensional manifolds due to the presence of junctions.
Surprisingly, we show that, under certain mild assumptions (Theorem \ref{thm:classify}), \emph{every} non-trivial group qudit code is a quantum double model defined on a non-manifold structure called a CW complex.

The CW framework allows one to overcome restrictions on code parameters arising from Euclidean and hyperbolic geometries, yielding a constant-rate code with a \(Z\)-distance (\(X\)-distance) that is linear (logarithmic) in the number of qubits \(n\).
Non-manifold CW complexes, together with our manifold-inspired tools to understand them, open up the possibility of more interesting and useful fault-tolerant gate sets (cf. \cite{warman2026}).

The CW framework also encapsulates several other many-body models.
In particular, we show that non-invertible SPT and SSB states, even in one dimension, are naturally understood as quantum double models on special CW complexes.  This resolves an open question raised in Ref.~\cite{Fechisin_2025} and explains these novel states with non-invertible symmetries in the language of quantum error-correcting codes.  Our framework also makes it transparent that these SPT states are edge states of (rough) quantum double codes, as was described in Refs.~\cite{QD_boundary, Cong1, Cong2, Cong3, warman2025categorical, schafernameki2023ictplecturesnoninvertiblegeneralized, shao2024whatsundonetasilectures, Luo_2024, thorngren2019fusioncategorysymmetryi, thorngren2021fusioncategorysymmetryii, Choi_2022, Bhardwaj_2022, zhang2023anomalies11dcategoricalsymmetries}.  

We also extend techniques used to calculate logical dimension and operators from surface codes to general quantum doubles. 
For typical surface codes (with or without boundaries), this logical structure comes about from non-contractible loops as well as defects between different types of boundaries.
We show that this intuition holds for CW quantum doubles and develop a formalism to explicitly determine the logical structure for any combination of group and CW complex.
In particular, our logical operator recipe unifies transversal logical gates, ribbon operators, and 1-form symmetries under one roof.

We conclude with some discussion of alternative ideas for how to find new group qudit codes and potentially intriguing physics that might arise in this setting.

\begin{enumerate}[1.]
    \item 
    Our code bounds on CW quantum doubles prohibit asymptotically good codes of that type.
    Our rigidity theorem further suggests that constructing such codes --- let alone those of LDPC-type --- requires special types of \(G\).
    It is similarly difficult to construct classical good codes over groups \cite{biglieri1995construction,interlando1996group}, and obtaining such constructions may require us to leave generalized group-qudit stabilizer formalisms altogether (cf. \cite{sahebi2012asymptotically} for the classical case).

    \item It would be interesting to see how much of the stabilizer-like formalism studied here can be salvaged when one further generalizes from group-based codes to category-theoretic ones \cite{levin2005string,koenig2010quantum}, which \textit{can} realize universal computation fault tolerantly \cite{freedman2002modular,zhu2020quantum}.

    \item Consider an arbitrary quantum code on $\mathbb{Z}_m$; these include asymptotically good codes \cite{panteleev2022asymptotically,leverrier2022quantum,dinur2023good}.  Consider the sign automorphism $\varphi \in \mathrm{Aut}(\mathbb{Z}_m)$ obeying $\varphi(x)=-x$.  Gauging this global automorphism leads to an expander code with non-invertible symmetries: a logical operator now takes the (schematic) form of $X^m+X^{-m}$ which is non-unitary.  This gauging procedure can be trivially incorporated in the language of group CSS codes as follows.  Consider the dihedral group $\mathrm{D}_{2m} :=\langle r,s | s^2,srsr,r^m\rangle $ with subgroup $\mathbb{Z}_m=\langle r\rangle$.   We can use $Z$-checks of the form $Z^{\langle r\rangle}_{g_i}$ for each qudit $g_i$, after which point the starting code is already a group qudit CSS code (Corollary \ref{cor:Zm}).  To gauge the automorphisms we simply add $X$-check $\prod_{i=1}^n \overrightarrow{X}^s_i\overleftarrow{X}^s_i$. By introducing additional qudits restricted to lie in the subgroup $\langle s\rangle$, this last check may be replaceable by small checks acting on additional qudits.

\item Consider a group with a non-trivial abelianization $G_{\mathrm{ab}} = G/[G,G]$.  Take a quantum double code on a 2D CW complex with $X$-checks restricted to multiplication in $[G,G]$, and include multiplication by non-trivial elements of $G_{\mathrm{ab}}$ only on combinations of the double's $X$-checks.   Notice that any check of the form $Z^{[G,G]}_{q_1\cdots q_p}$ is equivalent to a check of the form $\varphi(q_1)+\cdots \varphi(q_p)=0$ with $\varphi : G\rightarrow G_{\mathrm{ab}}$ the homomorphism with $\ker(\varphi)=[G,G]$.  We can then add a rather general Abelian group CSS code on $G_{\mathrm{ab}}$ ``on top of" the quantum double code on the full group $G$, so long as the $Z$-checks of the quantum double code are also checks of the abelian code.  More generally, taking a group with an interesting central series $\cdots \unlhd [G,G]\unlhd G$ may lead to a ``layered" quantum double model.  Potentially similar ideas involving quantum double codes mixing $[G,G]$ and $G$-valued qudits can also be found in the context of color codes \cite{Brell_2015} and fracton codes \cite{tantivasadakarn2021non}.   A simple example of a group where such codes may be interesting are the dihedral groups $\mathrm{D}_{2m}$, where $\mathbb{Z}_m=[\mathrm{D}_{2m},\mathrm{D}_{2m}]$; this enables us to leverage known expander codes on $\mathbb{Z}_m$.\footnote{Classical codes of this kind for prime \(p\) have been constructed \cite{sipser2002expander,zemor2002expander,RaoCodingTheoryLecture6}\cite[Theorem~7.16]{jeronimo2023fast}, and we should be able to embed them into \(\mathbb{Z}_{m=p^l}\) for any power \(l\) via a strategy similar to that of  Example~\ref{eg:silly}.}   

\item There is a natural way to ``twist" group CSS codes by modifying the $X$-type checks to include multiplication by a diagonal unitary in the $Z$-eigenbasis.  
We hope to describe this further in a future paper; the construction is more general than specifying a 3-cocycle, as in a twisted quantum double on a manifold \cite{Hu_2013, Hu_2017, CuiTwisted}. 
The recent constructions in  \cite{cupsgatescohomology,  Lin:2024uhb, tiedinknots, Zhu:2026vec, vedhika_twisted} can be understood as twisted group CSS codes with group $G=\mathbb{Z}_2$, for example. 

    \item 2-group gauge theories \cite{Bullivant_2017, Pfeiffer_2003, Baez_2010} defined using a crossed module with two
      finite groups should naturally lead to quantum codes on 3D CW complexes.
      Our initial investigation suggests that expander 2-group codes based on non-Abelian groups still have a string-like logical sector, suggesting that similar constraints to Theorem \ref{thm:dZ bound} apply,  at least to one sector of logical operators.

      \item We are able to re-formulate quantum-double and more general group-based codes in terms of double cosets.
    Another example of such double-coset codes are the diatomic molecular codes \cite[Sec. VI]{Albert_2020}.
    These two examples happen to be defined on symmetric spaces, the space \((G\times G) / G\)~\cite{bachoc2012invariant} in our case, and the sphere in the molecular case.
    It would be interesting to see how far these frameworks can be pushed to define codes and phases of matter on other symmetric and more general homogeneous spaces.
\end{enumerate}

\emph{Note added.}--- As we were writing this manuscript, the preprint Ref.~\cite{ohyama2026parameterized} appeared, which introduced the 2D cluster-state Hamiltonian from Sec.~\ref{Sec: non-inv SPT} along with the PEPS representation of its ground state.

\section*{Acknowledgments}
We thank 
Meng Cheng, Nicolas Delfosse, Tyler Ellison, Michael Gullans, Weizhen Jia, Henry Lamm, Chong Wang, Dominic Williamson, and Yichen Xu for useful discussions. 
This work was supported by the
Department of Defense through the National Defense Science \& Engineering Graduate Fellowship Program (BTM), by the
Department of Energy under Quantum Pathfinder Grant DE-SC0024324 (JHZ, AL), and by Air Force Office of Scientific Research Grant FA9550-24-1-0120 (AL).

\begin{appendix}
\renewcommand{\thesubsection}{\thesection.\arabic{subsection}}
\renewcommand{\thesubsubsection}{\thesubsection.\arabic{subsubsection}}
\renewcommand{\theequation}{\thesection.\arabic{equation}}

\section{Group CSS codes with Abelian $G$ are group GKP codes}\label{app:abelian}

\begin{prop}\label{prop:groupGKPabelian}
    Every group CSS code with Abelian $G$ on $n$ qudits is a group GKP code with group $G^n$.  Moreover, the Hamiltonian \eqref{eq:codeH} is a sum of commuting terms.
\end{prop}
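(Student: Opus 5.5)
The plan is to show that, when $G$ is Abelian, both ingredients of a group CSS code (Def.~\ref{def:CSS}) collapse to subgroups of $G^n$. We then identify the codespace with the GKP codespace of Def.~\ref{def:GKP}.

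\textbf{Step 1: the $Z$-checks cut out a subgroup $K \le G^n$.} For Abelian $G$, the word map $w(\mathbf g)=g_{q_1}^{a_1}\cdots g_{q_j}^{a_j}$ is a homomorphism $G^n\to G$. Hence each $Z$-check $Z^{K_i}_{q_1\cdots q_j}$ projects onto the preimage $w^{-1}(K_i)$, which is a subgroup of $G^n$. The admissible set $\mathcal C_Z$ is the intersection of these preimages over all checks, so it is a subgroup $K\le G^n$. This is exactly the observation in Remark~\ref{rmk:abeliancodelinear}.

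\textbf{Step 2: the $X$-checks form a left-multiplication group $H \le G^n$.} Left and right multiplication coincide up to inversion, since $\overleftarrow{X}^{h}=\overrightarrow{X}^{h^{-1}}$. So every $X^{\mathbf g,\mathbf h}$ equals $\overrightarrow{X}^{\mathbf g\mathbf h^{-1}}$. The $X$-check group $\diagX$ therefore acts on $G^n$ through the homomorphic image $H:=\{\mathbf g\mathbf h^{-1}\}\le G^n$, acting by left translation. Equivalently, in the double-coset picture of Prop.~\ref{prop:codespace}, $\diagX\backslash G^{2n}/\diaggroup\cong H\backslash G^n$.

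\textbf{Step 3: $H\le K$.} Def.~\ref{def:CSS} requires each $X$-check to preserve the $+1$ eigenspace of the $Z$-checks, i.e.\ $\overrightarrow{X}^{\mathbf h}$ maps $K$ into $K$. Applying this to $\mathbf 1\in K$ gives $\mathbf h\in K$. This is the one place where an argument is needed, and it goes through because the trivial configuration always satisfies every $Z$-check (each $K_i$ contains $1$).

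\textbf{Step 4: identify the codespace.} By Prop.~\ref{prop:codespace}, $\mathcal C$ is spanned by uniform superpositions over the $H$-orbits contained in $K$. These orbits are the cosets $H\backslash K$, which is precisely Def.~\ref{def:GKP} with $H\le K\le G^n$.

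\textbf{Step 5: commutativity of the Hamiltonian.} The $Z$-terms are diagonal and so commute among themselves. The $X$-terms commute among themselves because $G^n$ is Abelian. For a mixed pair, we need $w(\mathbf h\mathbf g)=w(\mathbf h)w(\mathbf g)$ to lie in $K_i$ exactly when $w(\mathbf g)$ does. This holds if $w(\mathbf h)\in K_i$, which is true since $\mathbf h\in H\le K$. So $\overrightarrow{X}^{\mathbf h}$ commutes with each projector $Z^{K_i}$ on the whole Hilbert space, not just on the codespace. The terms $(S+S^\dagger)/2$ in \eqref{eq:codeH} are built from these operators and therefore commute as well.

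The main subtlety is Step 5. The CSS condition as stated only guarantees that $X$-checks preserve the joint eigenspace of the $Z$-checks, whereas commutativity with each individual check needs $\mathbf h\in w^{-1}(K_i)$ for every $i$. This follows from $\mathbf h\in K$ together with the homomorphism property of $w$, so no further assumption is required.
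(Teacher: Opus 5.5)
Your proposal is correct and follows essentially the same route as the paper: you turn right multiplications into left ones to get $H \le G^n$, realize each $Z$-check as a preimage subgroup with $K$ their intersection, deduce $H \le K$ from the requirement that $X$-checks preserve the $Z$-check eigenspace, and read off operator-level commutation from $\mathbf h \in K$. Your use of the identity configuration $\mathbf 1 \in K$ to extract $\mathbf h \in K$ simply makes explicit the ``only if'' direction that the paper asserts without comment.
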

\begin{proof}
It is convenient to use addition as the group operation on an Abelian group $G$; accordingly we denote $\mathbf{0}$ as the identity operator for this proof.

Let us first analyze the $X$-checks. Notice that $\overrightarrow{X}^g=\overleftarrow{X}^{g^{-1}}$ for any $g$ in an Abelian group $G$, so without loss of generality, we take the $X$-checks to only contain left multiplication. 
Such checks, \(X^{\mathbf{h}, \mathbf{0}}\) with \(\mathbf{h} \in \diagX\) form a  representation of the \(X\)-check subgroup.

The $Z$-checks can similarly be re-written for Abelian $G$ in a nicer way. First we define $Z_\alpha:= \mathbb{I}(g_1^{a_{\alpha1}}\cdots g_n^{a_{\alpha n}}\in K_\alpha)$, where we have used the fact that $G$ is Abelian to write the group word in an ordered form without loss of generality.
Then, the stabilizer condition is equivalent to an equation whose solutions define another subgroup: 
\begin{equation} \label{eq:aalpha}
    Z_\alpha |\mathbf{g}\rangle =|\mathbf{g}\rangle \iff a_{\alpha 1}g_1+\cdots + a_{\alpha n} g_n \in K_{\alpha} \iff \mathbf{g}\in L_\alpha,
\end{equation} 
where in the last step we have defined the ``global'' subgroup $L_\alpha \le G^n$ by the set of solutions to this equation.   The set of solutions indeed form a group (see Remark \ref{rmk:abeliancodelinear}).  
We can then define the subgroup $K$ in Definition \ref{def:GKP} as the intersection \begin{equation}
    K := \bigcap_{\alpha } L_\alpha \leq G^n~.
\end{equation}
This yields the \(Z\)-check subgroup associated with the group GKP code.

All that is left now is to show that \(H \leq K\).
The frustration-free condition on Hamiltonian $H$ in \eqref{eq:codeH} -- equivalently the condition that the codespace is stabilized by all of our checks -- simplifies to checking compatibility of $X$ and $Z$ checks. 
Since $G$ is Abelian, the $X$-checks now all commute.  
An $X$-check corresponding to $\mathbf{h}\in H$ takes a codeword to another codeword if and only if for every $\alpha$, 
\begin{equation}
    a_{\alpha 1}h_1+\cdots a_{\alpha n}h_n \in K_\alpha ,  \label{eq:alpha_h}
\end{equation}
which is equivalent to $\mathbf{h}\in K$, or $H\le K$.   This establishes that the group CSS code is a group GKP code on $G^n$.  It is also clear that when \eqref{eq:alpha_h} holds, 
the $X$-checks and $Z$-checks commute as operators, meaning that 
Hamiltonian \eqref{eq:codeH} is a sum of commuting terms (keeping in mind that \(G\) is Abelian).
\end{proof}

\begin{cor}\label{cor:Zm}
    When $G=\mathbb{Z}_m^l$, the subgroups $H$ and $K$ in the group GKP construction can be expressed as parity check matrices $\mathsf{H}_X \in \mathbb{Z}_m^{ln_X\times ln}$ and $\mathsf{H}_Z \in \mathbb{Z}_m^{ln_Z\times ln}$ obeying $\mathsf{H}_X\mathsf{H}_Z^{\mathsf{T}}=0$.  The codespace \begin{equation}
        \mathcal{C} := \mathbb{C}[\mathrm{Ker}(\mathsf{H}_Z)/\mathrm{Im}(\mathsf{H}_X^{\mathsf{T}})].
    \end{equation}
\end{cor}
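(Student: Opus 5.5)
The statement to prove is Corollary~\ref{cor:Zm}. I would specialize Proposition~\ref{prop:groupGKPabelian} to $G=\mathbb{Z}_m^l$. The idea is to identify $G^n\cong\mathbb{Z}_m^{ln}$ as a $\mathbb{Z}_m$-module and rewrite the two global subgroups $H$ and $K$ as an image and a kernel of $\mathbb{Z}_m$-linear maps.

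\textbf{Step 1: the $X$-side.} By the proposition, the $X$-checks reduce to left multiplications $X^{\mathbf{h},\mathbf{0}}$. Their generators $\mathbf{h}^{(1)},\dots,\mathbf{h}^{(n_X)}\in\mathbb{Z}_m^{ln}$ are vectors. Stacking them as rows gives $\mathsf{H}_X$; padding with zero rows if needed makes it fit the stated $ln_X\times ln$ shape. Then $H=\mathrm{Im}(\mathsf{H}_X^{\mathsf{T}})$, since $H$ consists of exactly the integer combinations of the generators.

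\textbf{Step 2: the $Z$-side.} Each check requires $a_{\alpha1}g_1+\cdots+a_{\alpha n}g_n\in K_\alpha$, with $K_\alpha\le\mathbb{Z}_m^l$. The key point is to write the condition ``$x\in K_\alpha$'' as the vanishing of finitely many $\mathbb{Z}_m$-linear functionals on $\mathbb{Z}_m^l$.

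This is the main obstacle. Over a field it is immediate, but $\mathbb{Z}_m$ need not be a field. For a subgroup $K_\alpha\le\mathbb{Z}_m^l$, one can use the double-annihilator property of finite abelian groups under the character pairing $\mathbb{Z}_m^l\times\mathbb{Z}_m^l\to\mathbb{Z}_m$, $(x,y)\mapsto x\cdot y$. This pairing is perfect, and every character of $\mathbb{Z}_m^l$ has this form, so $K_\alpha=(K_\alpha^{\perp})^{\perp}$. Choosing generators $y_{\alpha,1},\dots,y_{\alpha,l}$ of $K_\alpha^\perp$ (at most $l$ suffice, since subgroups of $\mathbb{Z}_m^l$ are generated by $\le l$ elements) then gives
\[
\sum_i a_{\alpha i}g_i\in K_\alpha \iff y_{\alpha,j}\cdot\sum_i a_{\alpha i}g_i=0\ \ \forall j .
\]
Each such equation is a row of $\mathsf{H}_Z$, which therefore has at most $ln_Z$ rows, and $K=\mathrm{Ker}(\mathsf{H}_Z)$.

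\textbf{Step 3: conclude.} Proposition~\ref{prop:groupGKPabelian} gives $H\le K$. In matrix form this says every row of $\mathsf{H}_X$ lies in $\mathrm{Ker}(\mathsf{H}_Z)$, which is exactly $\mathsf{H}_Z\mathsf{H}_X^{\mathsf{T}}=0$, or equivalently $\mathsf{H}_X\mathsf{H}_Z^{\mathsf{T}}=0$.

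The codespace statement then follows from Definition~\ref{def:GKP}. Because $G^n$ is abelian, the left cosets of $H$ in $K$ form the quotient module. This gives $\mathcal{C}=\mathbb{C}[H\backslash K]=\mathbb{C}[\mathrm{Ker}(\mathsf{H}_Z)/\mathrm{Im}(\mathsf{H}_X^{\mathsf{T}})]$.
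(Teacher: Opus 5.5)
Your proof is correct, but it takes a different route from the paper's.

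The paper first ``unblocks'' each $\mathbb{Z}_m^l$-qudit into $l$ separate $\mathbb{Z}_m$-qudits, replacing each $Z$-check by $l$ identical-looking checks (one per copy). It then treats only $l=1$. There every local subgroup is $c_\alpha\mathbb{Z}_m$ with $c_\alpha\mid m$, so membership is the single linear condition $(m/c_\alpha)\sum_j a_{\alpha j}g_j=0$. This gives $(\mathsf{H}_Z)_{\alpha j}=(m/c_\alpha)a_{\alpha j}$. The orthogonality $\mathsf{H}_X\mathsf{H}_Z^{\mathsf{T}}=0$ is left as ``a short calculation.''

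You instead work directly on $\mathbb{Z}_m^l$. You use the double-annihilator property of the perfect pairing $(x,y)\mapsto x\cdot y$ to turn membership in an arbitrary $K_\alpha\le\mathbb{Z}_m^l$ into at most $l$ linear equations. For $l=1$ this reproduces the paper's rows exactly, since $(c\mathbb{Z}_m)^\perp=(m/c)\mathbb{Z}_m$.

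What your route buys is generality. The paper's unblocking implicitly assumes each $K_\alpha$ splits as a product of subgroups of the individual $\mathbb{Z}_m$ factors. That fails, for example, for the diagonal subgroup $\{(x,x)\}\le\mathbb{Z}_m^2$. Its annihilator is generated by $(1,-1)$, so its check reads $\sum_i a_{\alpha i}\bigl(g_i^{(1)}-g_i^{(2)}\bigr)=0$, which mixes the copies. Your duality argument covers such subgroups, and it is precisely what would be needed to make the paper's unblocking step rigorous in general.

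The paper's route, in exchange, is completely elementary: it gives explicit matrix entries without invoking characters. You also make the equivalence $H\le K\iff\mathsf{H}_Z\mathsf{H}_X^{\mathsf{T}}=0$ explicit, and you reconcile the stated $ln_X\times ln$ and $ln_Z\times ln$ shapes (zero-row padding, and the bound of $l$ generators for subgroups of $\mathbb{Z}_m^l$). The paper does neither.
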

\begin{proof}
    This is a well-established result in the literature, but we will go through it to see what is special about the groups $G=\mathbb{Z}_m^l$.   The essential case is $l=1$; for $l>1$ we can identically interpret the code as a group CSS code on $G=\mathbb{Z}_m$ with $ln$ qubits, where each of the $n_Z$ $Z$-checks is replaced with $l$ identical-looking checks, one for each ``copy" of the $\mathbb{Z}_m$.  

    Suppose that we have some generating set of group words $\mathbf{h}\in H$.  Then we can build the parity-check matrix $\mathsf{H}_X$ by defining it as follows: \begin{equation}
       \text{generator } \mathbf{h}_\beta =: ((\mathsf{H}_X)_{\beta 1}, \ldots (\mathsf{H}_X)_{\beta n})\in \mathbb{Z}_m^n.
    \end{equation} Similarly, the parity-check matrix $\mathsf{H}_Z$ is built by setting \begin{equation}
        (\mathsf{H}_Z)_{\alpha j} = \frac{m}{c_\alpha} a_{\alpha j} \label{eq:HZcalpha}
    \end{equation} where $a_{\alpha j}$ was given in \eqref{eq:aalpha}, and $c_\alpha$ is a divisor of $m$ that we now fix.  Every subgroup of $\mathbb{Z}_m$ is equivalent to $c\mathbb{Z}_m$ for an integer divisor $c$ of $m$.  Hence, the constant $c_\alpha$ in \eqref{eq:HZcalpha} is fixed by the choice of subgroup $H_\alpha =: c_\alpha \mathbb{Z}_m$ in the check, as $g\in c\mathbb{Z}_m$ if and only if $(m/c)\times g = 0\in\mathbb{Z}_m$.

    A short calculation shows that $\mathsf{H}_X\mathsf{H}_Z^{\mathsf{T}}=0$ is equivalent to conditions \eqref{eq:aalpha} and \eqref{eq:alpha_h}.
\end{proof}

\begin{rmk}\label{rmk:XZdual}
    In the special case where $G=\mathbb{Z}_m$, we could use a ``Hadamard" transformation to exchange the roles of $\mathsf{H}_X$ and $\mathsf{H}_Z$.  More generally, in this case we can identify the codespace as either $\mathrm{Ker}(\mathsf{H}_X)/\mathrm{Im}(\mathsf{H}_Z^{\mathsf{T}})$ or $\mathrm{Ker}(\mathsf{H}_Z)/\mathrm{Im}(\mathsf{H}_X^{\mathsf{T}})$.  
    
    If we take $G=\mathbb{Z}_m^l$, we can already see that the notion of duality above between $X$-checks and $Z$-checks is lost.  
    Take the example \(\mathbb{Z}_{2}^{2}=\{(\alpha,\beta)\,|\,\alpha,\beta\in\mathbb{Z}_{2}\}\) with binary addition being group ``multiplication''.
    Consider the code on two \(\mathbb{Z}_2^2\)-qudits with a single stabilizer $\overrightarrow{X}_{1}^{(1,1)}\overrightarrow{X}_{2}^{(0,1)}$.  
    To interpret this code in a dual basis with a single $Z$-check, we need to write this check as 
    \begin{equation}\label{eq:rmk-constraint}
        \phi(\alpha_{1},\beta_{1})+\theta(\alpha_{2},\beta_{2})=\alpha_{1}+\beta_{1}+\beta_{2}=0~, 
    \end{equation}
    where $\phi(\alpha,\beta)=\alpha+\beta$ and $\theta(\alpha,\beta)=\beta$.
    However, this is impossible using \(Z\)-checks from Def.~\ref{def:Zcheck} since such checks only test whether the group element is in some subgroup \(K\), \(\mathbb{I}((a_{1}\alpha_{1}+a_{2}\alpha_{2},a_{1}\beta_{1}+a_{2}\beta_{2}) \in K)\), and do not mix the binary coordinates \(\alpha\) and \(\beta\).
    This is because constraints of the form~\eqref{eq:rmk-constraint} are homomorphisms from $\mathbb{Z}_2^2\rightarrow \mathbb{Z}_2$, which are not covered by the framework of group CSS codes introduced in Definition \ref{def:CSS}.

    Of course, one can unblock qudits and consider a group CSS code on $G=\mathbb{Z}_m^l$ as a group CSS code on the group $\mathbb{Z}_m$ with $n \times \ell$ qudits instead, in which case the notion of duality would be restored.
    In that case, each coordinate \(\alpha_1, \beta_1, \alpha_2, \beta_2\) would
    be its own group element, and our \(Z\)-checks would accommodate their
    addition modulo two. This generalizes to arbitrary finite Abelian groups $A$
    if we allow for local unblocking of qudits and increasing the local dimension.  By the classification theorem for finite Abelian groups, $A \cong \Z_{q_1}^{\ell_1} \times \dots \Z_{q_r}^{\ell_r}$ for $q_1, \ldots, q_r$ which are powers of not necessarily distinct primes $p_1, \dots, p_n$. Of course, then the onsite Hilbert spaces do not correspond to a single group $G$. Leveraging the fact that $\Z_{q_i} \leq \Z_{q_1 \dots q_n}$ for each $i$, we can consider this code as a group-CSS code [Def.~\ref{def:CSS}] on $n \times (\ell_1 + \ldots +\ell_r)$ qudits valued in $\Z_{q_1\ldots q_r}$ in the same way described for products of cyclic groups above.
\end{rmk}

\section{Proof of Theorem \ref{thm:optimal}}\label{app:optimal}
In this appendix we show how to construct a quantum double code with asymptotically optimal properties, and prove Theorem \ref{thm:optimal}.    Strictly speaking, the CW complex we choose will depend on the group $G$, although at least one such complex with code parameters \eqref{eq:goodquantumdoubleproperties} will exist for any finite $G$.

In the text that follows, suppose that there exists a non-identity $g\in G$, and a prime $p\in\mathbb{Z}$, such that $g^p=1$.   In every non-trivial finite $G$ there is at least one $p$ for which this criterion holds.  We choose one such $p$ and hold it fixed for the rest of the proof.

\subsection{Constructing a good CW complex}

We start with an asymptotically good classical $[n_0,k_0,d_0]_p$ expander code over $\mathbb{Z}_p$  with
$k_0,d_0=\mathrm{\Theta}(n_0)$ \cite{sipser2002expander,zemor2002expander,RaoCodingTheoryLecture6}\cite[Theorem~7.16]{jeronimo2023fast}.

\begin{thm}[\cite{sipser2002expander,zemor2002expander,RaoCodingTheoryLecture6,jeronimo2023fast}]
    Let $\mathsf{H} \in \mathbb{Z}_p^{m\times n}$ denote the parity-check matrix of a classical $[n_0,k_0,d_0]_p$ linear code with entries $\mathsf{H}_{\alpha j}$ for $\alpha\in\lbrace 1,\ldots, m\rbrace $ and $j\in\lbrace 1,\ldots , n\rbrace$.   There exist $\mu,\nu \in (0,1)$ along with a family of  codes with arbitrarily large $n_0$ obeying \begin{subequations}\label{eq:goodcodeproperties}
        \begin{align}
            k_0 &\ge \mu \cdot n_0, \\
            d_0 &\ge \nu\cdot n_0.
        \end{align}
    \end{subequations}
\end{thm}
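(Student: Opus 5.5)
The plan is a probabilistic (Gilbert--Varshamov) argument over $\mathbb{Z}_p$, which gives the stated rate and distance bounds without any reference to expansion. Only afterwards would I indicate how to upgrade to the explicit expander (Tanner/Sipser--Spielman/Z\'emor) family cited in the statement, in case sparsity of $\mathsf{H}$ is needed later in the appendix. Fix a target rate $R\in(0,1)$ and a relative distance $\delta\in(0,1-1/p)$ with $R<1-h_p(\delta)$. Here $h_p(x)=x\log_p(p-1)-x\log_p x-(1-x)\log_p(1-x)$ is the $p$-ary entropy. Set $m=\lceil(1-R)n_0\rceil$ and draw $\mathsf{H}\in\mathbb{Z}_p^{m\times n_0}$ with independent uniform entries. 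The code is $\ker\mathsf{H}$, which has dimension $k_0\ge n_0-m\ge Rn_0-1$ deterministically. This already gives $\mu$ slightly below $R$.

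For the distance, I would use a first-moment bound. For any fixed nonzero $\mathbf{x}\in\mathbb{Z}_p^{n_0}$, the vector $\mathsf{H}\mathbf{x}$ is uniform on $\mathbb{Z}_p^m$. This holds because some coordinate $x_j$ is a unit in the field $\mathbb{Z}_p$, so the column $\mathsf{H}_{\cdot j}x_j$ is already uniform. Hence $\Pr[\mathsf{H}\mathbf{x}=0]=p^{-m}$. The number of nonzero vectors of Hamming weight below $\delta n_0$ is at most $p^{h_p(\delta)n_0}$, by the standard volume bound for $\delta\le1-1/p$. A union bound then gives
\begin{equation}
\Pr[d_0<\delta n_0]\le p^{(h_p(\delta)-1+R)n_0+1},
\end{equation}
which tends to zero. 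So for every sufficiently large $n_0$ some $\mathsf{H}$ achieves $k_0\ge\mu n_0$ and $d_0\ge\nu n_0$, with, e.g., $\mu=R/2$ and $\nu=\delta$. Since $n_0$ ranges over all large integers, this gives a family with arbitrarily large $n_0$.

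The main obstacle is not this existence statement but what the surrounding construction presumably needs: $\mathsf{H}$ with bounded row and column weights, so that the resulting CW complex has bounded vertex degree and large girth. For that I would follow Sipser--Spielman and Z\'emor. Take a $\Delta$-regular Ramanujan, or otherwise spectral, expander $\mathcal{G}$ with second eigenvalue $\lambda\le2\sqrt{\Delta-1}$. Place code symbols on its edges, and impose at each vertex that the incident $\Delta$ symbols lie in a fixed inner $[\Delta,r\Delta,\delta_0\Delta]_p$ code, which exists by the GV argument above. The rate bound $k_0\ge(2r-1)n_0$ is a constraint count. The distance bound comes from the expander mixing lemma: a nonzero codeword's support $S$ of edges must have every touched vertex of degree at least $\delta_0\Delta$ inside $S$. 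Mixing then forces $|S|\ge\delta_0(\delta_0-\lambda/\Delta)n_0$. The delicate step is the mixing-lemma calculation. Choosing $r>1/2$ and $\Delta$ large enough that $\lambda/\Delta<\delta_0$ makes both $\mu=2r-1$ and $\nu=\delta_0(\delta_0-\lambda/\Delta)$ positive constants independent of $n_0$.
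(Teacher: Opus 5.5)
Your proof is correct. The paper gives no argument of its own for this statement; it simply cites the Sipser--Spielman/Z\'emor expander (Tanner) codes. Your main route, the first-moment Gilbert--Varshamov argument over a uniformly random dense $\mathsf{H}$, is therefore genuinely different from what the paper relies on.

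It is self-contained and elementary, and it proves exactly the existence claim as stated: any $R<1-h_p(\delta)$ gives $\mu=R/2$, $\nu=\delta$, a better rate--distance trade-off than the expander bound. The cost is that it is non-explicit and yields parity checks of row weight $\Theta(n_0)$.

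Your sketch of the cited route is also accurate. The constraint count gives $k_0\ge(2r-1)n_0$, and the mixing-lemma estimate $\nu=\delta_0(\delta_0-\lambda/\Delta)$ is the standard Sipser--Spielman bound. What that route buys is explicitness, row and column weight bounded by $\Delta$, and efficient decoding.

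One correction to your guess about why sparsity might matter later. In Appendix~\ref{app:optimal}, the girth and vertex degrees of the complex come entirely from Lemma~\ref{lem:A} (a tree plus a random matching of leaves) and do not depend on $\mathsf{H}$ at all. The $d_X$ and $k$ arguments there use only $d_0\ge\nu n_0$, $k_0\ge\mu n_0$ and rank--nullity. A dense Gilbert--Varshamov code would therefore suffice for Theorem~\ref{thm:optimal} as stated.

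Sparsity of $\mathsf{H}$ only controls the weight of the checks~\eqref{eq:classicalcodecheck}. With bounded row weight that weight is $\mathrm{O}(\log n)$, consistent with the paper's remark that typical $Z$-checks have $\Theta(\log n)$ weight. With a dense $\mathsf{H}$ it grows at least linearly in $n$, since every nonzero entry of a row contributes a distinct matching edge to the word.
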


We use this code to attach 2-cells to a CW
complex $\Sigma$ whose 1-skeleton $\Sigma_1$ is carefully chosen to ensure that
$d_Z = \mathrm{\Theta}(\log n)$, while reducing $d_X$ (from the classical code)
by at most a constant factor.   Conceptually, the skeleton $\Sigma_1$ will
consist of a spanning tree $T$ together with an additional set of edges $M$ that
form the cycles, with $|T| = \mathrm{\Theta}(|M|)$, and girth
$g=\mathrm{\Theta}(\log|M|)$.  A sketch of $T$ and $M$ is provided in Figure \ref{fig:perfect_matching}.     The idea is that after gauge fixing the spanning tree $T$, the edges in $M$ form the qudits of a classical expander code on $\mathbb{Z}_p$.

\begin{figure}
  \centering
  \def\svgwidth{.3\textwidth}
\begingroup%
  \makeatletter%
  \providecommand\color[2][]{%
    \errmessage{(Inkscape) Color is used for the text in Inkscape, but the package 'color.sty' is not loaded}%
    \renewcommand\color[2][]{}%
  }%
  \providecommand\transparent[1]{%
    \errmessage{(Inkscape) Transparency is used (non-zero) for the text in Inkscape, but the package 'transparent.sty' is not loaded}%
    \renewcommand\transparent[1]{}%
  }%
  \providecommand\rotatebox[2]{#2}%
  \newcommand*\fsize{\dimexpr\f@size pt\relax}%
  \newcommand*\lineheight[1]{\fontsize{\fsize}{#1\fsize}\selectfont}%
  \ifx\svgwidth\undefined%
    \setlength{\unitlength}{249.49551079bp}%
    \ifx\svgscale\undefined%
      \relax%
    \else%
      \setlength{\unitlength}{\unitlength * \real{\svgscale}}%
    \fi%
  \else%
    \setlength{\unitlength}{\svgwidth}%
  \fi%
  \global\let\svgwidth\undefined%
  \global\let\svgscale\undefined%
  \makeatother%
  \begin{picture}(1,0.8875707)%
    \lineheight{1}%
    \setlength\tabcolsep{0pt}%
    \put(0,0){\includegraphics[width=\unitlength,page=1]{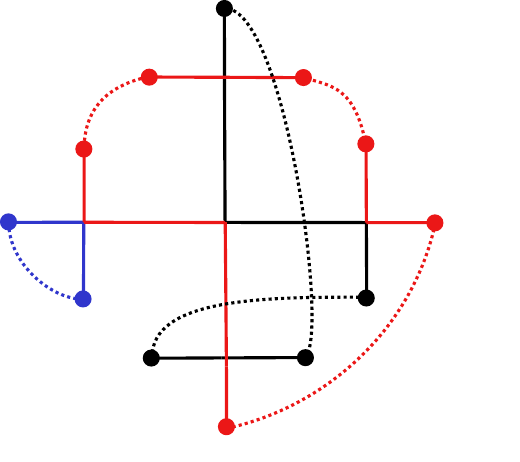}}%
    \put(0.24503508,0.7693109){\color[rgb]{0,0,0}\makebox(0,0)[lt]{\lineheight{1.10000002}\smash{\begin{tabular}[t]{l}$u_1$\end{tabular}}}}%
    \put(0.09201165,0.62377797){\color[rgb]{0,0,0}\makebox(0,0)[lt]{\lineheight{1.10000002}\smash{\begin{tabular}[t]{l}$v_1$\end{tabular}}}}%
    \put(0.40186824,0.00793598){\color[rgb]{0,0,0}\makebox(0,0)[lt]{\lineheight{1.10000002}\smash{\begin{tabular}[t]{l}$u_2$\end{tabular}}}}%
    \put(0.8689836,0.44561172){\color[rgb]{0,0,0}\makebox(0,0)[lt]{\lineheight{1.10000002}\smash{\begin{tabular}[t]{l}$v_2$\end{tabular}}}}%
    \put(0.72616952,0.63540472){\color[rgb]{0,0,0}\makebox(0,0)[lt]{\lineheight{1.10000002}\smash{\begin{tabular}[t]{l}$u_3$\end{tabular}}}}%
    \put(0.6104699,0.75722508){\color[rgb]{0,0,0}\makebox(0,0)[lt]{\lineheight{1.10000002}\smash{\begin{tabular}[t]{l}$v_3$\end{tabular}}}}%
  \end{picture}%
\endgroup%

  \caption{Example of $T \cup M$ with $a = 4$ and $R = 2$. The dots represent leaves in
  $T$, the solid lines represent edges in $T$, and the dotted lines represent
  edges in $M$. The cycle $\Gamma = u_1 \to v_1 \to u_2 \to v_2 \to u_3 \to
  v_3$ is illustrated in red. A cycle of small girth created by the matching is illustrated in blue. These leaves will be removed during the pruning.}
  \label{fig:perfect_matching}
\end{figure}

Naively it is trivial to find $\Sigma_1$, as there exist many expander graphs with girth $\mathrm{\Theta}(\log n)$ -- for example one can take a random $k$-regular graph and prune a subextensive number of edges to remove all short cycles, with high probability \cite{shortcycles}.  
But there is a potential issue with the naive idea: in order to bound $d_X$, we need to ensure we cannot flip the holonomies around too many non-contractible cycles by acting with $X$-operators on qudits that we added to form the spanning tree -- or at least, if we do this, to ensure that we cannot build a codeword in this way.   If we could glue each 2-cell to $\mathrm{\Theta}(1)$ edges in $\Sigma_1$ this would be fairly direct; unfortunately we do not have such a construction.  We will then need a way to ensure that $d_X = \mathrm{\Theta}(n)$ even when some qubits may be involved in a large number of $Z$-checks at large $n$.

To see how to overcome the potential obstacle described above, let us first state the following lemma, which is proved in Appendix \ref{app:optimallemma}.

\begin{lem}\label{lem:A}
    Pick an even integer $a \ge 4$.   There exists an undirected graph/1D CW complex $\Sigma_1 = T\cup M$ where $T$ is a tree of depth $R$ where each parent has $\le a$ children.  $M$ consists of a set of edges between pairs of leaves, and every leaf in $T$ is contained in at most one edge of $M$.   For $R$-independent constants $0<\alpha,\beta,\gamma <\infty$, the girth \begin{equation}
        g(\Sigma_1)\ge \alpha R \label{eq:gSigma1}
    \end{equation}
    and the number of edges in $\Sigma_1$ obeys \begin{equation}
       \left(\frac{a}{a-1} + \frac{1}{2}\right)  a^R \left(1 - R \gamma a^{-\beta R}\right) < n < \left(\frac{a}{a-1} + \frac{1}{2}\right) a^R.  \label{eq:lemmaAedges}
    \end{equation}
    Hence $g(\Sigma_1)=\mathrm{\Theta}(\log n)$, and $T$ represents a spanning tree for $\Sigma_1$.  
    
    Let $\mathcal{L}$ denote the number of leaves of $T$.  Then the number of independent cycles in the graph is $\frac{1}{2}\mathcal{L}$.  We also have \begin{equation}
        \mathcal{L} \ge a^R \left(1 - \gamma a^{-\beta R}\right). \label{eq:lemmaAcycles}
    \end{equation}
    This number also bounds the number of vertices that are leaves of $T$.

    Lastly, the total number of edges in the graph obeys the inequalities \begin{equation}
        \mathcal{L} < n < \frac{2a}{a-1} a^R. \label{eq:aRn}
    \end{equation}
    Therefore, the number of independent cycles is $\mathrm{\Theta}(n)$.
\end{lem}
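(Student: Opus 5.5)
Take $T$ to be the complete rooted $a$-ary tree of depth $R$, with $a^R$ leaves and $\sum_{r=1}^R a^r = \frac{a}{a-1}(a^R-1)$ edges. Build $M$ as a perfect matching on the leaves, chosen at random, then prune. Two leaves $u,v$ whose lowest common ancestor has depth $R-s$ are at tree distance $2s$. Suppose a cycle in $T\cup M$ uses $\ell$ matching edges $u_1v_1,\dots,u_\ell v_\ell$ and tree paths $v_i\to u_{i+1}$. Its length is at least $\ell+\sum_i 2s_i$, where $s_i$ is the "height" of the tree path between $v_i$ and $u_{i+1}$. A cycle is therefore short, of length below $\alpha R$, only if few matching edges are used and all connecting tree paths stay low in the tree. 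Equivalently, the cycle lives inside a bounded number of subtrees of height $\le \alpha R/2$, with the matching edges landing in prescribed small subtrees. I would take $\alpha$ small enough, say $\alpha<1/4$, that these configurations are rare.

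The first step is to count the expected number of short cycles, as in the standard girth argument for random regular graphs \cite{shortcycles}. Fix $\ell\le \alpha R$ and the heights $s_1,\dots,s_\ell$ with $\ell+2\sum s_i<\alpha R$. Choose $u_1$ in $a^R$ ways. For each step, $v_i$ is the matching partner of $u_i$, and it must lie within height $s_i$ of $u_{i+1}$. So each step contributes a probability of about $a^{s_i}/a^R$ for where the partner lands, times $a^{s_i}$ choices of $u_{i+1}$. The last step must close up at $u_1$, which costs probability $a^{s_\ell}/a^R$ with no free choice.

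Multiplying these gives an expected count $\lesssim a^R\prod_i a^{2s_i} a^{-R}\cdot(\text{combinatorial factor})$, that is, at most about $a^{\alpha R}\,a^{-(\ell-1)R}$ times polynomial factors in $R$. For $\ell\ge 2$ this is at most $a^{R(\alpha-1)}\mathrm{poly}(R)$, which is tiny. The case $\ell=1$ has to be handled separately: it is a matching edge between two leaves at tree distance $<\alpha R$, and its expected count is $\lesssim a^R\cdot a^{\alpha R}/a^R=a^{\alpha R}$.

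Summing over $\ell$ and the compositions of the heights, the expected number of short cycles is at most $\gamma' R\, a^{\alpha R}$. I would fix one matching achieving at most this many, by Markov's inequality. For each short cycle, delete one matching edge, which leaves both of its endpoints as unmatched leaves. The paper's figure describes pruning these leaves, so I would also remove unmatched leaves and then iteratively strip degree-one vertices. Each deletion removes at most $R$ tree edges, which produces the factor $R\gamma a^{-\beta R}$ with $\beta=1-\alpha$ in \eqref{eq:lemmaAedges}. This also gives $\mathcal{L}\ge a^R(1-\gamma a^{-\beta R})$ for the surviving matched leaves, establishing \eqref{eq:lemmaAcycles}.

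The edge count before pruning is $\frac{a}{a-1}(a^R-1)+\frac12 a^R<\bigl(\frac{a}{a-1}+\frac12\bigr)a^R$, which is the upper bound in \eqref{eq:lemmaAedges}. The lower bound follows from the pruning estimate above. Every surviving leaf is matched, so $|M|=\mathcal{L}/2$. Since $T$ remains a spanning tree, the number of independent cycles is $|E|-|V|+1=|M|=\mathcal{L}/2$. Finally, \eqref{eq:aRn} is immediate: $\mathcal{L}<n$ because the leaf edges alone number $\mathcal{L}$, and $\frac{a}{a-1}+\frac12<\frac{2a}{a-1}$. Combined with \eqref{eq:gSigma1} and $n=\Theta(a^R)$, this gives girth $\Theta(\log n)$, where the upper bound on the girth comes from the Moore bound, Theorem~\ref{thm:alon}.

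The main obstacle is the first-moment count. The tree paths are not independent of the matching, so the bookkeeping over heights needs care, particularly for paths that revisit the same subtree. There is also a dependence in the matching probabilities: the partner probabilities are $1/(a^R-1-\dots)$ rather than independent uniform draws. I would handle this with the usual configuration-model bound $\prod (a^R-2i)^{-1}\le (2/a^R)^\ell$, valid for $\ell\ll a^R$.
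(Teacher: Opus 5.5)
Your construction and overall strategy match the paper's: a complete $a$-ary tree, a uniformly random perfect matching of its $N=a^R$ leaves, a first-moment bound on short cycles, a matching no worse than average, then deletion and pruning. The bookkeeping at the end is fine, namely the edge counts, $|E|-|V|+1=|M|=\mathcal L/2$, and \eqref{eq:aRn}.

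The gap is in the first-moment estimate, which is the substance of the lemma. At each intermediate step you charge both a landing probability $a^{s_i}/a^R$ for $v_i$ and $a^{s_i}$ choices of $u_{i+1}$, and this double-charges. If $u_{i+1}$ is chosen after $v_i$ is revealed, there are about $a^{s_i}$ choices and no probability cost. If $u_{i+1}$ is fixed in advance, there are $a^R$ choices and a cost of $a^{s_i}/a^R$. Either way the step contributes a factor of order $a^{s_i}$. Only the closing step costs $a^{-R}$, and that is repaid by the $a^R$ choices of $u_1$. Hence the expected number of cycles with heights $(s_1,\dots,s_\ell)$ is of order $a^{s_1+\cdots+s_\ell}$, with no decay in $\ell$. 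For example, two sibling pairs matched crosswise form a $6$-cycle; the expected number of these is about $(a-1)^2/4$, independent of $R$, whereas your formula gives $a^{4-R}$. So your claim that the $\ell\ge 2$ terms are ``tiny'' is false. So is the heuristic that short cycles use few matching edges: they can use up to about $\alpha R/3$, and summed over $\ell$ they outnumber the $\ell=1$ cycles by a factor exponential in $R$. The number of height compositions is likewise exponential, not polynomial, in $R$.

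The conclusion survives once the sum is done correctly, and that is what the paper's generating-function computation does. Since $s_i\ge 1$ (a leaf lies in at most one matching edge), set $S=\sum_i s_i$, so that $\ell+2S<\alpha R$. Summing the compositions of $S$ over $\ell$ gives $\sum_{\ell}\binom{S-1}{\ell-1}=2^{S-1}$. The matching probabilities contribute $\prod_{i<\ell}N/(N-1-2i)=1+o(1)$. Therefore the expected number of cycles of length $<\alpha R$ is $\lesssim\sum_{S<\alpha R/2}(2a)^S\lesssim(2a)^{\alpha R/2}\le a^{\alpha R}$ for $a\ge 2$. The paper obtains the analogous $a^{q}\mathrm{e}^{2q/\sqrt a}$ via $P(\ell)=\binom{m+\ell}{m}$ and $\sum_m a^{-m/2}(2q)^m/m!\le \mathrm{e}^{2q/\sqrt a}$.

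With this replacement your bound $\gamma' R\,a^{\alpha R}$ holds, and your deletion and pruning step goes through. The extra powers of $R$, from removing up to $2R+1$ edges per deleted cycle, are absorbed by shrinking $\beta$ slightly. As an aside, the girth upper bound needs no Moore bound: every matching edge closes a cycle of length at most $2R+1$.
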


To see why this lemma is useful, let us now explain how we will use the $\Sigma_1$ it provides to build a ``good" quantum double code. 
We wish to interpret the code by gauge fixing $T$, as depicted by the solid lines in Fig.~\ref{fig:perfect_matching}, which is a spanning tree of the graph $T \cup M$.  After this there is a one-to-one correspondence betwen non-trivial holonomies on $\Sigma_1$ and edges in $M$.  For convenience, given edge $e\in M$, let us denote $g_e$ to be the holonomy around the non-trivial loop on $\Sigma_1$ passing only through edge $e$ in $M$, and otherwise returning to a fixed vertex in $T$ (e.g. the root of the tree).  With this definition in mind, we state:

\begin{lem}
    Let $n_0:=|M|$ denote the number of edges in $M$, in $\Sigma_1$ constructed above.  Let $\mathcal{S}$ denote the set of all  vertices in $T$ that are the parents of leaves.  
    For each $S\in \mathcal{S}$, define a set $S^\prime \subseteq \text{children}(S)$ such that 
    \begin{equation}
        \gcd(|S^\prime|,p)=1. \label{eq:relativeprimeSprime}
    \end{equation} 
    Let $\mathcal{S}^\prime$ be the set of all such $S^\prime$ (there is one $S^\prime$ for each leaf-parent in the tree).  
    Let $\mathsf{H}$ denote the parity-check matrix of a classical code obeying \eqref{eq:goodcodeproperties} with parameters $\mu$ and $\nu$.   Choose an even integer \begin{equation}
        a = 2\left\lceil \frac{1}{\mu}\right\rceil. \label{eq:pickabig}
    \end{equation}
    Build the CW complex $\Sigma$ obtained by gluing 2-cells to $\Sigma_1$ according to the following \(Z\)-check constraints:
    \begin{subequations}
        \begin{align}
            g_e^p&=1\;\;\; \forall \;\;\; e\in M,  \label{eq:ppowercheck}\\
            g_eg_{e^\prime}g_e^{-1}g_{e^\prime}^{-1} &=1 \;\;\; \forall \;\;\; e,e^\prime \in M, \\
            \prod_{e=1}^n g_e^{\mathsf{H}_{\alpha e}} &= 1 \;\;\; \forall \;\;\; \alpha \in \lbrace 1,\ldots, m\rbrace, \label{eq:classicalcodecheck}\\
            \prod_{e\in S^\prime}g_e &= 1 \;\;\; \forall \;\;\; S^\prime\in \mathcal{S}^\prime. \label{eq:Sprimecheck}
        \end{align}
    \end{subequations}
     Then, any $X$-logical operator must act on $\mathrm{\Theta}(n)$ edges that connect to the leaves of $T$.
\end{lem}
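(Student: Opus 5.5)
The plan is to gauge-fix the spanning tree $T$ and reduce the holonomy data to a classical code over $\mathbb{Z}_p$. By Prop.~\ref{prop:codespace_rough} and Cor.~\ref{cor:codespacesmooth}, a codeword is labeled by the values $(g_e)_{e\in M}$ modulo global conjugation. The constraints \eqref{eq:ppowercheck}–\eqref{eq:classicalcodecheck} force the $g_e$ to commute pairwise and to have order dividing $p$. Hence they generate an elementary Abelian $p$-subgroup $A\le G$, and each $\mathbb{Z}_p$-coordinate of $(g_e)$ lies in $\ker\mathsf{H}$. Constraint \eqref{eq:Sprimecheck} adds one further linear condition per leaf-parent. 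There are only $\mathcal{L}/a$ leaf-parents, and $a\ge 2/\mu$ by \eqref{eq:pickabig}, so these conditions cut the dimension of $\ker\mathsf{H}$ by at most $\mathcal{L}/a\le \mu n_0$-ish. Counting with Lemma~\ref{lem:A} ($n_0=\mathcal{L}/2$), I expect a $\Theta(n)$-dimensional space of admissible holonomy assignments to survive, say $k_0-\mathcal{L}/a\ge \mu n_0/\dots>0$ after choosing $a$ accordingly.

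Next I would characterize what an $X$-logical operator does to these labels. An $X$-type operator $\mathcal{O}_X$ that maps codewords to codewords, and acts nontrivially, must change the holonomy label $(g_e)$ to a different admissible one, and not merely by global conjugation. Acting on an edge $e\in M$ changes $g_e$ directly. Acting on a tree edge $t\in T$ changes $g_e$ for every $M$-edge $e$ whose fundamental loop passes through $t$, which means every leaf below $t$. The key point is that the difference between the two labels, projected onto a $\mathbb{Z}_p$-coordinate, is a nonzero codeword-difference in $\ker\mathsf{H}$. By $d_0\ge\nu n_0$, that difference has weight $\Omega(n_0)$, so $\Omega(n)$ of the loops $g_e$ must be altered.

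The main obstacle is a tree edge high up in $T$, which alters many loops at once. Such an edge multiplies all $g_e$ below it by a common factor (up to conjugation by tree holonomies, which are trivial after gauge fixing). Here constraint \eqref{eq:Sprimecheck} does the work. If every $g_e$ with $e\in S'$ is multiplied by the same $h$, the product $\prod_{e\in S'}g_e$ picks up $h^{|S'|}$ modulo commutators. Because $\gcd(|S'|,p)=1$ and we stay inside the $p$-group, this forces $h=1$ unless the operator also acts separately on edges attached to that leaf-parent's children. I would formalize this leaf-parent by leaf-parent. For each leaf-parent $S$ whose leaves' loops are altered, the operator must touch at least one edge incident to a leaf of $S$, namely a leaf edge of $T$ or the corresponding $M$-edge.

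Combining the two steps, the $\Omega(n_0)$ altered loops are spread over at least $\Omega(n_0)/a$ leaf-parents, since each has at most $a$ children. Each such leaf-parent contributes a distinct leaf-adjacent edge in the support, which gives $\Theta(n)$ leaf-adjacent edges because $a$ is constant. I would also check that an operator which alters loops only by global conjugation acts trivially on the code and so is not logical. Finally, I would confirm with \eqref{eq:lemmaAedges} that the constants remain $n$-independent.
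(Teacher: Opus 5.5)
Your outline follows the paper's own proof almost step for step: reduction to an elementary Abelian $p$-group, $d_0\ge\nu n_0$ forcing $\Omega(n_0)$ altered loops, the $S'$-checks with $\gcd(|S'|,p)=1$ charging each affected leaf-parent one leaf-adjacent edge, and the count $\nu n_0/a$. However, the charging step does not follow from what you invoke. The paper's proof rests on the same assertion (that the loops changed by a tree edge form a union of sets $S'$), so it does not supply the missing step either.

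\textbf{Where the charging step fails.} Each $e\in M$ joins a leaf $u\in\mathrm{children}(S)$ to a partner leaf $v_u$ under a \emph{different} leaf-parent $S_{v_u}$. So its loop (root $\to S\to u\to v_u\to S_{v_u}\to$ root) has two tree branches. Write $P_x$ for the product of the new tree-edge values along the path from the root to $x$. An operator that does not touch the edges at the leaves of $S$ sends $g_{e_u}\mapsto P_S\,g_{e_u}\,P_{v_u}^{-1}$, not $P_S\,g_{e_u}$. Two consequences follow.
\begin{itemize}
\item A tree edge lying above both endpoints of $e$ merely conjugates $g_e$. The girth excludes this only for edges less than about half the girth above the leaves.
\item Restrict to shifts in a fixed $\mathbb{Z}_p\le G$, written additively, and orient each loop in $S$'s check to leave through $u$. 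Then the $S'$-check shifts by $|S'|P_S-\sum_{u\in S'}P_{v_u}$, not by $|S'|P_S$.
\end{itemize}
The partner terms can cancel the first term. Act by the same $h$ on the edges entering all leaf-parents in a set $Q$. This alters every loop running between $Q$ and its complement. Yet it satisfies every $S'$-check as soon as $\#\{u\in S':S_{v_u}\notin Q\}$ is a multiple of $p$ for $S\in Q$, and $\#\{u\in S':S_{v_u}\in Q\}$ is a multiple of $p$ for $S\notin Q$. For $|S'|=1$ this just says $Q$ is a union of connected components of the functional graph of the map sending $S$ to the leaf-parent of its designated leaf's partner. Such an operator alters loops at the leaves of every $S\in Q$ while touching no leaf-adjacent edge at all. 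So $\gcd(|S'|,p)=1$ does not force ``an edge incident to a leaf of $S$.'' Moreover, even where a check is violated, it can be repaired by the edge $(S_{v_u},v_u)$, which is incident to a leaf of a different leaf-parent.

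\textbf{What is missing is a global argument.} The internal-edge part of an operator is a function $P$ on leaf-parents. Its effect on $M$ is the coboundary $u\mapsto P_{S}-P_{S_{v_u}}$ across matched pairs ($u\in\mathrm{children}(S)$). Each leaf-adjacent edge changes a single $g_e$, so it can repair at most two $S'$-checks. You would have to show that every non-constant $P$ yields a coboundary that violates $\Theta(n)$ of the $S'$-checks or lies at distance $\Theta(n)$ from $\ker\mathsf{H}$. If you only want $d_X=\Theta(n)$ rather than the lemma's literal claim about leaf-adjacent edges, it suffices to show this whenever $P$ is realized by $o(n)$ internal edges. Note that a single edge near the root already yields a cut vector of weight $\Theta(n)$ on $M$, which $d_0$ alone cannot exclude. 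This requires the randomness (or expansion) of the matching and some control over how code coordinates are assigned to $M$. It does not follow from $\gcd(|S'|,p)=1$.

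\textbf{A smaller, repairable point.} ``The difference projected onto a $\mathbb{Z}_p$-coordinate'' assumes the old and new labels lie in one elementary Abelian subgroup. Let $A_1,A_2$ be the subgroups generated by the old and new labels; they need not coincide. Projecting coordinatewise onto $A_1/(A_1\cap A_2)$ (or $A_2/(A_1\cap A_2)$), or subtracting inside $A_1\cap A_2$ when both such projections vanish, still shows that distinct admissible labels differ on at least $d_0$ edges of $M$.
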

\begin{proof}
Firstly, by the first two checks, we can only change the holonomies $g_e$ by some subgroup $\mathbb{Z}_p^r\le G$ for some positive integer $r\ge 1$.  Moreover, any $X$-logical that we find must correctly flip the holonomies on the leaves according to an $X$-logical of the classical code, in order to obey \eqref{eq:classicalcodecheck}.

    Secondly, suppose that we had an $X$-logical that acted only on edges internal to the spanning tree $T\subset \Sigma_1$.   Let $f\in T$ be one such edge.  Acting with e.g. $\overrightarrow{X}^g_f$ -- if it is part of a logical -- will change the holonomies $g_e$ for every edge $e$ whose cycle passes through $f$.  Critically, the set of $e\in M$ obeying this are a union of sets $S^\prime$.   The $Z$-check \eqref{eq:Sprimecheck} will then be violated due to the commutativity checks together with \eqref{eq:relativeprimeSprime}.  This means that any logical operator must have at least one $\overrightarrow{X}^g_e$ for an edge $e$ that connects to a leaf, for every $S^\prime$ that overlaps with the qudits contained in the $X$-logical.  The distance $d_0 \ge \nu n_0$ and therefore \begin{equation}
        |\lbrace S^\prime \in\mathcal{S}^\prime : \text{any logical of classical code overlaps with $S^\prime$}\rbrace | \ge \frac{\nu n_0}{a} = \mathrm{\Theta}(n).
    \end{equation}
The last statement follows because $\mu, \nu,a$ are all O(1) constants and because of \eqref{eq:aRn}.  

Not every logical of the classical code is guaranteed to be consistent with \eqref{eq:Sprimecheck}, as we will discuss more shortly.  Nevertheless, any $X$-logical of this CW code must correspond to an $X$-logical of the original classical code.  Ultimately, we conclude that any $X$-logical must act on at least $\mathrm{\Theta}(n)$ qudits.
\end{proof}

To finish the proof of Theorem \ref{thm:optimal} we now simply need to confirm that $k=\mathrm{\Theta}(n)$.   Consider one fixed subgroup $H\le G$ with $H\cong \mathbb{Z}_p$ and restrict only to holonomies $g_e\in H$.  In the absence of the $S^\prime$ checks \eqref{eq:Sprimecheck} we would have $k_0=\mu n_0$ logicals.  From the perspective of a code on the field $\mathbb{Z}_p$, the $S^\prime$ checks can be understood as adding $|\mathcal{S}^\prime|$ new rows to the parity-check matrix $\mathsf{H}$, which could at most decrease $k_0 \rightarrow k_0-|\mathcal{S}^\prime|$ by the rank-nullity theorem.  Since for large enough $n_0$,
\begin{equation}
    |\mathcal{S}^\prime| \le \frac{3}{2}\frac{n_0}{a}<\frac{3\mu n_0}{4}\le \frac{3k_0}{4}~,
\end{equation}
we conclude that the quantum-double codespace \(\mathcal{C}\) is roughly at least as large as that generated by this $\mathbb{Z}_p$ subgroup, \begin{equation}
    \dim(\mathcal{C}) \ge \frac{p^{k_0-|\mathcal{S}^\prime|}}{p-1}~,
\end{equation}
up to division by $(p-1)$, which accounts for the possible reduction due to the quotient by $\mathrm{Ad}_G$~\eqref{eq:codespacesmooth}.   
We conclude that the code log-dimension is bounded as
\begin{equation}
    k = \log_{|G|}\dim(\mathcal{C}) \ge \left(\frac{k_0}{4}-1\right) \log_{|G|}p  = \mathrm{\Theta}(n)~.
\end{equation}

\subsection{Proof of Lemma \ref{lem:A}}\label{app:optimallemma}
This section will be dedicated to constructing a CW complex with no small cycles. Explicit constructions for such graphs exist (see Ref.~\cite{margulis}), but our construction will be more technically convenient for our purposes.

We prove this existence result using a probabilistic method. 
We will first choose a random
matching $M$ of leaves in a regular tree $T$, then we will remove all leaves
which lie within a small cycle. We show that, on average, the fraction of small cycles --- and therefore of pruned leaves --- is vanishingly small, which will establish the lemma.
\begin{lem}
 Let $a \geq 4$ be an even number, and let $T$ be the depth-$R$ tree where every
parent has exactly $a$ children. The number of leaves is $a^R$, and since $a$ is even, there exist perfect matchings of the leaves in the tree.
Given such a matching,
let us denote $u\sim v$ if leaf $u$ is matched with a leaf $v$, and let $M$
represent the set of additional edges directly connecting \(u\) and \(v\) for any $u \sim v$.
Then
for any $\alpha > 0$ there exist constants $\gamma$ and $\beta$ independent
of $R$ and $\alpha$ such that
\begin{align}
  \mathbb E_M[\#\{u \in T : u \in \text{cycle of length} < \alpha R\}] < \gamma a^{\alpha\beta R}
\end{align}
where $\mathbb E_M$ denotes the expectation value over all perfect matchings.
\end{lem}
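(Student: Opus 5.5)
We bound the expectation by a first-moment count over short cycles. Since $T$ is a tree, every cycle in $T\cup M$ must use at least one matching edge. Any cycle alternates between maximal tree paths and matching edges. Concretely, it has the form $u_1\sim v_1 \to u_2\sim v_2\to\cdots\to u_k\sim v_k\to u_1$, where each $v_i\to u_{i+1}$ is the unique tree path between two distinct leaves, as in the red cycle of Fig.~\ref{fig:perfect_matching}. If two leaves have lowest common ancestor at height $h$ above the leaves, the tree path between them has length $2h\ge 2$. A cycle of length $<\alpha R$ therefore uses some number $k$ of matching edges together with tree paths whose half-lengths $h_1,\dots,h_k\ge 1$ satisfy $\sum_i (2h_i+1)<\alpha R$.

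\textbf{Step 1: count candidate cycles through a fixed leaf.} Fix a leaf $u=u_1$. We enumerate sequences $(v_1,u_2,v_1',\dots)$ in which $v_i$ is an arbitrary leaf and $u_{i+1}$ is a leaf at tree-distance exactly $2h_i$ from $v_i$. For a given $h$ there are at most $(a-1)a^{h-1}\le a^h$ choices of $u_{i+1}$. The partner $v_i$ is not chosen freely; it is determined by the matching, and what we need is the probability that the prescribed pairs $u_i\sim v_i$ all occur. Under a uniformly random perfect matching of $N=a^R$ leaves, a prescribed set of $k$ disjoint pairs all appear with probability
\begin{equation}
\prod_{j=0}^{k-1}\frac{1}{N-2j-1}\le \left(\frac{2}{N}\right)^{k}
\end{equation}
for $k\le N/4$. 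This holds in our regime because $k<\alpha R\ll a^R$.

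\textbf{Step 2: sum over choices.} For each pair the partner $v_i$ ranges over $N$ leaves, which gives a factor $N$. This cancels against the $1/N$ from the probability, leaving a factor $2$ per matching edge. The remaining freedom is the tree-path choice, contributing at most $a^{h_i}$ per step. The closing step $v_k\to u_1$ is forced; it is simply a constraint, and dropping it only enlarges the bound. Summing over $k$ and over compositions $h_1+\dots+h_k\le \alpha R/2$ gives
\begin{equation}
\mathbb E_M\big[\#\{\text{short cycles through } u\}\big]\le \sum_{k}\sum_{h_1,\dots,h_k} 2^k a^{\sum h_i}\le \sum_{k<\alpha R}2^k\binom{\alpha R}{k} a^{\alpha R/2}\le 3^{\alpha R}a^{\alpha R/2}.
\end{equation}
Multiplying by the $N=a^R$ choices of $u$ would be too crude, so we exploit the forced closing step more carefully. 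Require that the last matched partner $v_k$ equals the specific leaf $v_k$ whose tree path returns to $u_1$. This means we do not sum over $v_k$ freely: the factor $N$ is absent for the final pair while its $1/N$ probability remains. This gives an extra factor $\lesssim a^{h_k}/N$. Rather than summing over $u$, it is cleaner to count cycles directly. Each cycle with $k$ matching edges is determined by choosing $u_1$ (a factor $N$), then the $k-1$ free partners, then the tree paths. The probabilities contribute $(2/N)^k$, so the expected number of short cycles is at most $N\cdot N^{k-1}\cdot (2/N)^k\cdot a^{\alpha R/2}\cdot(\text{combinatorial factor})\le C^{\alpha R}a^{\alpha R/2}$.

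\textbf{Step 3: convert cycles to leaves.} Each short cycle contains fewer than $\alpha R$ leaves, so the expected number of leaves lying on short cycles is at most $\alpha R\cdot C^{\alpha R}a^{\alpha R/2}$. Absorbing $\alpha R\,C^{\alpha R}$ into $a^{c\alpha R}$, with $c=\log_a C+1$ valid for $a\ge 4$, yields the bound $\gamma a^{\alpha\beta R}$ with $\beta=1/2+c$, independent of $R$ and $\alpha$.

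The main obstacle is the bookkeeping in Step 2: each cycle must be counted exactly once up to rotation and reflection, and the forced closure must genuinely save a factor of $N$. Otherwise the bound degrades to $N\cdot a^{O(\alpha R)}$, which is useless. I would handle this by orienting each cycle from a distinguished matching edge and bounding the number of tree-path choices between consecutive leaves by $a^{h}$ in all cases.
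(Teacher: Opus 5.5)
Your proposal is correct and is essentially the paper's argument: a first-moment bound over short cycles written as alternating leaf sequences $u_1\sim v_1\to u_2\sim\cdots\sim v_k\to u_1$. It uses the probability that prescribed pairs are matched, at most $a^{h}$ choices per tree segment of length $2h$, and the forced closing segment to save the essential factor of $a^R$. The differences are cosmetic: you bound the expected total number of short cycles and multiply by the cycle length where the paper applies a per-vertex union bound and sums over $u$, you use $(2/N)^k$ where the paper telescopes the double factorials exactly to $1/(a^R-2m+1)$, and counting each cycle exactly once is unnecessary since overcounting only weakens an upper bound.
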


\begin{proof}
Let $\mathsf{d}(u,v)$ denote the distance between leaves $u$ and $v$ on $T$;
this corresponds to twice the depth of the join (lowest common parent) of the two leaves, so $\mathsf d(u,v)$ is
always even. Let $L_M$ be the set of non-backtracking loops in $T \cup M$ of
length less than or equal to $2q$, where
\begin{equation}
  q \equiv \left\lfloor  \frac{\alpha R}{2} \right\rfloor~.
\end{equation}
By non-backtracking, we mean loops
which cannot be decomposed into loops of smaller length. 
Nothing is lost by
restricting our attention to $L_M$, because if a vertex lies in a backtracking
loop of length $\ell$, then it must lie in a non-backtracking loop of length $< \ell$.

By linearity, we can express the expectation value as a sum of probabilities of each vertex \(u\) to belong in a non-backtracking loop for a given matching \(M\).
Defining the expected probabilities \(\mathbb{P}_{M}\) to be expectations of those probabilities over \(M\) yields
\begin{equation}\label{eq:expected_bad2}
\mathbb{E}_{M}[\#\{u\in\Gamma:\Gamma\in L_{M}\}]=\sum_{u\in T}\mathbb{E}_{M}\,\text{Pr}[u\in\Gamma:\Gamma\in L_{M}] \equiv \sum_{u\in T}\mathbb{P}_{M}[u\in\Gamma:\Gamma\in L_{M}]~.
\end{equation}
Therefore we consider a fixed vertex $u_1$ and bound the expected probability that it lies
within a non-backtracking loop. First, we observe the following union bound:
\begin{equation}
 \mathbb P_M[u_1 \in \Gamma: \Gamma \in L_M] \leq
 \sum_{m=1}^{\infty}m\mathbb P_M[\text{$u_1$ lies in $m$ loops $\Gamma \in L_M$}] = 
  \mathbb E_{M}[\#\{\Gamma \in L_M: u_1 \in \Gamma\}]
  \label{eq:expected_loops2}
\end{equation}

A loop $\Gamma \in L_M$ necessarily traverses leaves $u_1 \rightarrow v_1
\rightarrow u_2 \rightarrow \cdots \rightarrow u_m\rightarrow v_m$ for some
number of segments $m$, where $u_j\sim v_j$,
because otherwise the loop would retrace itself along the
tree (see Fig.~\ref{fig:perfect_matching} for an example). Furthermore, the non-backtracking property implies that the sequence $u_1,v_1
\dots, u_m, v_m$ uniquely labels $\Gamma$ up to cyclic permutation, because it is a property of trees that
there is a unique non-backtracking path (meaning that each vertex is visited at most once) 
through $T$ connecting any two vertices.
The length of $\Gamma$ is then
\begin{equation}\label{eq:shortcycle2}
        |\Gamma| = m+ \mathsf{d}(v_1,u_2)+\cdots + \mathsf{d}(v_m,u_1)~.
\end{equation}
Therefore we may
rewrite \eqref{eq:expected_loops2} by summing over the possible loops $\Gamma$ labeled by
$u_1 \to v_1 \to \dots u_m \to v_m$ (suppressed in the notation) and the probability that $\Gamma$ forms a loop
in $L_M$:
\begin{equation}
  \mathbb E_M[\#\{\Gamma \in L_M: u_1 \in \Gamma\}] =
  \sum_{m=1}^{q}\sum_{u_1, v_1, \dots, u_m, v_m}\mathbb I[|\Gamma| \leq 2q]\mathbb P_M[\{u_i \sim v_{i}: 1\leq i \leq m\}]~,
\label{eq:union_bound}
\end{equation}
where \eqref{eq:shortcycle2} implies that $|\Gamma|$ is independent of $M$.
Furthermore, Since the number of matchings on $a^R$ vertices is $(a^R-1)!!$, the number of matchings with $u_i \sim v_i$ fixed is $(a^R-2m-1)!!$. Since all matchings are equally likely,
\begin{equation}
\mathbb P_M(\{u_i \sim v_{i}: 1\leq i \leq m\}) = \frac{(a^R - 2m -1)!!}{(a^R-1)!!}.
\end{equation}

Next, we introduce the variables $2J_i \equiv \mathsf d(v_i, u_{i+1})$ for $i \leq
m$ and $2J_m \equiv \mathsf d(v_m, u_1)$. We can reorganize the remaining sum in Eq.~\eqref{eq:union_bound} using these variables:
\begin{equation}
\sum_{u_1, v_1, \dots, u_m, v_m}\mathbb I[|\Gamma| \leq 2q] = 
\sum_{J_1, \dots, J_m} N(J_1, \dots, J_m)\mathbb I[m+2\sum_{i=1}^mJ_i \leq 2q]
\label{eq:expected_cycle_number2}
\end{equation}
where $N(J_1, \dots, J_m)$ is the number of choices of $v_1, u_2, v_2, \dots,
u_m, v_m$ consistent with $\mathsf d(v_i ,u_{i+1}) = 2J_i$ for $i \leq m$ and
$\mathsf d(v_m, u_1) = 2J_m$. We can bound this number by
\begin{align}
  N(J_1, \dots, J_m) \leq \frac{(a^R-1)!!}{(a^R-2m+1)!!}a^{\sum_{i=1}^m J_i},
\end{align}
because each choice of $u_i, v_i$ eliminates two choices for $v_{i+1}$, and
given any leaf $u$ there are $a^{J}$ leaves $v$ with $\mathsf d(u, v) \leq J$.
The choices for $v_m$ are further restricted because they must lie at a
distance $2J_m$ from $u_1$, which is fixed.

Now let $P(\ell)$ be the number of choices of $J_1, \dots, J_m$ such that
$\sum_{i=1}^m J_i = \ell$.  
The generating function of $P$ is
\begin{align}
  \sum_{\ell = 0}^\infty x^\ell P(\ell) =
  \sum_{\ell = 0}^\infty x^\ell \sum_{J_1, \dots, J_m} \mathbb I\qty(\sum_{i = 1}^mJ_m = \ell) = \qty(\sum_{j = 0}^{\infty}x^j)^m = (1-x)^{-m}
\end{align}
By the generalized binomial theorem, we have
\begin{align}
  P(\ell) = (-1)^\ell\binom{-m}{\ell} = \binom{m+\ell}{m} \leq \frac{(2q)^m}{m!}~,
\end{align}
where the inequality holds whenever $\ell + m \leq 2q$, which is always the case.

Substituting all of the above into \eqref{eq:expected_cycle_number2}, we find
\begin{align}
\sum_{J_1, \dots, J_m} N(J_1, \dots, J_m)\mathbb I\qty[m+2\sum_{i=1}^mJ_i \leq 2q] &= 
\sum_{\ell = m}^{q-m/2}\sum_{J_1, \dots, J_m} N(J_1, \dots, J_m)\mathbb I\qty[\sum_{i=1}^mJ_i = \ell] \notag\\
 &\leq \frac{(a^R-1)!!}{(a^R-2m+1)!!}\sum_{\ell = m}^{q - m/2}a^{\ell}P(\ell) \notag \\
  &\leq  \frac{4(a^R-1)!!}{3(a^R-2m+1)!!}\frac{(2q)^m}{m!}a^{q-m/2},
\end{align}
where we used the assumption that $a \geq 4$ to conclude that $\frac{a}{a-1}
\leq \frac{4}{3}$ when summing the series. Now substituting this result back into
\eqref{eq:union_bound}, we find
\begin{align}
\sum_{m=1}^{q}\sum_{u_1, v_1, \dots, u_m, v_m}\mathbb I[|\Gamma| \leq 2q+1]\mathbb P_M(\{u_i \sim v_{i}: 1\leq i \leq m\}) &\leq
\frac{4}{3}\sum_{m=1}^{q}\frac{(a^R - 2m-1)!!}{(a^R-2m+1)!!}a^{q-m/2}\frac{(2q)^m}{m!} \notag\\
&\leq \frac{4a^{q}}{3(a^R-2q+1)}\e^{2q/\sqrt{a}} \notag\\
&\leq \frac{4}{3}\qty(1-\frac{\alpha}{\e\log(a)})^{-1}a^{-[1-\alpha(1+\frac{2}{\sqrt{a}\log(a)})]R}.
\end{align}
Finally, putting this estimate back into \eqref{eq:expected_bad2}, we find
\begin{align}
 \mathbb E_M[\#\{u \in \Gamma:\Gamma \in L_M\}] \leq \frac{4}{3}\qty(1-\frac{\alpha}{\e\log(a)})^{-1}a^{\alpha(1+\frac{2}{\sqrt{a}\log(a)})R}.
\end{align}
This finishes the proof of the lemma.
\end{proof}

Lemma \ref{lem:A} follows from the fact that there must exist at least one
perfect matching $M$ where the number of pruned leaves is less than or equal to
the average. An example of $T$ and $M$ are illustrated in Fig.~\ref{fig:perfect_matching}.

\section{Tensor network representation of the 2D non-Abelian cluster state}
\label{App: 2D cluster}

In Sec. \ref{Sec: non-inv SPT}, we demonstrated that 1D and 2D non-Abelian cluster states with non-invertible symmetries can also be unified as generalized quantum double codes with ghost checks. In this Appendix, we give a projected entangled-pair state (PEPS) representation of the 2D non-Abelian cluster state (while the tensor network representation of the 1D non-Abelian cluster state as a matrix product state (MPS) has been proposed in Ref. \cite{Fechisin_2025}). 

\begin{figure}
\begin{tikzpicture}[scale=1, line cap=round, line join=round]

\tikzset{
  edge/.style={thick},
  nodev/.style={circle, draw, fill=white, minimum size=12pt, inner sep=0pt},
  centerv/.style={circle, draw, fill=black,                  pattern color=black, minimum size=12pt, inner sep=0pt}
}

\coordinate (C)  at (0,0);
\coordinate (L)  at (-1.6,0);
\coordinate (R)  at ( 1.6,0);
\coordinate (DL) at (-0.9,-1.2);
\coordinate (UR) at ( 0.9, 1.2);

\draw[edge] (-2.2,0) -- (2.2,0);                 
\draw[edge] (-1.2,-1.6) -- (1.2,1.6);            
\draw[edge] (C) -- ++(0,0.55);
\draw[edge] (L) -- ++(0,0.55);
\draw[edge] (R) -- ++(0,0.55);
\draw[edge] (DL) -- ++(0,0.55);
\draw[edge] (UR) -- ++(0,0.55);

\node[nodev]   at (L)  {};
\node[nodev]   at (R)  {};
\node[nodev]   at (DL) {};
\node[nodev]   at (UR) {};
\node[centerv] at (C)  {};

\node (anchor) at (4.7,1.4) {}; 

\node[yshift=-0.5cm, anchor=west,align=left] at (anchor) {%
$\bullet\quad \tinyhub\;=\;\displaystyle\sum_{g}
  (g_{s},g_{r},g_{u},g_{d}|\otimes\lvert g\rangle_{p}$.
};

\node[below=2cm of anchor,anchor=west,align=left] {%
$\bullet\quad \tinytwoleg\;=\;\displaystyle\sum_{g}\,
  (gh,h| \otimes |g\rangle$.
};

\end{tikzpicture}
\caption{Projected entangled-pair state (PEPS) of 2D non-Abelian $G$-cluster model \eqref{Eq: 2D cluster} on a Lieb lattice. Here we use the solid ball to label the control qudits and the empty ball to label the target qudits, respectively, with their explicit analytical expression in the right panel.}
\label{Fig: 2D cluster tensor}
\end{figure}
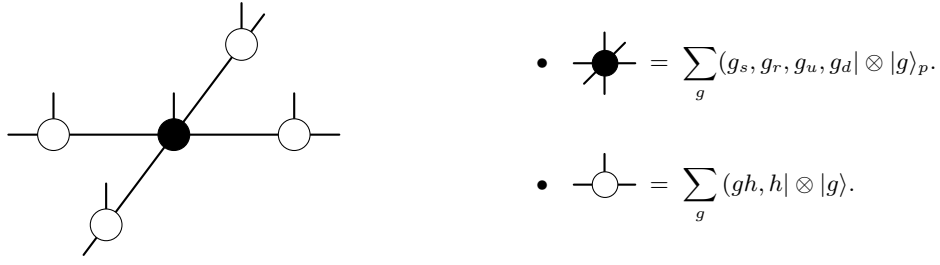

Given the 2D non-Abelian $G$-cluster model with the Hamiltonian \eqref{Eq: 2D cluster}, we first construct the local tensor of $X_g$ and $Z_\Gamma$ operators,
\begin{align}
\overrightarrow{X}_g
\;=\;
\sum_{h}\,\lvert gh\rangle\!\langle h\rvert
\;=\;
\Xrighticon
\quad,\quad
\overleftarrow{X}_g
\;=\;
\sum_{h}\,\lvert h g^{-1}\rangle\!\langle h\rvert
\;=\;
\Xlefticon
\quad,\quad
Z_\Gamma=\sum_{g}\Gamma(g)\otimes\ket{g}\bra{g}=\Zicon,
\label{Eq: local tensor X Z}
\end{align}
where the virtual indices of $Z_\Gamma$ are matrix elements of the representation $\Gamma\in\mathrm{Rep}(G)$. Then the Hamiltonian \eqref{Eq: 2D cluster} can be rephrased in terms of the following projected entangled-pair operators (PEPO), namely
\begin{align}
\Zstabilizer
\quad, \quad
\Xstabilizer
\quad.
\end{align}
It is straightforward to verify that the projected entangled-pair state, as illustrated in Fig.~\ref{Fig: 2D cluster tensor}, is the unique ground state of the stabilizer Hamiltonian \eqref{Eq: 2D cluster}, as well as the global $G$ symmetry and 1-form non-invertible $\mathrm{Rep}(G)$ symmetry \eqref{Eq: 2D cluster symmetry}.

\section{Group laws}
\label{app:group-laws}

Here, we establish the smallest group-law weight explicitly for two families of simple
groups, $\mathrm A_n$ and $\PSL_2(\Z_p)$, to which our main results then apply. 

\begin{lem}
If $n \geq 5$, then $\alpha(\textnormal{A}_n) > n-3$.
\end{lem}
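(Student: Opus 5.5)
The plan is to show that any word $w$ that is not freely trivial and has weight $\le n-3$ fails to be a law in $\mathrm{A}_n$. We do this by building an explicit evaluation in $\mathrm{A}_n$ that is nontrivial, following the classical argument that short non-trivial words are not laws in symmetric groups. First we cyclically and freely reduce $w$. Cyclic reduction replaces the evaluation by a conjugate, and free reduction does not change it. This gives a nonempty reduced word $w = x_{i_1}^{\epsilon_1}\cdots x_{i_\ell}^{\epsilon_\ell}$ with $\ell\le n-3$, where $\ell\ge 1$ because $w$ is not freely trivial. If $w$ is a proper power of a single letter, say $x^{k}$ with $0<|k|\le n-3$, it suffices to take $x$ an odd-length cycle of length $>|k|$ and $\le n$, e.g.\ a $3$-cycle or $5$-cycle chosen to avoid dividing $k$. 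So the main case is a reduced word involving at least the freedom of choosing distinct partial permutations.

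In the main case we use the standard path construction. Label points $0,1,\dots,\ell$ in $\{1,\dots,n\}$, which is possible since $\ell+1\le n-2$. Read $w$ from the right, so that the $j$-th letter from the right should send point $j-1$ to point $j$. For each generator $x_i$ this prescribes a partial injection on $\{0,\dots,\ell\}$. It is well defined and injective because $w$ is reduced: a conflict would require the same generator to be asked to send one point to two places (or two points to one) along the path, which can only happen when consecutive letters are $x^{\epsilon}x^{-\epsilon}$. Hence each $x_i$ is a partial injection, and it extends to a permutation $\sigma_i\in S_n$. Evaluating $w$ at $(\sigma_i)$ then sends $0\mapsto \ell\neq 0$, so the word is nontrivial in $S_n$.

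The remaining step, which is the main obstacle, is to force each $\sigma_i$ into $\mathrm{A}_n$. Here we use the two spare points $n-1,n$ outside the path, guaranteed by $\ell\le n-3$. Whenever an extension $\sigma_i$ is odd, we compose it with the transposition $(n-1\ n)$. This fixes all points $0,\dots,\ell$, so the prescribed partial injection is untouched, and the computation $0\mapsto\ell$ is unaffected. We also need each extension to be possible without already using $n-1,n$. This holds because a partial injection on $\{0,\dots,\ell\}$ extends to a permutation of $\{0,\dots,\ell\}$ alone (complete its chains into cycles), leaving $n-1,n$ free.

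The power case also fits this framework with the same parity fix, using the path construction directly on $x^k$, which is reduced. This gives a uniform argument: every non-freely-trivial word of weight $\le n-3$ evaluates nontrivially somewhere in $\mathrm{A}_n$. Therefore $\alpha(\mathrm{A}_n)>n-3$. The hypothesis $n\ge5$ is only used so that $\mathrm{A}_n$ is non-abelian simple and the bound is meaningful.
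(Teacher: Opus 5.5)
Your proof is correct and is essentially the paper's argument. You realize a reduced word of length $\ell$ by permutations of $\ell+1$ path points so that a base point is moved, then fix parity with the transposition on two spare points, which witnesses every non-freely-trivial word of weight $\le n-3$ nontrivially in $\mathrm{A}_{\ell+3}\le\mathrm{A}_n$. The cyclic reduction and the separate power-of-a-letter case are unnecessary, and the parenthetical $3$-/$5$-cycle choice fails when $15\mid k$. Your uniform path construction covers that case anyway.
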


\begin{proof}
  We will follow the proof of residual finiteness of free groups found on
  \cite[pg.~11]{cohen1989combinatorial}. 
  Let $\F_m = \langle  x_1, \dots,
  x_m \rangle$ be a free group of rank $m$, and let
  \begin{align}
    s = y_1y_2 \dots y_n
  \end{align}
  be a reduced word in $\F_m$, where $y_i = x_{\eta(i)}^{\epsilon_i}$ for $\epsilon_i =
  \pm 1$ and an indexing function $\eta:[n] \to [m]$, and suppose without loss of generality that $\epsilon_n = 1$.
  Then for each generator $x_i$, let $\phi(x_i): [n+1] \to [n+1]$ be a map such that $\phi(x_j): i
  \mapsto i + \epsilon_i$ for all $i$ such that $\eta(i) = j$ and $\phi(x_j):
  i \to i$ otherwise. If $\phi(x_j)$ fails to be injective,
  then $\eta(i) = \eta(i+1) = j$ and $\epsilon_i = -\epsilon_{i+1}$,
  contradicting the assumption that $s$ was reduced. A similar consideration
  shows that $\phi(x_j)$ is well-defined.
  Thus, each $\phi(x_i)$
  may be extended to a permutation in $\mathrm S_{n+1}$. Since
  $[\phi^{-1}(s)](n+1) \neq n+1$ by construction, clearly $\phi(s)$ is a
  non-identity permutation, and since $\phi$ is defined on the generators,
  $\phi:\F_m \to \mathrm S_{n+1}$ is a homomorphism.

  To extend $\phi$ to $\mathrm A_{n+3}$, we can simply add two extra labels
  and transpose them whenever $\phi(x_i)$ is odd, ensuring that $\phi(x_i)$ is
  even for each $i$. 

  To see how this implies the claim, suppose that we have a group law $f(y_1, \dots, y_m)$ in $\mathrm
  A_{n}$ with weight $w \leq n-3$. Treating the codomain of $f$ as a free group, we may
  assume that $f$ is freely reduced (because this only decreases $w$). By the
  argument above, we can then construct $\phi:
  \F_{m} \to \mathrm A_{n}$ where $\phi(s) \neq 1$ for some $s$ in the image of
  $f$. This implies that $\phi \circ f 
  \neq 1$, so $f$ was not a group law, exhibiting a contradiction. This shows
  that $\alpha(\mathrm A_{n}) > n-3$, as was to be found.
\end{proof}

\begin{lem}
If $p$ is a prime number, then $\alpha(\PSL_2(\Z_p)) \geq \log^2_\alpha(p)/4$, where $\alpha = 1+\sqrt{2}$.
\end{lem}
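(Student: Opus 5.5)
The plan is to argue by contradiction, along the same lines as the $\mathrm A_n$ lemma: a short group law for $\PSL_2(\Z_p)$ will produce a short nontrivial word in a free subgroup of $\mathrm{SL}_2(\Z)$, and such a word cannot reduce to $\pm\mathbb 1$ modulo $p$. The free subgroup is Sanov's,
\begin{equation*}
  A=\begin{pmatrix}1&2\\0&1\end{pmatrix},\qquad B=\begin{pmatrix}1&0\\2&1\end{pmatrix},\qquad \langle A,B\rangle\cong\F_2 ,
\end{equation*}
which is free by ping-pong on the regions $|x|>|y|$ and $|x|<|y|$ of $\mathbb Z^2$.

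The first step is a Margulis-type entry bound. I will show by induction on the reduced length $\ell$, tracking the dominant column of the partial products, that every matrix entry of a reduced word of length $\ell$ in $A^{\pm1},B^{\pm1}$ has absolute value at most $(1+\sqrt2)^{\ell}=\alpha^{\ell}$. It follows that if $\alpha^{\ell}<p/2$, then a nontrivial word $W\ne 1$ in $\F_2$ has $W\not\equiv\pm\mathbb 1 \pmod p$. Indeed $W-(\pm\mathbb 1)$ is a nonzero integer matrix with entries below $p$ in absolute value, so it cannot vanish modulo $p$. Hence the image of $W$ in $\PSL_2(\Z_p)$ is nontrivial.

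The second step passes from a law to a word in $\F_2$. Let $f(y_1,\dots,y_m)$ be a law of weight $w$; freely reducing it only lowers $w$, so assume it is reduced and $m\le w$. I choose a homomorphism $\psi:\F_m\to\F_2$ with $\psi(f)\ne 1$ and images $\psi(y_j)$ as short as possible. One choice is $\psi(y_j)=A^{j}BA^{-j}$: these conjugates freely generate a free subgroup, so $\psi$ is injective and $\psi(f)\neq1$. Its reduced length is at most $\sum_i|\psi(y_{\eta(i)})|\le w(2w+1)$. If that length is below $\log_\alpha(p/2)$, the first step shows $f$ is not a law. This gives $w(2w+1)\ge\log_\alpha(p/2)$, i.e.\ $w\gtrsim\sqrt{\log_\alpha p}$.

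The main obstacle is that this bookkeeping yields only $\sqrt{\log_\alpha p}$, whereas the statement asks for $\log_\alpha^2(p)/4$. So the work lies in the second step: the cost must fall from ``length $\times$ entries'' to something like ``$\log(\text{length})$ per letter''. My first attempt is to replace $A^{j}BA^{-j}$ by $y_j\mapsto A^{k_j}B^{l_j}$ with small distinct exponents. The reasons are that the entries of $A^{k}$ grow only linearly in $k$, and that $\langle A^{k},B^{l}\rangle$ stays free for $|k|,|l|\ge1$ by the same ping-pong. I would then bound the entries of $\psi(f)$ directly by a product of $w$ matrices with entries $O(w)$, rather than through reduced length in $A,B$.

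This attempt needs two things to work. First, the required nontriviality of $\psi(f)$ must be proved by a separate ping-pong or residual-finiteness argument, as in the $\mathrm A_n$ proof. Second, the resulting bound must be inverted against $p/2$. I expect the inversion to be exactly where a quadratic dependence on $\log_\alpha p$ either appears or fails. If it fails, the constant $1/4$ and the squared logarithm are what I would recheck, since the linear and square-root bounds above are what this method yields robustly.
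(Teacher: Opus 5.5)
Your two steps are correct, but together they only give $\alpha(\PSL_2(\Z_p))\gtrsim\sqrt{\log_\alpha p}$. The proposal therefore does not prove the statement, and the patch you sketch cannot close the gap.

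Any certificate obtained by multiplying entry or norm bounds of the $w$ factors is capped at $w=\mathrm O(\log p)$, because each factor has norm bounded below by a constant larger than one. For example, $A^kB^l=\bigl(\begin{smallmatrix}1+4kl&2k\\2l&1\end{smallmatrix}\bigr)$ has norm at least $3$ whenever $kl\neq0$, so the product bound exceeds $p/2$ once $w>\log_3(p/2)$. No inversion of an archimedean size estimate can produce $(\log_\alpha p)^2$.

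Your suspicion about the square is justified. In the paper it comes from inverting $4M<\log_\alpha p$ with $m\approx 4M^2$ the number of free generators $[a^i,b^j]$. That is a constraint on the number of variables, not on the weight. The paper also only checks that each $[a^i,b^j]$ survives reduction mod $p$, which does not imply that $f$ evaluated at them survives. No bound in terms of the number of variables alone can hold, since $x^{e}$, with $e$ the exponent of $\PSL_2(\Z_p)$, is a one-variable law. So the paper's route does not supply your missing step either.

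The missing idea is to measure \emph{degree} instead of size. Keep your substitution, but with a parameter: $y_j\mapsto A_t^{j}B_tA_t^{-j}$, where $A_t=\bigl(\begin{smallmatrix}1&t\\0&1\end{smallmatrix}\bigr)$ and $B_t=\bigl(\begin{smallmatrix}1&0\\t&1\end{smallmatrix}\bigr)$. Then $A_t^k=I+ktE_{12}$ and $B_t^l=I+ltE_{21}$, with $E_{ab}$ the matrix units, have degree one in $t$ whatever $k,l$ are.

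Let $f=y_{j_1}^{\epsilon_1}\cdots y_{j_w}^{\epsilon_w}$ be reduced and nonempty, with variables relabeled so that $m\le w$. The substituted word collapses to $A_t^{k_0}B_t^{l_1}A_t^{k_1}\cdots B_t^{l_r}A_t^{k_r}$ with $1\le r\le w$:
\begin{itemize}
\item each $B$-block comes from a maximal run of a single variable, whose sign is constant because $f$ is reduced;
\item the interior $A$-exponents are differences $j'-j\neq0$;
\item the end exponents are $k_0=j_1$ and $k_r=-j_w$.
\end{itemize}
So every exponent is nonzero and at most $w$ in absolute value.

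Expanding the product, the only top-degree term takes every nilpotent part, and $E_{12}E_{21}\cdots E_{12}=E_{12}$. Hence the $(1,2)$ entry is a polynomial of exact degree $2r+1\le 2w+1$ with leading coefficient $\prod_i k_i\prod_i l_i$, which is nonzero mod $p$ when $p>w$. If $p>2w+1$, this entry is nonzero at some $t\in\F_p$. Then the evaluated matrix is not $\pm I$, so $f$ is not a law of $\PSL_2(\Z_p)$.

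Thus every law has weight $w\ge(p-1)/2$, which exceeds $(\log_\alpha p)^2/4$ for every prime $p$. This argument needs neither ping-pong nor injectivity of your $\psi$: nontriviality is witnessed directly by the leading coefficient.
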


\begin{proof}
  Consider the elements
  \begin{align}
    a &= \mqty(1 & 2 \\ 0 & 1) & b &= a^{\mathsf{T}}
  \end{align}
  in $\PSL_2(\Z)$. These elements together generate a free
  subgroup  $ \F_2 \leq \PSL_2(\Z)$, which can be proven via a ``ping-pong''
  argument, making it virtually free 
  (i.e., it has a free subgroup of finite index).
  The
  commutator subgroup of $\F_2$ is free because all subgroups of a free group
  are free, and infinite rank because it is generated by the elements $[a^i,
  b^j]$ for all nonzero $i, j \in \Z$.

  Let $f(x_1, \dots, x_m)$ be a group law in $\PSL_2(\Z_p)$. 
  Then the elements $[a^i, b^j]$ for $|i|, |j| \leq M$ generate a free subgroup $\F_m$ of $\PSL_2(\Z)$. The number of generators of this subgroup is $m = 4M^2 - 1$, corresponding to all the choices of $i,j$  with $(i,j) \neq (0,0)$, so we have $M = \sqrt{m-1}/2$.
  If any $[a^i,b^j]$ is identity after reduction mod $p$, then we know
  \begin{align}
  p \leq \Vert [a^i, b^j]\Vert \leq \Vert a\Vert^{i}\Vert a^{-1}\Vert^i \Vert b \Vert^{j}\Vert b^{-1}\Vert^j  = \alpha^{4M},
 \end{align} 
 where $\alpha = 1+\sqrt{2}$.
 Therefore if $3^{4M} < p$, then $f(x_1, \dots, x_m)$ will be a nontrivial word in $\F_2 \leq \PSL_2(\Z)$ which does not
  reduce mod $p$. Solving this, we find
  \begin{align}
    4M \leq 2\sqrt{m} < \log_\alpha(p),
  \end{align}
  implying this holds for all
  \begin{equation}
    m < \frac{\log_\alpha^2(p)}{4}.
  \end{equation}
  This establishes the claim.
\end{proof}

The orders of both groups scale very unfavorably with the minimum law distance in both these cases. Specifically, $|\PSL_2(\Z_p)| = p(p^2-1)/2$ and $|A_n| = n!/2$. This implies that weight-4 $Z$-checks (such as the surface code) would require $n = 7$ and thus $|A_n| = 2520$ or $p = 44$ and thus $|\PSL_2(\Z_p)| = 19635$ before our theorem guarantees that all the group-CSS codes on either of these two groups must be quantum doubles on a CW complex. This leaves open a lot of room for identifying better group-CSS codes. Our results guide this search by showing that we must leverage the structure of a specific choice of $G$, in particular normal subgroups and small group laws.

\section{Proof of Theorem \ref{thm:classify}}\label{app:main}
The proof proceeds in three steps: \begin{enumerate}
    \item Show that every $Z$-check projects onto solutions of a group word equation (Lemma \ref{lem:groupword}).
    \item Show that the $X$-checks take the form of \eqref{eq:Xspecialform} (Lemma \ref{lem:equivalentthm} and following discussion).
    \item Show that every $Z$-check can be taken to be a group word corresponding to the gluing rules of a 2-cell on a CW complex.  More precisely, we will show that any $Z$-check of the code can be decomposed into smaller $Z$-checks each of which take this form.
\end{enumerate}

\begin{lem}\label{lem:groupword}
    Under the conditions of Theorem \ref{thm:classify}, every $Z$-check is of the form $Z^{\lbrace 1\rbrace}_{q_1\cdots q_p}$, namely a projector onto the solutions of some group word $g_{q_1}^{a_1}\cdots g_{q_p}^{a_p}=1$ where $a_i \in \{-1, 1\}$.
\end{lem}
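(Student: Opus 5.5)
The plan is to use $G$-covariance (Definition~\ref{def:covariant}) to force the local subgroup $K$ of every $Z$-check to be normal in $G$. Simplicity of $G$ then leaves only the two cases $K=\{1\}$ and $K=G$, and a check with $K=G$ is the identity and can be dropped from $\mathcal S_Z$.

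First, compute how $X^g_{\mathrm{tot}}$ acts on a basis state. We have $X^g_{\mathrm{tot}}|\mathbf{g}\rangle=|g\mathbf{g}g^{-1}\rangle$, so each coordinate is conjugated by $g$. Since $(g g_q g^{-1})^{a}=g g_q^{a} g^{-1}$, a word $w(\mathbf{g})=g_{q_1}^{a_1}\cdots g_{q_j}^{a_j}$ transforms as $w\mapsto g\,w\,g^{-1}$. Conjugating $Z^K_{q_1\cdots q_j}$ by $X^g_{\mathrm{tot}}$ therefore gives the check $\mathbb{I}(w(\mathbf g)\in g^{-1}Kg)$, which is $Z^{g^{-1}Kg}_{q_1\cdots q_j}$ on the same qudits with the same orientations.

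Covariance says this conjugated operator must again lie in $\mathcal S$. If covariance is read at the level of the codespace, it says instead that the conjugated operator stabilizes $\mathcal C$. The conjugates of $K$ give the checks $\mathbb{I}(w\in g^{-1}Kg)$ for all $g$. Their product is $\mathbb{I}(w\in N)$, where $N=\bigcap_g g^{-1}Kg$ is the normal core of $K$, a normal subgroup of $G$. Because $G$ is simple, $N=\{1\}$ or $N=G$. If $N=G$ then $K=G$ and the check is trivial. If $N=\{1\}$, the product of the conjugated checks is $Z^{\{1\}}_{q_1\cdots q_j}$. This product lies in the stabilizer algebra and acts on the same qudits with the same weight $\le w_Z\le\beta$. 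Replacing the original check by this product at least does not enlarge the codespace.

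The main obstacle is showing that this replacement does not shrink the codespace, i.e.\ that every codeword already satisfies $w=1$ and not merely $w\in K$. Here the hypotheses $d_Z>\beta\ge w_Z$ and the law-weight bound enter.

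I would argue as follows. Consider the one-qudit diagonal operators $Z^{\{k\}}$, and more generally the projectors $\mathbb{I}(w(\mathbf g)=k)$, which are supported on at most $w_Z\le\beta<d_Z$ qudits. By the Knill--Laflamme condition~\eqref{eq:KL}, the distribution of $w$ on the codespace is independent of the codeword. Covariance makes this distribution a class function on $K$. If $w$ took a nontrivial value $k\in K$ on some codeword, I would multiply this check by its $X^g_{\mathrm{tot}}$-conjugates and by commutators of the $X$-checks (using $[G,G]=G$). The aim is to build, from words of total weight below $2\beta+1\le\alpha$, a nontrivial word that must vanish identically on all of $G$. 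That word would be a group law of weight $<\alpha$, a contradiction, unless it is freely trivial.

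I expect this step to need the most care. The fallback is to argue directly from Knill--Laflamme that $Z^{K}$ and $Z^{\{1\}}$ have the same action on $\mathcal C$, using $d_Z>\beta$.

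With $K=\{1\}$ established, the check is exactly $\mathbb{I}(g_{q_1}^{a_1}\cdots g_{q_p}^{a_p}=1)$ with $a_i=\pm1$, which is the form claimed in the statement.
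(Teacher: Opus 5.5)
Your first half is already a complete proof, and it is essentially the paper's argument. Covariance puts the conjugated checks $Z^{g^{-1}Kg}_{q_1\cdots q_j}$ in $\mathcal S$, or at least makes them stabilize $\mathcal C$. Simplicity of $G$ then removes every normal subgroup other than $\{1\}$ and $G$. The only difference is direction. You pass to the normal core $N=\bigcap_g g^{-1}Kg$ from above. The paper shrinks $K$ from below to the subgroup generated by the values the word $w$ actually attains on admissible configurations; that subgroup is normal because covariance makes the set of attained values conjugation-invariant. This subgroup lies inside $N$, so both routes reach the same dichotomy. Your version has the small advantage that the replacement check is literally a product of existing checks.

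The ``main obstacle'' you then identify does not exist. You already observed that $\prod_g Z^{g^{-1}Kg}_{q_1\cdots q_j}$ lies in the stabilizer algebra. So every codeword is a $+1$ eigenvector of each conjugate. The same holds for every configuration in $\mathcal C_Z$, since each one appears in some orbit codeword by Prop.~\ref{prop:codespace}. Hence $w\in g^{-1}Kg$ for all $g$, i.e.\ $w\in N=\{1\}$. Replacing $Z^K$ by $Z^{\{1\}}=\prod_g Z^{g^{-1}Kg}$ therefore leaves both $\mathcal C_Z$ and $\mathcal C$ exactly unchanged, in both directions. In particular, the hypotheses $d_Z>\beta\ge w_Z$ and the law weight $\alpha$ play no role in this lemma; in the paper they enter only later, in Lemma~\ref{lem:Zcheckform}.

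You should drop the last part of your plan entirely, because it would not work on its own anyway:
\begin{itemize}
\item The Knill--Laflamme fallback fails. \eqref{eq:KL} only gives $\langle\psi_\alpha|Z^{\{1\}}|\psi_\beta\rangle=c\,\delta_{\alpha\beta}$. That is perfectly consistent with $c<1$, for example if $w$ is distributed identically over $K$ in every codeword. So it cannot show that $Z^K$ and $Z^{\{1\}}$ agree on $\mathcal C$.
\item The proposed construction of a short group law from a nontrivial value of $w$ is not specified, and there is no evident mechanism by which it could work.
\end{itemize}
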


\begin{proof} Consider any $Z$-check $Z^H_{q_1\cdots q_p}$.
Def.~\ref{def:covariant} implies that $X_{\mathrm{tot}}^g Z_{q_1 \cdots q_p}^K
X_{\mathrm{tot}}^{g^{-1}} \in \mathcal S_Z$.
Therefore, if the product $g_{q_1}\cdots g_{q_p} \in K$, then $g (g_{q_1}\cdots g_{q_p})g^{-1} \in K$ as well. Without loss of generality, we may choose $K$ to be the smallest possible subgroup.  We know that this subgroup has the property that there is a set of generators $h_a$ (the allowed values of $g_{q_1}\cdots g_{q_p}$) such that $gh_ag^{-1} \in K$.   Therefore if $K=\langle h_a\rangle$, $gKg^{-1}=K$, or $K\unlhd G$, namely $K$ is a normal subgroup of $G$.  But $G$ is non-Abelian and simple, so $K=\lbrace 1\rbrace$ or $K=G$.  The operator $Z^G_{q_1\cdots q_p}$ would be the identity operator and does not need to be included in $\mathcal{S}_Z$. 
In other words, covariant codes over simple groups do not admit \(Z\)-checks associated with nontrivial subgroups.
\end{proof}

Next, we will prove that the $X$-stabilizers take the form of
\eqref{eq:Xg_shorthand}.  Recall from the double-coset construction that the
$X$-stabilizers form a representation of the \(X\)-check subgroup $\diagX\le
G^{2n}$~\eqref{eq:x-check-group}.  Translating \eqref{eq:Xg_shorthand} into the double-coset language, we
need to show
\begin{equation} 
    \diagX = \left\langle(g^{\mathbb{I}(1\in B_\alpha)},\ldots, g^{\mathbb{I}(2n\in B_\alpha)}) \right\rangle_{g\in G, \alpha\in\lbrace 1,\ldots,  p\rbrace}~,\label{eq:Xspecialform}
\end{equation}
where $B_{\alpha} \subseteq \{1, \dots, 2n\}$ is a subset of double-coset
indices upon which the $X$-check labeled by $\alpha$ acts non-trivially. The sets $B_\alpha$ will
eventually become the vertices of the CW double associated  to the QD code.
We will show \eqref{eq:Xspecialform} in two steps: first by establishing that $\diagX$ is isomorphic to $G^L$ for some integer $1\le L\le 2n$ and generated by ``diagonal" subgroups, up to a ``twist", and then by showing that $G$-covariance forbids the twist.   The first step is captured by the following lemma.

\begin{lem}\label{lem:equivalentthm}
    Let $\diagX \le G^{L}$ with $G$ non-Abelian simple.  Define $\pi_j:G^{L}\rightarrow G$ to be a projector onto the $j^{\mathrm{th}}$ component.  Suppose that for each $j$, $\pi_j(\diagX)=G$.  Then $\diagX$ is \begin{equation}\label{eq:thetagenerate}
    \diagX = \left\langle \left(\theta_1(g)^{\mathbb{I}(1\in B_\alpha)},\ldots, \theta_{L}(g)^{\mathbb{I}(L\in B_\alpha)}\right)  \right\rangle_{\stackrel{\alpha = 1 \dots p}{g \in G}}
    \end{equation}
for some set of automorphisms $\theta_j\in\mathrm{Aut}(G)$, and $B_1\sqcup
\cdots \sqcup B_p = \lbrace 1,\ldots, L\rbrace$.  
\end{lem}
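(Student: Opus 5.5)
This is the classical Goursat-type result that a subdirect product of non-Abelian simple groups is a product of diagonals. I would prove it by induction on $L$, using the normal subgroups $N_j := \pi_j(\diagX \cap \ker(\text{other projections}))$.

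\textbf{Step 1: the coordinate kernels.} For each $j$, set
\[
K_j := \{h \in \diagX : \pi_i(h) = 1 \ \forall i \neq j\}.
\]
Then $\pi_j(K_j) \unlhd \pi_j(\diagX) = G$, because $K_j$ is normal in $\diagX$ and conjugation by $h \in \diagX$ acts on the $j$th coordinate by $\pi_j(h)$. Simplicity gives $\pi_j(K_j) \in \{1, G\}$. If $\pi_j(K_j) = G$, then $\diagX = G \times \diagX'$ with $\diagX' \le G^{L-1}$. This is split off as a singleton block $B_\alpha = \{j\}$ with $\theta_j = \mathrm{id}$, and we recurse on $\diagX'$.

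\textbf{Step 2: no free coordinate forces an identification.} Suppose $\pi_j(K_j) = 1$ for every $j$, and consider the two-coordinate projection $\pi_{ij}(\diagX) \le G \times G$ for $i \neq j$. This is a subdirect product, so Goursat's lemma applies. The groups $N_i = \{a : (a,1) \in \pi_{ij}(\diagX)\}$ and $N_j$ are normal in $G$, hence trivial or all of $G$. The graph of $\pi_{ij}(\diagX)$ induces an isomorphism $G/N_i \cong G/N_j$, so the two are either both $G$ or both $1$. Therefore either $\pi_{ij}(\diagX) = G \times G$, or $\pi_{ij}(\diagX) = \{(g, \phi_{ij}(g))\}$ for some $\phi_{ij} \in \mathrm{Aut}(G)$.

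Declare $i \sim j$ in the second case. This relation is an equivalence relation, since the automorphisms compose: $\phi_{jk} \circ \phi_{ij} = \phi_{ik}$. Its classes are the blocks $B_\alpha$. Choose a base index $j_\alpha$ in each block and set $\theta_j := \phi_{j_\alpha j}$, with $\theta_{j_\alpha} = \mathrm{id}$.

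\textbf{Step 3 (main obstacle): the blocks are independent.} One must show $\diagX = \prod_\alpha \Delta_\alpha$, where $\Delta_\alpha = \{(\theta_j(g))_{j \in B_\alpha}\}$. Within a block, $\diagX$ projects into $\Delta_\alpha$ by construction. The difficulty is that pairwise fullness across blocks does not by itself imply fullness jointly.

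The plan is to pick one representative coordinate per block, which reduces the problem to a subgroup $D \le G^p$ whose pairwise projections are all full. Induct on $p$. Assume $D$ projects onto $G^{p-1}$ in its first $p-1$ coordinates. Then the set $M = \{x : (1,\dots,1,x) \in D\}$ is normal in $G$. If $M = G$ we are done. If $M = 1$, the last coordinate is a homomorphic function $\psi$ of the first $p-1$. Then $\ker\psi$ is normal in $G^{p-1}$, and $G^{p-1}/\ker\psi \cong G$.

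The normal subgroups of $G^{p-1}$ are products of coordinate factors. This is exactly where the non-Abelian hypothesis is used, since a normal subgroup intersecting a factor nontrivially contains that factor's commutator, which is the whole factor. Hence $\ker\psi$ is the product of all factors but one, so $\psi$ factors through a single coordinate $i$. That makes $\pi_{ip}(D)$ a graph, contradicting $i \not\sim p$.

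Finally, for the equality claimed in the statement, note that a subgroup of $\prod_\alpha \Delta_\alpha$ that surjects onto every block-representative coordinate jointly must equal $\prod_\alpha \Delta_\alpha$. Each such product is generated by the elements $(\theta_j(g)^{\mathbb{I}(j \in B_\alpha)})_j$, which gives \eqref{eq:thetagenerate}.
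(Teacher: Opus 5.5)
Your proof is correct. Its first two steps are the paper's own: the paper applies Goursat's lemma to each pair $\pi_{ij}(H)\le G\times G$, uses simplicity to get either $G\times G$ or the graph of an automorphism, and takes the blocks $B_\alpha$ to be the classes of the resulting equivalence relation, with transitivity checked by composing automorphisms. The genuine difference is your Step 3. The paper stops once the equivalence classes are formed and asserts that this ``explicitly shows'' \eqref{eq:thetagenerate}. It never argues that representatives of distinct blocks are \emph{jointly} unconstrained. That step is needed, and it is the only place the non-Abelian hypothesis enters. The paper's argument uses nothing beyond simplicity, yet for $G=\mathbb{Z}_p$ the subgroup $\{(x,y,x+y)\}\le\mathbb{Z}_p^3$ has every pairwise projection full without being all of $\mathbb{Z}_p^3$. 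Your induction supplies exactly the missing argument: the normal subgroup $M$, then identifying $\ker\psi\unlhd G^{p-1}$ as the product of all factors but one, which forces $\pi_{ip}(D)$ to be a graph. Your version is therefore the complete one, at the cost of an extra paragraph.

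Two minor remarks. First, Step 1 is redundant, since a coordinate with $\pi_j(K_j)=G$ is automatically a singleton class of Step 2's relation. Second, in Step 3 the condition you need is that $N\unlhd G^{p-1}$ \emph{projects} nontrivially onto the $i$th factor $G_i$, not that it intersects $G_i$ nontrivially (in that case simplicity alone would finish). Then $\pi_i(N)$ is a nontrivial normal subgroup of $G$, hence all of $G$. So $[N,G_i]$ is $[G,G]=G$ placed in the $i$th slot, and it lies in $N$.
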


The essence of this lemma is as follows.  Obviously, there are many potential subgroups $\diagX \le G^{2n}$.  However, the key idea is that if each ``entry" of $\mathbf{g}=(g_1,\ldots, g_{2n})\in \diagX$ gets to take any value in $G$ for some element $\mathbf{g}\in\diagX$, then $\diagX$ is isomorphic to a group of the form $(g_1,\ldots, g_1,g_2,\ldots, g_2,\ldots, g_p)$, which is in turn isomorphic to $G^p$.  The nature of all possible isomorphisms is captured by \eqref{eq:thetagenerate}.  A quick intuitive argument for the lemma is as follows:  suppose we had checks of the form  $X_1^g X_2^g$ and $X_2^g X_3^g$ for every $g$.  Then, by considering the check obtained from their group commutator, 
\begin{equation}
    X_1^g X_2^g\cdot X_2^h X_3^h \cdot X_1^{g^{-1}} X_2^{g^{-1}}\cdot X_2^{h^{-1}} X_3^{h^{-1}} = X_2^{ghg^{-1}h^{-1}}~,
\end{equation}
we see that we also generate single-qudit checks of the form $X_2^g$.
These in turn can be multiplied by the first two kinds of checks to get $X_1^g$ and $X_3^g$ checks.   Repeatedly applying this trick and using that $[G,G]=G$ will lead to \eqref{eq:thetagenerate}.

Before proving the lemma, let us note that
since the code is $G$-covariant, for each $j \in \{1, \dots, 2n\}$, either there must
exist some $X$-check which contains $X^g_j$ for each $g \in G$ or no $X$-check acts non-trivially on index $j$, in which case there is
a ghost vertex. Therefore $\pi_j(\diagX) = G$ whenever there is an $X$-check containing $X^g_j$ for any non-identity $g$. Going forward, let $B_0 \subseteq \{1, \dots, 2n\}$ be the subset of $2n - L$ indices with no action by any $X$-check.

\begin{proof}[Proof of Lemma~\ref{lem:equivalentthm}]
First let us prove this fact for the case $L=2$.  Here we have $\diagX\le
G\times G$ with $\pi_1(\diagX)=\pi_2(\diagX)=G$.  Let $K_{1,2} =
\ker_{\diagX}(\pi_{1,2})$.  Using Goursat's Lemma, we know that
$K_{1,2}\unlhd G$ and $G/K_1 \cong G/K_2$.  Since $G$ is simple its only normal
subgroups are 1 and $G$. If $K_1=K_2=G$, then we know that for every $g\in G$,
$(g,1)\in \diagX$ and $(1,g)\in \diagX$.  Then $\diagX=G\times
G$. If instead $K_1=K_2=1$, then we know that $(1,1)\in \diagX$ is the only non-trivial element with identity in either factor.  If there were two elements $(g,h)$ and $(g,k)$ with the same $g$, then $(1, hk^{-1}) \in \diagX$, which would contradict our previous assertion.  At the same time  there is one element with $g$ in either the first or second entry since $\pi_{1}(\diagX) = \pi_{1}(\diagX) =G$. 
    Thus there must be a bijection $\theta : G\rightarrow G$ such that each element of $\diagX$ takes the form $(g,\theta(g))$.  Since $\diagX$ is a subgroup, $\theta$ must in fact be an automorphism.  Hence $\diagX\cong G$ and is generated by the form \eqref{eq:thetagenerate}.

Now let us return to the general case.   For every pair $\lbrace i,j\rbrace
\subset \lbrace 1,\ldots, 2n\rbrace$, define $\pi_{ij}:G^{2n}\rightarrow G^2$ by
$\pi_{ij} = (\pi_i,\pi_j)$.  Applying the result from the special case $L=2$,
$\pi_{ij}(\diagX)$ is either isomorphic to $G\times G$ or to $G$.  We
propose the following equivalence relation on $\lbrace 1,\ldots, L\rbrace$:
$i\sim j$ if and only if $\pi_{ij}(\diagX)\cong G$.  To check that it is
indeed an equivalence relation, we need to check that $j\sim k$ if $i\sim j$ and
$i\sim k$.   Indeed, if $i\sim j$ and $i\sim k$, then assuming $\pi_i(h) = g$,
for some $h\in \diagX$ and $g\in G$, we know $\pi_j(h) = \theta_j(g)$
and $\pi_k(h) = \theta_k(g)$.  
But this implies that when $\pi_j(h)=g$,
$\pi_k(h) = \theta_k(\theta_j^{-1}(g))= (\theta_k \circ
\theta_j^{-1})(g)$, and we see that $j\sim k$ since $\theta_k \circ
\theta_j^{-1}\in\mathrm{Aut}(G)$. Apply the equivalence relation above to $\lbrace 1,\ldots ,L\rbrace$ and define each equivalence class as a set $B_\alpha$ ($\alpha=1,\ldots, p$ if we find $p$ distinct equivalence classes).  This explicitly shows \eqref{eq:thetagenerate}.
\end{proof}

Now let us show that we must take each $\theta_j$ in \eqref{eq:thetagenerate} to be the trivial automorphism.  This will establish \eqref{eq:Xspecialform} and complete the second step of the proof.   To show this, suppose that there exist two elements\footnote{The labels chosen in this paragraph are only for convenience/explicitness; the argument is clearly general.  We also note that in each partition $B_\alpha$, we may always take one of the automorphisms to be identity without loss of generality.} $\lbrace 1,2\rbrace\in B_1$ with $\theta_1$ the identity automorphism but $\theta_2=\theta$ a non-trivial automorphism.  Then there exists an $X$-check for the code of the form $X^g_{B_1}X^g_{\mathrm{tot}}X^{g^{-1}}_{B_1}X^{g^{-1}}_{\mathrm{tot}}$ which acts non-trivially on a proper subset $B_1^\prime \subset B_1$ for which $1\notin B_1^\prime$.  This is a contradiction with \eqref{eq:thetagenerate}, as we see that $\diagX$ now has a larger set of generators.  The only resolution is that $B_1$ was a combination of multiple distinct types of $X$-checks, and that $B_1$ itself can be partitioned into multiple subsets in which, by Lemma \ref{lem:equivalentthm}, we can separately generate diagonal subgroups.   However, that would be a contradiction because then $\diagX$ would have been a larger subgroup than was given in \eqref{eq:thetagenerate}. Put simply, only when we choose all $\theta_j$ to be identity is it the case that we cannot generate smaller $X$-checks by commuting through $X^g_{\mathrm{tot}}$.  This establishes \eqref{eq:Xspecialform}.

The final step of the proof of Theorem \ref{thm:classify} is to show that we can, without loss of generality,
take the group word equations in the $Z$-checks to be the checks of a CW complex
quantum double -- or write the checks in an equivalent form that does have this
property.
Consider an arbitrary $Z$-check of weight $k$ enforcing $g_{q_1} \dots g_{q_k} =
1$. To get started, we reduce to the case where there are no ghost vertices. We
observe that
\begin{equation}
X_{\mathrm{tot}}^h\prod_{\alpha}X_{B_\alpha}^{h^{-1}} = X_{B_0}^h .
\end{equation}
Since each $Z \in \mathcal S_Z$
commutes with $X_{\mathrm{tot}}^g$ by assumption, this implies that $X_{B_0}^h$
is a logical operator. Therefore, consider an enlarged stabilizer group which
includes $X_{B_0}^h$. We know without loss of generality that there exists a
codeword generated by the orbits of the $X$-checks on the group word $g_{q_1} =
\ldots = g_{q_k} = 1$ by the assumption that the $X$-checks preserve $\mathcal
C_Z$, which always contains the tensor product of identity-labeled states, $\ket{\mathbf 1} = \ket{1}^{\otimes n}$.
Then we may find a product of $X$-checks $X_{B_1}^{a_1} \dots X_{B_p}^{a_p}$ such that every $q_1 \ldots q_k$
is acted upon by both left- and right-multiplication (possibly including $X_{B_0}^h$). 

Next, we see how the above \(X\)-check acts on the the \(Z\)-check constraint, $Z^{\{1\}}_{q_1, \dots,
  q_k}\ket{[\vb 1]} = \ket{[\vb 1]}$, for the codeword \(|[\mathbf{1}]\rangle\) constructed out of the identity-element orbit of the identity state [see Eq.~\eqref{eq:basis-codewords}].
  This constraint is equivalent to the weight-\(k\) group law $g_{q_{1}}g_{q_{2}}\cdots g_{q_{k}}=1$.
  Plugging in the two group elements arising from left- and right-multiplication of each coordinate by the \(X\)-checks $X_{B_1}^{a_1} \dots X_{B_p}^{a_p}$ yields a weight-\(2k\) group law, which in turn specifies the gluing rule for faces of the CW complex. 

\begin{lem}
  The code space stabilized by $\mathcal S_Z, \mathcal S_X$ is equivalent to one
  where each $Z$-check specifies the gluing rules of a 2-cell.
  In other words, each $Z$-check corresponds to a group-law
  \begin{equation}
    f(g_{q_1}, \dots, g_{q_k}) = (h_0, h_1^{-1}, h_1, h_2^{-1}, h_2, \dots, h_{m}^{-1}, h_{m}, h_0^{-1})
  \end{equation}
  where each $h_i = g_{q_j}$ or $g_{q_j}^{-1}$ for some $j$.
  \label{lem:Zcheckform}
\end{lem}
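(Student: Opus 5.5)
We set things up as in the discussion preceding the statement. By Lemma~\ref{lem:groupword} every $Z$-check is $Z^{\{1\}}_{q_1\cdots q_k}$, and by \eqref{eq:Xspecialform} the $X$-check group is generated by untwisted diagonal actions $X^g_{B_\alpha}$. Adjoining the logical $X^h_{B_0}$ if necessary, we may assume every double-coset index lies in some $B_\alpha$. Thus each qudit $q$ has one left-acting vertex $B_{\alpha(q)}$ and one right-acting vertex $B_{\beta(q)}$, which gives the 1-skeleton exactly as in Prop.~\ref{prop:QDonCW}.

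\textbf{Step 1: substitution.} Fix a $Z$-check $w=g_{q_1}^{a_1}\cdots g_{q_k}^{a_k}=1$, and fix one free variable $x_\alpha\in G$ per vertex. The orbit of $\ket{\mathbf 1}$ under the $X$-checks consists of the configurations $g_q=x_{\alpha(q)}x_{\beta(q)}^{-1}$. This orbit lies in $\mathcal C_Z$ by the CSS compatibility requirement. Substituting these into $w$ gives a word $f$ of weight $2k\le 2\beta\le\alpha-1<\alpha$ in the variables $x_\alpha$, and $f$ evaluates to $1$ for all choices of $x$. Since $f$ is a law of $G$ of weight below the smallest law weight, $f$ must be freely trivial. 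This is the key use of the hypothesis $w_Z\le\beta$.

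\textbf{Step 2: reading off the combinatorics.} Each letter $g_{q_j}^{a_j}$ contributes the two-letter block $x_{s_j}x_{t_j}^{-1}$, where $(s_j,t_j)$ are the tail and head of the oriented edge. Free triviality of a cyclic word made of such blocks should force adjacency. Concretely, I will argue by induction on $k$ that some adjacent pair of blocks must cancel at its junction, meaning $t_j=s_{j+1}$ (cyclically); otherwise the word is already cyclically reduced and nonempty. When every junction cancels, the path $q_1\to\cdots\to q_k$ is a closed walk in the graph, and $w$ has the stated form $(h_0,h_1^{-1},h_1,\dots,h_0^{-1})$ after the substitution. That is, $w$ is the gluing word of a 2-cell.

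\textbf{Step 3: the obstacle.} The hard case is when only some junctions cancel. Then $w$ splits into several closed walks glued at repeated vertices, or it contains pieces joined by a ``jump'' between non-adjacent vertices that cancels only globally (e.g.\ $x_ax_b^{-1}x_cx_a^{-1}\cdots$). I would first try to show that a jump $t_j\neq s_{j+1}$ can only cancel against letters elsewhere if the word has a nested (Dyck-like) structure. Splitting at a matching pair would then write $w=1$ as a consequence of shorter checks $w_1=1$, $w_2=1$, each obtained by inserting the gauge-trivial tree path between the matched vertices. To make this rigorous I would use the distance hypothesis $d_Z>\beta$: any loop of length $\le\beta$ in the graph must already have trivial holonomy on the codespace, by Cor.~\ref{cor:dZloop}. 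So adding such short loops as 2-cells leaves the codespace unchanged, and it lets us replace $w$ by a product of conjugates of genuine boundary words. Iterating shows the code is equivalent to one whose $Z$-checks are all 2-cell gluing words. Controlling this decomposition so that every inserted loop is short enough to invoke the distance bound is where I expect the real difficulty.
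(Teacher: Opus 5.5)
\textbf{There is a genuine gap: the decomposition step, which is the actual content of the lemma, is missing, and the two concrete steps you propose for it would fail.}

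Your Step 1 coincides with the paper's opening move. You substitute $g_q=x_{\alpha(q)}x_{\beta(q)}^{-1}$, note the weight is $2k<\alpha$, and conclude the word is freely trivial. Your endgame is also the paper's: add short loops as 2-cells, justified by $d_Z>\beta\ge k$. The problems lie in between.

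\emph{Step 2 is false as stated.} A block $x_{s_j}x_{t_j}^{-1}$ can cancel internally when $q_j$ is a self-loop. For example, two self-loops at distinct vertices $v\neq u$ give $f=x_vx_v^{-1}x_ux_u^{-1}$. This word is freely trivial, yet no junction cancels, cyclically or otherwise. These cases must be handled separately as length-$1$ loops, which are removable since $d_Z>1$.

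\emph{Step 3 misdiagnoses the obstacle.} You describe a ``jump'' between non-adjacent vertices that cancels only globally, to be repaired by inserting a tree path. Inserting such paths creates loops whose length is no longer bounded by $k$, which is exactly the difficulty you end on.

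\emph{The missing idea.} Reduce $f$ freely to the empty word. This yields a non-crossing perfect matching of its $2k$ letters, and each matched pair is $x_v,x_v^{-1}$ for the \emph{same} vertex $v$. So no matched pair ever joins distinct vertices, and no auxiliary path is ever needed. Since positive letters sit at odd positions, nesting organizes the matching into a forest (the paper's tree) in which depth parity equals the parity of the starting position:
\begin{itemize}
\item an odd-depth pair $(2i-1,2f)$ encodes $s_i=t_f$;
\item an even-depth pair $(2j,2j'+1)$ encodes $t_j=s_{j'+1}$;
\item an odd-depth leaf $(2j-1,2j)$ is a self-loop.
\end{itemize}

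Remove the self-loops first, adding their 2-cells. Every remaining leaf then has even depth. Take a deepest leaf and its parent $p=(2i-1,2f)$. All children of $p$ are leaves, namely $(2i,2i+1),\dots,(2f-2,2f-1)$. Together with $p$ itself, this forces $q_i\to q_{i+1}\to\cdots\to q_f$ to be a genuine closed walk made of consecutive letters of $w$, of length at most $k\le\beta<d_Z$.

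Its holonomy check acts on fewer than $d_Z$ qudits and equals $1$ on $\ket{[\mathbf 1]}$. By the Knill--Laflamme condition it therefore equals $1$ on every codeword. You can add it as a 2-cell and delete that segment from $w$; the leftover matching stays non-crossing. Iterating exhausts $w$.

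The length control you were worried about is automatic, because every loop produced is a sub-walk of $w$ itself.
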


\begin{proof}
  Consider 1-dimensional CW complex $\Sigma$ built from the $X$-checks following
  Definition \ref{def:generalizedQDsmooth}, with a single ghost vertex $B_0$ representing
  the logical operator $X_{B_0}^h$ if $B_0$ is non-empty. Given any $Z$-check with a corresponding
  group-constraint $g_{q_1}g_{q_2}\cdots g_{q_k} = 1$,
 $2k < \alpha(G)$ by
  assumption, the corresponding group word $f$ must be freely trivial 
 [Def.~\ref{def:freely-trivial}].
  By the properties of a free group,
  this means it is related to 1 by a sequence of elementary cancellations of
  adjacent inverses. This pairing of inverses corresponds to a 
  non-crossing partition. The following is an example showing how to associate a
  freely trivial law $f$ to a non-crossing partition:
  \begin{equation}
    f(a,b,c,d, e, g, h) = \wick{(\c4 g, \c3 h^{-1}, \c2 a, \c1 b^{-1}, \c1 b, \c2 a^{-1}, \c2 c, \c1 d^{-1}, \c1 d, \c1 e^{-1}, \c1 e, \c2 c^{-1}, \c3 h, \c4 g^{-1})}
    \label{eq:examplegrouplaw}
  \end{equation}
 To simplify this picture, there is also a standard way to associate such a
 partition to a tree: every time one of the brackets is entered, the tree
 descends a level, and every time one is exited, the tree ascends one level.
 Using this prescription, the tree associated to $f$ is shown in the left panel
 of Fig.~\ref{fig:gluing_rules_example}.

  \begin{figure}
  \centering
  \def\svgwidth{.7\textwidth}
  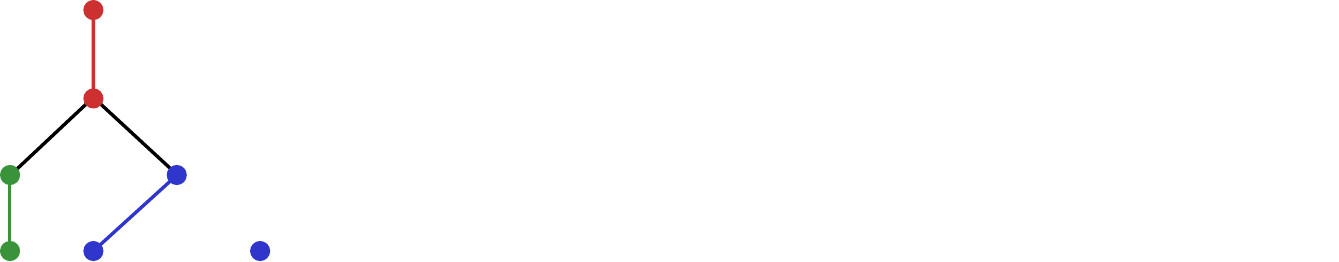
  \caption{Example showing the association of the group law in
    \eqref{eq:examplegrouplaw} to small cycles in CW complex, each of which gets
    its own $Z$-check. On the left, the tree representation of the group law is
    shown, with even-depth subtrees colored in red, green, and blue. On the
    right, the cycles corresponding to these subtrees are shown. The solid lines
    show physical qudits, and the dotted lines illustrate how the action of the
    $X-$checks glues the edges to vertices.}
   \label{fig:gluing_rules_example}
\end{figure}

Our claim is that we can add $Z$-checks to prune leaves from the tree without
changing the codespace until the tree is exhausted. First, any leaf with an odd
depth (starting from a single node at depth 1) corresponds to a pairing
beginning at an odd index and ending at an even index.
Such a pairing represents an edge which is attached to the same vertex at either
side, or a length-1 loop. Since $d_Z > 1$, we may add a $Z$-check and prune it
from the tree, such that every remaining leaf has even depth. Clearly the added
$Z$-check corresponds to gluing a 2-cell to this small loop.
  
  Now, every remaining leaf corresponds to a partition which begins on
  an even index and ends on an odd index, so its parent $p$ must begin on an odd
  index and end on even index. Therefore the leftmost child $n_{\ell}$ of $p$ acts on the edge
  $q_{i}$ by right-multiplication, and the rightmost child $n_r$ of $p$ acts on the edge
  $q_{f}$ by left-multiplication. Then $p$ acts on $q_i$ by left multiplication
  and on $q_f$ by right-multiplication, whence $q_i \to q_j$ is a loop in
  $\Sigma$. 

  This is illustrated by an example in Fig.~\ref{fig:gluing_rules_example}.
  Since $k < d_Z$ by hypothesis, this loop must have trivial holonomy
  in all codewords. Therefore, we may add an additional $Z$-check and prune the
  subtree rooted at $p$ from the graph.  This process is iterated until the
  remainder is either empty or is a disjoint union of depth-2 trees.
  $f$ has the form of Lem.~\ref{lem:Zcheckform} if and only if it
  can be represented as a depth-2 tree, so all the new checks have this form,
  and we are done.
\end{proof}

Together, these lemmas show that under mild conditions, every covariant code on a
non-Abelian simple group is a generalized quantum double on a 2D CW complex,
where our bounds on the code parameters and construction of logical operators
apply. This implies that there is no purely geometric specification of
stabilizers for a non-Abelian code which can be used to produce good quantum
codes, and a geometry which relies on specific group structure would be needed to overcome these bounds.

\end{appendix}

\bibliography{thebib}
\end{document}